\newcommand{\GREEN}[1]{{\color{PineGreen}{#1}}}
\newtheorem{proposition}{Proposition}
\newtheorem{lemma}{Lemma} 
\DeclareMathOperator*{\argmax}{arg\,max}
\newcommand{\noLink}{|\mathcal{L}|}
\newcommand{\setLink}{\mathcal{L}}
\newcommand{\setNode}{\mathcal{V}}
\newcommand{\setS}{\mathcal{S}}
\newcommand{\vlambda}{\boldsymbol{\lambda}}
\newcommand{\vmu}{\boldsymbol{\mu}}
\newcommand{\vq}{\boldsymbol{q}}
\newcommand{\maxq}{q^*}
\newcommand{\weight}{W}
\newcommand{\mweight}{w}
\newcommand{\sweight}{\hat{w}}
\newcommand{\vmweight}{\boldsymbol{w}}
\newcommand{\iweight}{\bar{w}}
\newcommand{\viweight}{\bar{\boldsymbol{w}}}
\newcommand{\totalmw}{r}
\newcommand{\maxtotalmw}{r^*}
\newcommand{\setOpts}{\mathcal{S}^{*}}
\newcommand{\opts}{S^*}
\newcommand{\nopts}{\mathcal{S}^\alpha}
\newcommand{\cnopts}{\bar{{\mathcal{S}}}^\alpha}
\newcommand{\snopts}{\overline{\mathcal{S}}}
\newcommand{\alpR}{Reg^{\alpha}}
\newcommand{\framel}{T}
\newcommand{\alpk}{\frac{k-1}{k+1}}
\newcommand{\nochosen}{\hat{\tau}}
\newcommand{\ucbI}{\iweight}
\newcommand{\augstar}{\mathcal{A}^*}
\newcommand{\alpgapmin}{\Delta_{\min}^{\alpha}}
\newcommand{\alpgapmax}{\Delta_{\max}^{\alpha}}
\newcommand{\orderF}{n}
\newcommand{\lya}{L}
\newcommand{\dri}{D}
\newcommand{\arv}{a}
\newcommand{\serv}{X}
\newcommand{\mserv}{\mu}
\newcommand{\schedule}{\mathbf{S}}
\newcommand{\dep}{X}
\newcommand{\queue}{q}
\newcommand{\sregion}{\Lambda}
\begin{document}

%\title{Distributed Link Scheduling with Unknown Link Rates in Multi-hop Wireless Networks}
%\title{Distributed Learning and Scheduling in Multi-hop Wireless Networks}
\title{A Learning-based Distributed Algorithm for Scheduling in Multi-hop Wireless Networks}

\author{Daehyun Park, Sunjung Kang,~\IEEEmembership{Student~Member,~IEEE}, and Changhee Joo,~\IEEEmembership{Senior~Member,~IEEE}%
\IEEEcompsocitemizethanks{\IEEEcompsocthanksitem D. Park was with ECE, UNIST, Ulsan, South Korea.\protect\\
Email: eoy13@unist.ac.kr
\IEEEcompsocthanksitem S. Kang is with ECE, OSU, Columbus, Ohio.\protect\\
Email: kang.853@osu.edu
\IEEEcompsocthanksitem C. Joo is the corresponding author, and he is with CSE, Korea University, Seoul, South Korea.\protect\\
Email: changhee@korea.ac.kr}
\thanks{This work was supported in part by the NRF grant funded by the Korea government (MSIT) (No. NRF-2021R1A2C2013065), and in part by the MSIT, Korea, under the ICT Creative Consilience program (IITP-2020-0-01819) supervised by the IITP.}
%\thanks{Manuscript received ...; revised ....}
}

% The paper headers
%\markboth{Journal of \LaTeX\ Class Files,~Vol.~14, No.~8, August~2015}%
%{Shell \MakeLowercase{\textit{et al.}}: Bare Demo of IEEEtran.cls for Computer Society Journals}

% for Computer Society papers, we must declare the abstract and index terms
% PRIOR to the title within the \IEEEtitleabstractindextext IEEEtran
% command as these need to go into the title area created by \maketitle.
% As a general rule, do not put math, special symbols or citations
% in the abstract or keywords.
\IEEEtitleabstractindextext{%
\begin{abstract}
We address the joint problem of learning and scheduling in multi-hop wireless network without a prior knowledge on link rates. Previous scheduling algorithms need the link rate information, and learning algorithms often require a centralized entity and polynomial complexity. These become a major obstacle to develop an efficient learning-based distributed scheme for resource allocation in large-scale multi-hop networks. In this work, by incorporating with learning algorithm, we develop provably efficient scheduling scheme under packet arrival dynamics without a priori link rate information. We extend the results to distributed implementation and evaluation their performance through simulations. 
%we incorporate a low-complexity scheduling algorithm into the learning procedure, and develop new distributed schemes for scheduling and learning. We analytically evaluate distributed the schemes and show that they achieve the time-order optimal performance in learning and the throughput performance arbitrarily close to the optimal. To our best knowledge, they are the first schemes that achieve probably efficient regret and throughput performance, and that are amenable to implement in a distributed manner. We verify our results through simulations. 
\end{abstract}

% Note that keywords are not normally used for peerreview papers.
\begin{IEEEkeywords}
Wireless scheduling, learning, distributed algorithm, provable efficiency, multi-hop networks.
\end{IEEEkeywords}}

% make the title area
\maketitle

% To allow for easy dual compilation without having to reenter the
% abstract/keywords data, the \IEEEtitleabstractindextext text will
% not be used in maketitle, but will appear (i.e., to be "transported")
% here as \IEEEdisplaynontitleabstractindextext when the compsoc 
% or transmag modes are not selected <OR> if conference mode is selected 
% - because all conference papers position the abstract like regular
% papers do.
\IEEEdisplaynontitleabstractindextext
% \IEEEdisplaynontitleabstractindextext has no effect when using
% compsoc or transmag under a non-conference mode.

% For peer review papers, you can put extra information on the cover
% page as needed:
% \ifCLASSOPTIONpeerreview
% \begin{center} \bfseries EDICS Category: 3-BBND \end{center}
% \fi
%
% For peerreview papers, this IEEEtran command inserts a page break and
% creates the second title. It will be ignored for other modes.
\IEEEpeerreviewmaketitle

\section{Introduction} \label{sec:intro}

\IEEEPARstart{A}{s}
one of the key functions in wireless communication networks, link scheduling determines which links should be activated at what time. The problem is challenging, in particular, in multi-hop\footnote{In this work, we use the term `multi-hop' for the arbitrary interference relationship between wireless links, and consider only single-hop traffic flows, i.e., a packet departs the network after a single transmission. If a scheduling scheme is suboptimal with these single-hop flows over multi-hop network, then clearly it is suboptimal with multi-hop flows. This approach has been widely adopted to investigate the performance of scheduling schemes without being affected by the functions of other layers such as routing and congestion control~\cite{joo_jwcn10, lin_ton06, joo_tac09, joo_tmc12a, wu07, lin_ton09, joo_ton09a}.}  networks due to non-linear interference relationship between wireless links. The seminal work of Tassiulas and Ephremides has shown that the Maximum Weighted Matching (MWM) algorithm that maximizes the queue weighted rate sum can achieve the optimal throughput under packet arrival dynamics~\cite{tassiulas_tac92}. Due to high computational complexity of MWM~\cite{joo_jwcn10}, alternative low-complexity scheduling solutions with comparable performance such as Greedy Maximal Matching (GMM) or Longest Queue First (LQF) have attracted much attention~\cite{lin_ton06, joo_tac09}. However, since GMM still has linear complexity that increases with the network size, it can be hardly used in large-size multi-hop networks.

There has been extensive research on developing efficient scheduling algorithms that have sublinear complexity, yet perform provably well in multi-hop wireless networks. An approximation to GMM with logarithmic complexity has been developed in~\cite{joo_tmc12a}. Random access technique with explicit neighborhood information exchanges has been explored at some expense of performance~\cite{wu07, lin_ton09, joo_ton09a, jinho_jcn20}. Several studies have shown that the optimal throughput performance is achievable, either by taking the pick-and-compare approach~\cite{hajek88, bui_ton09}, or by exploiting the carrier-sensing functionality~\cite{jiang_ton10, ni_ton12}.
There have been also attempts to develop provably efficient scheduling algorithms that work with time-varying wireless channels~\cite{joo_tmc12a, joo_tmc13} or with complex SINR interference model~\cite{borbash06b, jin_ton14}.
The aforementioned scheduling schemes provide performance guarantees under packet arrival dynamics. However, they are limited to deterministic link rates that are \emph{known a priori} or at the time of scheduling. Their extension to the case when the link rates are unknown is not straightforward.

In this work, we consider the scheduling problem, where link rates and statistics are unknown a priori. This occurs when new applications try to operate efficiently under uncertainty caused by wireless fading, interference, limited feedback, measurement error, system dynamics, etc~\cite{li_wc20, zhao_jsac2007, stahlbuhk_adhoc2019}. 
We assume that an instance link rate is revealed when it is accessed/scheduled, and it is drawn from an unknown static distribution. Our goal is to find an appropriate sequence of link schedules that maximize throughput under packet arrival dynamics, while quickly learning the link rates and queue states.
%Our goal is to find a sequence of non-interfering link sets (i.e., feasible schedules) in multi-hop wireless networks to maximize throughput performance while learning link rates.

Focusing on the learning aspect, the problem can be viewed as a variant of Multi-Armed Bandit (MAB) problems, in which one repetitively plays a set of arms to maximize the reward sum~\cite{chen_icml2013}. 
The performance of a learning algorithm is often evaluated by \emph{regret}, which is the difference in the total expected reward obtained by an optimal policy and that by the learning algorithm. Lai and Robbins have shown that the regret grows at least at logarithmic rate of time~\cite{lai_1985}, and several index-type learning algorithms with the order-optimal regret have been developed~\cite{auer2002, anantharam_tac1987}.
For a large-scale multi-hop wireless network, it is imperative to design algorithms that are amenable to implement in a distributed manner. In~\cite{liu_tsp2010}, the authors have developed a distributed learning algorithm that selects best $M$ out of $N$ arms, where each of $M$ users selects an arm taking into consideration mutual collision. By employing a time-division selection approach, the scheme is shown to achieve logarithmic regret. 
Chen et al.~\cite{chen_icml2013} and Gai et al.~\cite{gai_ton2012} have considered more general problems of combinatorial MAB (CMAB) with arbitrary constraint. They have employed an $(\alpha,\beta)$-approximation oracle that can achieve $\alpha$ fraction of the optimal value with probability $\beta$, and developed learning schemes that can achieve the logarithmic growth for $\alpha\beta$ fraction of the optimal expected regret (denoted by \emph{$\alpha\beta$-regret}). However, an oracle with good $\alpha\beta$ (i.e., close to $1$) often has a high-order polynomial complexity, and thus as the network scales, it is not clear whether the scheme is amenable to implement in a distributed manner.

Applying learning algorithms to scheduling in wireless networks, the works of~\cite{gai_globecom2011, anandkumar_jsac2011} have addressed the regret-minimization problem in cognitive radio network settings, and developed distributed schemes with logarithmic regret through prioritized ranking and adaptive randomization.
The authors of~\cite{tibrewal_infocom19} have developed fully distributed schemes that can achieve the logarithmic regret without any information exchange. 
%The authors of~\cite{ren_tvt20} have considered the joint scheduling and learning problem under the SINR-based interference model, taking into account full-duplex transmission. 
%
In~\cite{kang_tmc20}, the authors have successfully lowered the algorithmic complexity to $O(1)$ while achieving the logarithmic regret performance.
Although these aforementioned learning algorithms are amenable to distributed implementation, they are limited to single-hop networks, such as wireless access networks, and to saturated traffic scenarios (i.e., links always have packets to send), and thus cannot accordingly respond to packet arrival dynamics.
Recently, Stahlbuhk et al.~\cite{stahlbuhk_adhoc2019} has developed a joint learning and scheduling scheme that provides provable efficiency under packet arrival dynamics, by incorporating GMM scheduling algorithm and UCB-based learning algorithm. Albeit interesting, it achieves only $\frac{1}{2}$ of the capacity region and has linear complexity, which makes it less attractive for large-size networks.
%Recent work of Stahlbuhk et al.~\cite{stahlbuhk_adhoc2019} is the most related to ours. They have incorporated GMM, which is $(\frac{1}{2},1)$-approximation oracle with linear complexity, for both learning and scheduling in multi-hop wireless networks with unknown link rates, and shown that the GMM-based scheme achieves the logarithmic growth for $\frac{1}{2}$-regret. Albeit interesting, its stability region is limited to $\frac{1}{2}$ of the capacity region, and has linear complexity that makes it less attractive for large-size networks.

In this work, we consider the joint problem of learning and scheduling in multi-hop wireless networks, and develop \emph{low-complexity schemes that achieves near-full capacity region under packet arrival dynamics}. Our contribution can be summarized as follows: 
\begin{itemize}
    \item We develop a joint learning and scheduling scheme with $O(k)$ computational complexity, by successfully incorporating a graph augmentation algorithm with the UCB index. Parameter $k$ is settable for a certain level of performance, in which case the complexity becomes $O(1)$.
    %\item We show that the augmentation algorithm, which can be considered as an $(\alpha,\beta)$-approximation oracle, achieves the rate-optimal logarithmic growth of $\alpha$-regret, for any $\beta>0$.
    \item We show that A$^k$-UCB is an $(\alpha,\beta)$-approximation oracle with $\alpha = \alpk$ and some small non-zero~$\beta$, but can indeed achieve the logarithmic growth for $\alpk$-regret, regardless of the value of $\beta$, which is in contrast to $\alpha\beta$-regret shown in~\cite{chen_icml2013, gai_ton2012}.
    %$\alpk$-fraction of the capacity region at $O(k)$ computational complexity. 
    \item By using the frame structure, we show that A$^k$-UCB achieves $\alpk$ fraction of the capacity region under packet arrival dynamics.
    \item We extend our result and develop $d$A$^k$-UCB scheme that is amenable to implement in a completely distributed manner.
\end{itemize}

The rest of paper is organized as follows. Section~\ref{sec:model} describes our system model. We propose a joint scheme of learning and scheduling in Section~\ref{sec:algorithm}, and analytically evaluate its performance in Section~\ref{sec:evaluation}. We further extend our scheme for distributed implementation in Section~\ref{sec:distributed}. Finally, we numerically evaluate our schemes in Section~\ref{sec:sim} and conclude our work in Section~\ref{sec:conclusion}.

%%%%%%%%%%%%%%%%%%%%%%%%%%
% Network Model
%%%%%%%%%%%%%%%%%%%%%%%%%%
\section{System Model} \label{sec:model}
We consider a multi-hop wireless network denoted by graph $\mathcal{G}=(\setNode,\setLink)$ with the set $\setNode$ of nodes and the set $\setLink$ of directional links. We assume that the connectivity is reciprocal, i.e., if $(u,v) \in \setLink$, then $(v,u) \in \setLink$. A set of links that can be scheduled at the same time is constrained by the primary interference model, under which any node $v$ (either transmitter or receiver) in the network can communicate with at most one of its neighbor nodes $\mathcal{N}(v)$, where $\mathcal{N}(v)=\{ u\in\setNode~|~ (v,u)\in \setLink \}$. Slightly abusing the notation, we also denote the set of links that is connected to $v$ by $\mathcal{N}(v)$.
The primary interference model can represent Bluetooth or FH-CDMA networks as well as capture the essential feature of wireless interference~\cite{lin_ton06, stahlbuhk_adhoc2019}, and has been adopted in many studies on wireless scheduling, e.g., see~\cite{lin_ton06, joo_tac09, joo_tmc12a, wu07, lin_ton09, joo_ton09a} for more detailed description.
Time is slotted, which can be achieved by being equipped with high accuracy GPS. At each time slot, a set of links that satisfies the interference constraints can be simultaneously activated. Such a set of links is called a \emph{matching} (or a feasible schedule), and let $\setS$ denote the set of all matchings. 

At each link $i \in \setLink$, we assume packets arrive following a Bernoulli process with probability $\lambda_i$ (e.g., see~\cite{bui_ton09}). Let $\vlambda$ denote its vector and $\arv_i(t) \in \{0,1\}$ denote the number of arrived packets in time slot $t$. We have $\mathbb{E} [\arv_i(t)] = \lambda_i$.
We assume that the rate of link $i$ is time-varying due to multi-path fading and unknown interference as in~\cite{joo_tmc13}, and it is independently drawn from a (possibly different) distribution with mean $\mserv_i$. Let $\vmu$ denote its vector and $\dep_i(t)\in [0,1]$ denote the instance rate of link~$i$ when it is activated at time slot $t$, with $\mathbb{E} [\dep_i(t)] = \mu_i$. The extension to multiple packet arrivals and departures is straightforward. We assume that $\vlambda$ and $\vmu$ are unknown. 

At time slot $t$, if a policy activates matching $\schedule_t$, then each link $i \in \schedule_t$ accesses the medium and transmits $\dep_{i}(t)$ packets\footnote{or transmits $1$ packet with success probability $\dep_i (t)$.} during the time slot.
Each link~$i$ is associated with an unbounded buffer that queues up packets for transmission. Let $\queue_i(t)$ denote the queue length at link $i$ at the beginning of time slot $t$, which evolves as 
\begin{equation} \label{eq:q_evolv}
    \queue_i(t+1) = 
    \begin{cases}
        \left[ \queue_i(t) - \dep_i(t) \right]^{+} + \arv_i(t), &\text{ if } i \in \schedule_t, \\
        \queue_i(t) + \arv_i(t), &\text{ if } i \notin \schedule_t,
    \end{cases}
\end{equation}
where $[\cdot]^{+} = \max\{ \cdot, 0 \}$. Let $\vq(t)$ denote its vector, and let $\maxq(t) = \max_{i \in \setLink} \queue_i(t)$ denote the maximum queue length in the network at time slot $t$.

We consider a frame structure where each frame has length of $\framel$ time slots. Frame $\orderF$ begins at time slot $t_\orderF=(\orderF-1) \framel + 1$. During frame $\orderF$, i.e., for time slots $t \in [t_\orderF, t_{\orderF+1})$, we define weight $\weight_i(t)$ and its mean $\mweight_i$ of link $i$, respectively, as
\begin{equation} \label{eq:weight}
\textstyle
    \weight_i(t) = \frac{q_i(t_\orderF)}{\maxq(t_\orderF)} \serv_{i}(t),
    ~\text{and}~ 
    \mweight_i = \frac{q_i(t_\orderF)}{\maxq(t_\orderF)}\mu_{i}.
\end{equation}
For $\maxq(t_\orderF)=0$, we define $\weight_i(t)=\serv_{i}(t)$ and $\mweight_i=\mu_i$. Let $\vmweight$ denote its vector $(\mweight_1,\mweight_2,\cdots,\mweight_{\noLink})$, where $|\cdot|$ is the cardinality of the set. We denote the link weight sum of matching $S$ by
\begin{equation}
    \textstyle \totalmw_{\vmweight}(S) = \sum_{i \in S} \mweight_i.
\end{equation} 
For convenience, we let $\maxtotalmw_{\vmweight}=\max_{S \in \setS }~ \totalmw_{\vmweight}(S)$ denote the largest weight sum over all matchings, and we also denote a set of optimal matchings by $\setOpts_{\vmweight} =\argmax_{ S \in \setS} \totalmw_{\vmweight}(S)$. 
For $\alpha \in (0,1]$, we define a set of near-optimal matchings with respect to vector $\vmweight$ as 
\begin{equation} \label{eq:near-optimal}
    \nopts_{\vmweight} = \{S \in \setS ~\vert ~ \totalmw_{\vmweight}(S) \geq \alpha \cdot \maxtotalmw_{\vmweight} \}.
\end{equation}
and define its complement as $\cnopts_{\vmweight} = \setS - \nopts_{\vmweight}$.

In the CMAB framework, a link corresponds to an arm, a matching to a super arm, and the instance link rate to the reward of the link, respectively. We use the terms interchangeably. Note that the regret is defined as the accumulated expected difference between the reward sum associated with an optimal matching and that obtained by the MAB algorithm. Similar to~\cite{chen_icml2013}, we define \emph{$\alpha$-regret} as, for some $\alpha \in (0,1]$, 
\begin{equation} \label{eq:a-regret}
    \alpR(t) = t \cdot \alpha \cdot \maxtotalmw_{\vmweight}
                \textstyle -\mathbb{E}\left[\sum_{\tau=1}^t r_{\vmweight}(\schedule_{\tau})\right],
\end{equation}
which evaluates the performance of an CMAB task at time~$t$.

In the viewpoint of resource allocation, achieving a high reward sum is equivalent to achieving a larger queue-weighted link rate sum, which implies that the links with high demands and high service rates are scheduled first, and thus tends to stabilize the network.
A network is said to be \emph{stable} if the queues of all links are \emph{rate stable}, i.e., $\lim_{t\rightarrow\infty} \frac{q_i(t)}{t} = 0$ with probability $1$ for all $i\in\setLink$. Let $\sregion$ denote the \emph{capacity region}, which is the set of arrival rate vectors $\vlambda$ such that for any $\vlambda \in \sregion$, there exists a policy that can make the network stable. We say that a scheduling policy has the \emph{stability region} $\gamma \Lambda$ for some $\gamma \in [0,1]$, if it can stabilize the networks for any arrival $\vlambda \in \gamma\sregion$.

We aim to develop a joint scheme of learning and scheduling that determines $\schedule_t$ to
\[
\begin{array}{ll}
    \text{\bf maximize}  & \gamma \\
    \text{\bf subject to} 
        & \underset{t\rightarrow\infty}{\lim} \frac{q_i(t)}{t} = 0, ~\text{for any}~ \vlambda \in \gamma\sregion, \\
        & \serv_i(t) \stackrel{iid}{\sim} \mathcal{D}_i, ~\text{and Eq.~(\ref{eq:q_evolv})},
\end{array}
\]
where $\mathcal{D}_i$ denotes the distribution with finite support $[0,1]$. The arrival vector $\vlambda$ and the distributions $\{\mathcal{D}_i\}$ are \emph{unknown} to the controller, and an observation on $\serv_i(t)$ is available only by scheduling link $i$ at time $t$.

%%%%%%%%%%%%%%%%%%%%%%%%%%
% 
%%%%%%%%%%%%%%%%%%%%%%%%%%
\section{Augmentation with UCB index (A$^k$-UCB)} \label{sec:algorithm}

We develop a provably efficient joint scheme of learning and scheduling, by incorporating the augmentation algorithm presented in~\cite{bui_ton09} with UCB index. We first describe the overall algorithm, and then explain the detailed operations.

%We can consider the problem of link-rate learning as a CMAB problem with \emph{linear reward}, where the goal is to maximize the total (weighted) reward sum. It is of \emph{semi-bandit} type, i.e., individual reward of each played arm is revealed. For this class of learning problems, the well-known UCB index~\cite{auer2002} is shown to be very effective.

We consider each frame time as an independent learning period, which allows us to decouple the learning from the scheduling. In the following, we describe the operation of our algorithm during a frame time. For the ease of exposition, we assume that our algorithm runs for time slot $[1,\framel]$, where $\framel$ is the frame length.
%when a new frame starts, all the learning parameters are reset, and new learning period starts.  
%But it also incurs a sort of inefficiency in the learning, which we will address later. 

We start with some notations. 
Let $q_i=q_i(1)$ denote the initial (i.e., at the beginning of the frame) queue length of link (arm) $i$, and let $\maxq=\max_i q_i$. Let $\nochosen_i(t)$ denote the number of times that arm $i$ is played up to time slot $t$, and let $\nochosen_S(t)$ denote the number of times that matching $S$ is played. The UCB index of arm $i$~\cite{auer2002} is defined as 
\begin{equation} \label{eq:index}
    \textstyle 
    %\iweight_{i,t} = \sweight_{i,\nochosen_i(t-1)} + \sqrt{\frac{(L+1)\ln t}{\nochosen_i(t-1)}},
    \iweight_{i,t} = \sweight_i (t-1) + \sqrt{\frac{(\noLink+1)\ln t}{\nochosen_i(t-1)}},
\end{equation}
where $\sweight_i (t) = \frac{q_i}{\maxq} \cdot \frac{1}{\nochosen_i(t)}\sum_{n=1}^{t}  \serv_{i}(n) \cdot  \mathbb{I}\{i \in S_n\}$ denotes average reward of arm $i$ at time slot $t$ weighted by $\frac{q_i}{\maxq}$, and $\mathbb{I}\{ e \} \in \{0,1\}$ denotes the indicator function that equals $1$ if event $e$ occurs, and $0$ otherwise.
All the variables of $\iweight_{i,t},\sweight_i,\nochosen_i,q_i,\maxq$ will be reset at the beginning of each frame.
Let $\viweight_t = (\iweight_{1,t}, \iweight_{2,t}, \dots, \iweight_{\noLink,t})$ denotes the UCB index vector. 
Then $\totalmw_{\viweight_t}(S)$ and $\maxtotalmw_{\viweight_t}$ are the index sum over links in matching $S$ and its maximum value over all possible matchings, respectively. 
Also, we denote $\setOpts_{\viweight_t}$ and $\nopts_{\viweight_t}$ as the set of matchings that achieve $\maxtotalmw_{\viweight_t}$ and those that achieve at least $\alpha \maxtotalmw_{\viweight_t}$, respectively.

\begin{algorithm}[t]
\caption{Frame-based joint learning and scheduling.}\label{alg:generic_ucb}
\begin{algorithmic}[1] 
\STATEx /* Repeat at each frame time */
\STATE Obtain queue constant $q_i$ and $\maxq$ 
\STATE Initialize $\sweight_{i}$ and $\nochosen_i$
\FOR{ $t=1$ to $\noLink$ }
    \STATE Schedule arbitrary matching $\schedule_t$ that has link $t$
    \STATE Update $\sweight_{i}$ and $\nochosen_i$ for each link $i \in \schedule_t$
\ENDFOR
\FOR{ $t={\noLink+1}$ to $\framel$ }
    \STATE Compute UCB index 
        $\ucbI_{i,t} \leftarrow \sweight_{i} + \sqrt{\frac{(\noLink+1)\ln t}{\nochosen_i}}$ 
    \STATE $\langle$  Select matching $\schedule_t$ using $\viweight$ $\rangle$
    \STATE Schedule $\schedule_t$
    \STATE Update $\sweight_{i}$ and $\nochosen_i$ for each link $i \in \schedule_t$
\ENDFOR
%\ENDFOR
\end{algorithmic}
\end{algorithm} 

We develop our joint learning and scheduling algorithm based on the generic UCB index, as shown in Algorithm~\ref{alg:generic_ucb}, where we omit subscript $t$ of $\sweight_i (t)$ and $\nochosen_i(t)$ for brevity. 
At time slot $t=1$, it obtains two constant weight parameters $q_i$ and $\maxq$ (line 1), and schedules arbitrary matchings for $\noLink$ time slots such that each link can be scheduled at least once (lines 3-5). Afterwards, it computes the index of each arm (line 7), selects matching $\schedule_t$ using the indices (line 8), and schedules it (lines 9-10).

%%%%$S_t \in \nopts_{\viweight_t}$ for some predetermined $\alpha \in [0,1]$. 
%If an $\alpha$-oracle algorithm that returns a matching $\schedule_t \in \nopts_{\viweight_t}$ \emph{at every time slot $t$} is used in line 8, then we may achieve $\alpha$ fraction of the capacity region, such as the greedy algorithm with $\alpha = \frac{1}{2}$~\cite{stahlbuhk_adhoc2019}. % and requires a centralized entity. 
%
%If the matching with the \emph{highest} index-sum is selected, i.e., $\alpha=1$, then the algorithm is equivalent to the original UCB learning algorithm, and it is known that an oracle with $\alpha = 1$ has to solve the Maximum Weighted Independent Set (MWIS) problem, which is in general NP-Complete~\cite{joo_jwcn10}.

The key part of the algorithm is about how to select matching $\schedule_t$ in line 8 such that it achieves high performance at low complexity. It has been shown in~\cite{gai_ton2012, chen_icml2013} that, if we use an $(\alpha,\beta)$-approximation oracle in the matching selection, the joint algorithm has $\alpha\beta$-regret performance, which leads to achieving $\alpha\beta$ fraction of the capacity region. The problem is, that the parameter $\beta$ can be very small in particular when the network size is large. To this end, we introduce a class of augmentation algorithms with parameter $k$, which is an $(\alpha,\beta)$-approximation oracle with $\alpha = \alpk$ and some small\footnote{It depends on the network topology and the setting of $p$. In general, we have a smaller $\beta$ for a larger network.} $\beta > 0$, and has $O(k)$ complexity.

%\PURPLE{Our main contribution is to show that by adopting the augmentation algorithm, our joint learning and scheduling algorithm, denoted by A$^k$-UCB, has $\alpha$-regret performance regardless how small $\beta$ is, and thus achieves $\alpk$ fraction of the capacity region. This is not straightforward since the previous analysis on the throughput performance of the augmentation algorithm~\cite{bui_ton09} cannot be used due to inaccurate link rate information. Also considering A$^k$-UCB as an $(\alpha,\beta)$-oracle is not much helpful due to the fact that probability $\beta$ can be arbitrarily small as the network scales up.}

%In the following section, we show that, with appropriate settings, A$^k$-UCB achieves $\alpk$ fraction of the optimal expected reward, regardless of the network size, and thus achieves $\alpk$ fraction of the capacity region, which can be arbitrarily close to $1$ by increasing~$k$.

%\subsection{Augmentation algorithm} \label{Augmentation}

\iftoggle{tech_report}
{
    \begin{algorithm}[t]
    \caption{Augmentation for each node $v$.} \label{alg:scheme}
    \begin{algorithmic}[1]
    \STATEx Input: $\vmweight, k,p, \schedule_{t-1}$
    \STATEx Output: $\schedule_t$
    \iftoggle{tech_report}
    {
        \STATE Do initialization /* see Algorithm \ref{alg:init} */
        \STATE Do path augmenting /* see Algorithm \ref{alg:path-aug} */
        \STATE Do checking a cycle /* see Algorithm \ref{alg:cycle} */
    }
    {
        \STATE \GREEN{Do initialization}
        \STATE \GREEN{Do path augmenting}
        \STATE \GREEN{Do checking a cycle}
    }
    \STATE Do back-propagating and scheduling 
    \end{algorithmic}
    \end{algorithm}
}

\iftoggle{tech_report}
{
    In the following, we explain the augmentation algorithm.
}
{
    We overview the augmentation algorithm. For the detailed description, we refer to~\cite{bui_ton09} or our technical report~\cite{tr}.
}
We start with some definitions. Given a matching $S$, an \textit{augmentation} $A$ of matching $S$ is a path or cycle where every alternate link is in $S$ and has the property that if all links in $A \cap S$ are removed from $S$ and all links in $A-S$ are added to that $S$, then the resulting set of links is another matching in $G$. The latter process of finding new matching is called \textit{augmenting} $S$ with $A$, and the resulting matching is denoted by $S\oplus A=(S -A) \cup (A-S)$.
A pair of augmentations $A_1$ and $A_2$ of matching $S$ is  $disjoint$ if no two links in $A_1 -S$ and $A_2 -S$ are adjacent, i.e., if they do not share a common node. Let $\mathcal{A}$ denote a set of disjoint augmentations of matching $S$ where every pair in $\mathcal{A}$ is disjoint. Then $\bigcup_{A \in \mathcal{A}}(S \oplus A)$ is also a matching in $G$. 

The overall procedure of the augmentation algorithm at each time slot $t$ is as follows. 
\begin{enumerate} 
    \item At the beginning of the time slot, each link $i$ is associated with some known weight $\mweight_i(t)$. 
    \item Given a valid matching $\schedule_{t-1}$ that is the schedule at time $t-1$, it randomly generates a set~$\mathcal{A}$ of disjoint augmentations of $\schedule_{t-1}$. 
    \item It compares the weight sum of $A-\schedule_{t-1}$ and $A \cap \schedule_{t-1}$ for each $A \in \mathcal{A}$. Let $B(A)$ be the one with the larger weight sum among the two. 
    \item It takes the new schedule $\schedule_t$ as $\bigcup_{A \in \mathcal{A}} B(A)$.
\end{enumerate}
For the comparison of the weight sum in the 3rd step, we define the gain of augmentation $A$ as
\begin{equation} \label{eq:gain}
\textstyle 
    G_t(A)
    %=\sum_{i \in A-\schedule_{t-1}} q_{i}(t)\cdot \mu_i - \sum_{i \in A \cap \schedule_{t-1}}q_{i}(t) \cdot \mu_i,
    =\sum_{i \in A-\schedule_{t-1}} w_i - \sum_{j \in A \cap \schedule_{t-1}} w_j,
\end{equation}
and obtain new schedule $\schedule_t$ by augmenting $\schedule_{t-1}$ with all $A \in \mathcal{A}$ of $G_t(A) > 0$.

%%%%%%%%%%%%%%%%%%%%%%%%%%%%%%%%%%%%%%%
\iftoggle{tech_report}
{
    %%%%%%%% by 190423 Daehyun Park
    % %%%%%%%% by 190416 Daehyun Park
    %%%%%%%% new try for Augmenting algorithm
    \begin{algorithm}[t]
    \caption{Initialization.}
    \begin{algorithmic}[1]
    \STATE $\tau \leftarrow 1$   ~~/* variable that counts mini-slots */
    \STATE $state_v \leftarrow \textsc{null}$
    \STATE $seed_v \leftarrow 1$ with probability $p$
    \STATE $alt_v \leftarrow \textsc{false}$ ~~/* for alternating links */
    \STATE $A_v \leftarrow \emptyset$  ~~/* list of links in augmentation */
    \IF{$seed_v=1$}
        \STATE $state_v \leftarrow \textsc{active}$ 
        \STATE $Z_v \leftarrow 0$ ~~/* size of augmentation */
        \STATE $G_v\leftarrow 0$ ~~/* gain of augmentation */
        \STATE Choose $\bar{Z}_v \in [k]$ at random ~~/* max aug. size */
        \IF{$\exists$ $n\in \mathcal{N}(v)$ s.t. $(v,n)\in \schedule_{t-1}$}
            \STATE Append link $(v,n)$ to $A_v$
            \STATE $G_v\leftarrow G_v -\mweight_{(v,n)}$ %\BLUE{\sout{x_{rand}\leftarrow \neg x_{rand}} }
            \STATE Send REQ$_v$ $\langle A_v,G_v, Z_v, \bar{Z}_v, \neg alt_v \rangle$ to $n$
        \ELSE
            \STATE Choose $n\in\mathcal{N}(v)$ at random
            \STATE Send REQ$_v$ $\langle A_v,G_v, Z_v, \bar{Z}_v, alt_v \rangle$ to $n$
        \ENDIF
        \STATE $state_v \leftarrow \textsc{wait}$ ~ /* wait ACK$_n$ from $n$ */
    %\ENDIF
    \ELSE ~~ /* $state_v = \textsc{null}$ */
        \IF{successfully receive REQ$_u$} 
            \STATE Set $A_v,G_v,Z_v,\bar{Z}_v,alt_v$ as in REQ$_u \langle \dots \rangle$
            %\STATE $state_v \leftarrow active$
            \STATE $state_v \leftarrow \textsc{active}$
            \STATE Send ACK$_v$ to $u$
        \ENDIF
    \ENDIF
    \IF{$state_v = \textsc{wait}$} 
        \IF{receive ACK$_n$}
            \STATE $state_v \leftarrow \textsc{used}$ /* inactive */
        \ELSE ~~ /* REQ collision occurred at $n$ */
            \STATE $state_v \leftarrow \textsc{done}$ /* terminus */
        \ENDIF
    \ENDIF
    \end{algorithmic}
    \label{alg:init}
    \end{algorithm}
}
%%%%%%%%%%%%%%%%%%%%%%%%%%%%%%%%%%%%%%%

%%%%%%%%%%%%%%%%%%%%%%%%%%%%%%%%%%%%%%%
\iftoggle{tech_report}
{
    \begin{algorithm}[t]
    \caption{Path augmenting.}
    \begin{algorithmic}[1]
    \FOR{$\tau \leftarrow 2$ to $2k+1$}
        \IF{$state_v = \textsc{active}$}
            \IF{$alt_v = \textsc{true}$}
                \STATEx ~~~~~~~~ /* if the last added link is in $\schedule_{t-1}$ */
                \IF{$\exists n\in \mathcal{N}(v)-A_v$ and $Z_v < \bar{Z}_v$}
                    \STATE Choose $n \in \mathcal{N}(v)-A_v$ at random
                    %\STATE $x_{rand}\leftarrow \neg x_{rand}$  
                    \STATE Send REQ$_v$ $\langle A_v,G_v, Z_v, \bar{Z}_v, \neg alt_v \rangle$ to $n$
                    \STATE $state_v \leftarrow \textsc{wait}$
                \ELSE
                    \STATE $state_v \leftarrow \textsc{done}$
                \ENDIF
            \newline \vspace{-0.4cm}
            \ELSE  ~~ /* if the last added link is not in $\schedule_{t-1}$ */
                \STATE $u \leftarrow$ the previous active node
                \STATE $G_v \leftarrow G_v + \mweight_{(u,v)}$
                \STATE Append link $(u,v)$ to $\mathcal{A}_v$
                \STATE $Z_v \leftarrow Z_v + 1$ 
                \IF {$\exists$ $n\in \mathcal{N}(v)$ s.t. $(v,n)\in \schedule_{t-1}$}
                    \STATE Append link $(v,n)$ to $A_v$  
                    \STATE $G_v\leftarrow G_v-\mweight_{(v,n)}$ %,~x_{rand}\leftarrow \neg x_{rand}
                    \STATE Send REQ$_v$ $\langle A_v, G_v, Z_v, \bar{Z}_v, \neg alt_v \rangle$ to $n$
                    \STATE $state_v \leftarrow \textsc{wait}$
                \ELSE
                    \STATE $state_v \leftarrow \textsc{done}$
                \ENDIF
            \ENDIF
        \ELSIF{$state_v = \textsc{null}$}
            \IF{successfully receive REQ$_u$} 
                \STATE Set $A_v,G_v,Z_v,\bar{Z}_v,alt_v$ as in REQ$_u \langle \dots \rangle$
                %\STATE $state_v \leftarrow active$
                \STATE $state_v \leftarrow \textsc{active}$
                \STATE Send ACK$_v$ to $u$
            \ENDIF
        \ENDIF
        \IF{$state_v = \textsc{wait}$} 
            \IF{receive ACK$_n$}
                \STATE $state_v \leftarrow \textsc{used}$ /* inactive */
            \ELSE ~~ /* REQ collision occurred at $n$ */
                \STATE $state_v \leftarrow \textsc{done}$ /* terminus */
            \ENDIF
        \ENDIF
    \ENDFOR
    \IF{$state_v = \textsc{active}$}
        \STATE $state_v \leftarrow \textsc{done}$
    \ENDIF
    \end{algorithmic}
    \label{alg:path-aug}
    \end{algorithm}
}
%%%%%%%%%%%%%%%%%%%%%%%%%%%%%%%%%%%%%%%

%%%%%%%%%%%%%%%%%%%%%%%%%%%%%%%%%%%%%%%
\iftoggle{tech_report}
{
    \begin{algorithm}[t]
    \caption{Check a cycle.} \label{alg:cycle}
    \begin{algorithmic}[1]
    \STATE $S^* \leftarrow \emptyset$
    \STATE $A=\mathbb{I}\{\text{the first node $n$ of $A_v\in$  $\mathcal{N}(v)$} ~\text{and}~ v \in A_v \}$
    \STATE $B=\mathbb{I}\{\text{Both end links of $A_v$ $\in$ $\schedule_{t-1}$} ~\text{and}~ v \in A_v\}$
    \STATE $C=\mathbb{I}\{Z_v<\bar{Z}_v ~\text{and}~ v \in A_v\}$
    \IF{$state_v=\textsc{done}$ and $A \cdot B \cdot C=1$}
        \STATE Append link $(v,n)$ to $A_v$
        \STATE $G_v\leftarrow G_v+\mweight_{(v,n)}$ 
    \ENDIF
    \IF{$G_v>0$}
        \STATE $S^* \leftarrow (A_v - S_{t-1}) \cup (S_{t-1}-A_v)$
    \ENDIF
    % \IF{$G_v > 0$}
    %     \STATE \BLUE{$S^* \leftarrow S^* \cup (A_v \oplus S_{t-1})$}
    \end{algorithmic}
    \end{algorithm}
}
%%%%%%%%%%%%%%%%%%%%%%%%%%%%%%%%%%%%%%%

%%%%%%%%%%%%%%%%%%%%%%%%%%%%%%%%%%%%%
\begin{comment}
%%%%%%%%%%%%%%%%%%%%%%%%%%%%%%%%%%%%%
%%%%%%%% by 190416 Daehyun Park
%%%%%%%% new try for Augmenting algorithm
\begin{algorithm}[t]
\caption{Back-propagating and scheduling}
\begin{algorithmic}[1]
\STATE /*$end_n(A)$ is the $n$-th element from the end of list $A$ */
\STATE $\tau\leftarrow 2k+2$
\IF{$state_v=\textsc{done}$, $G_v>0$, and length of $A_v> 1$}
    \IF{$(end_2(A_v),end_1(A_v))\in \schedule_{t-1}$}
        \STATE Switch off 
        $(end_2(A_v),end_1(A_v))$
    \ELSE
        \STATE Switch on $(end_2(A_v),end_1(A_v))$
    \ENDIF
    \STATE Delete $end_1(A_v)$ from $A_v$
    \STATE Send REQ$_v$ $\langle A_v\rangle$ to $end_2(A_v)$
\ENDIF
\FOR{$\tau=2k+3$ to $4k+2$}
\IF{listen REQ and length of $A_v> 1$ }
     \IF{$(end_2(A_v),end_1(A_v))\in \schedule_{t-1}$}
        \STATE Switch off 
        $(end_2(A_v),end_1(A_v))$
    \ELSE
        \STATE Switch on $(end_2(A_v),end_1(A_v))$
    \ENDIF
    \STATE Delete $end_1(A_v)$ from $A_v$
    \STATE Send REQ$_v$ $\langle A_v\rangle$ to $end_2(A_v)$
\ENDIF
\ENDFOR
\end{algorithmic}
\label{alg:switching}
\end{algorithm}
%%%%%%%%%%%%%%%%%%%%%%%%%%%%%%%%%%%%%
\end{comment}
%%%%%%%%%%%%%%%%%%%%%%%%%%%%%%%%%%%%%

\begin{figure}[t]
\centering
    \includegraphics[width=3.5in]{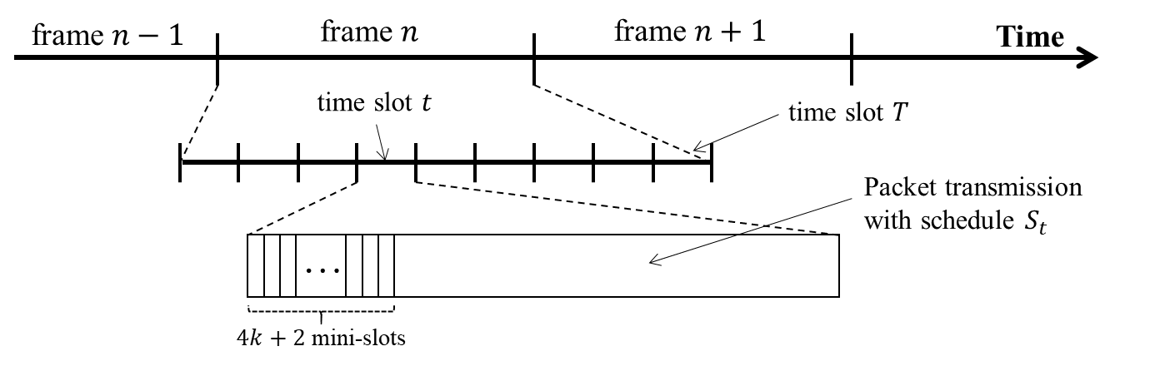}
    \caption{Time structure.}
\label{fig:time_structure}
\end{figure}

\iftoggle{tech_report}
{
    The augmentation algorithm can accomplish the above procedure in a distributed fashion. The algorithm needs two configuration parameters $p$ and $k$, and consists of the following four stages in each time slot as shown in Algorithm~\ref{alg:scheme}: initialization, augmenting, checking a cycle, and back-propagating/scheduling. For the ease of exposition, we consider additional time structure of mini-slots, as shown in Fig.~\ref{fig:time_structure}, and all the four stages end in $4k+2$ mini-slots.
}
{
    The augmentation algorithm can accomplish the above procedure in a distributed fashion. The algorithm has two configuration parameters $p$ and $k$, and consists of the following four stages in each time slot: initialization, augmenting, checking a cycle, and back-propagating/scheduling. For the ease of exposition, we consider additional time structure of mini-slots, as shown in Fig.~\ref{fig:time_structure}, and all the four stages end in $4k+2$ mini-slots.
}
\begin{enumerate}[i)]
    \iftoggle{tech_report}
    {
        \item In the initialization stage (Algorithm~\ref{alg:init}), at mini-slot $\tau=1$, each node~$v$ selects itself as a seed with probability~$p$ (line 3). Once selected, it becomes an \emph{active} node. It starts an augmentation $A_v$ and randomly selects $\bar{Z}_v \in [1,k]$, which is the maximum size of $A_v$. (lines 6-10), and adds the first link of the augmentation from its neighbors $\mathcal{N}(v)$ (lines 11-18). The rest part is the coordination with immediate neighbors by exchanging a request (REQ) and an acknowledgement (ACK) to join in the newly created augmentation, and by informing the added node of necessary information such that it continues building the augmentation (lines 19-28). Note that if a seed node has a link in $\schedule_{t-1}$, then the link should be the first link in the augmentation (line 11). Otherwise, a link is randomly chosen (line 16).
        %the augmentation task will be undertaken by the end node of the added link (lines 18-25). 
        At the end of the mini-slot, node $v$ becomes inactive and node $n$ becomes active.
        \item The second augmenting stage (Algorithm \ref{alg:path-aug}) extends the augmentation by adding a link at each mini-slot, which is either a random neighboring link outside $\schedule_{t-1}$ (lines 3-9) or a neighboring link in $\schedule_{t-1}$ (lines 11-21). Note that, for a link that is not in $\schedule_{t-1}$, gain $G_v$ and list $A_v$ are updated after $REQ$ message is received, and for a link in $\schedule_{t-1}$, they are updated before REQ message is sent (line 16). The extension continues until the augmentation reaches size $\bar{Z}_v$ (line 4) or the augmentation cannot be extended any more. It also needs to pay attention that we count the size of augmentation $A_v$ as the number of new links, i.e., $Z_v = |A_v - \schedule_{t-1}|$ (line 14). The rest part is similar to those of the initialization stage (lines 22-31).
        \item Once the augmenting stage finishes, it checks whether the resulting augmentation makes a cycle or not (Algorithm~\ref{alg:cycle}). For a terminus node (i.e., node $v$ with $state_v = \textsc{done}$), it can be easily verified that the augmentation is a cycle if $A=B=C=1$ in lines~2-4. In the case of a cycle augmentation, the gain is updated to include the last link (line 6), and if the final gain is positive, the link is included in new schedule (line~8).
    }
    {
        \item {\bf Initialization stage:} At mini-slot $\tau=1$, each node~$v$ selects itself as a seed with  probability~$p$. Once selected, it becomes an \emph{active} node. It starts an augmentation $A_v$ and randomly selects $\bar{Z}_v \in [1,k]$, which is the maximum size\footnote{The size $Z$ of an augmentation $A$ is defined as the number of new links in $A$, i.e., $Z = |A - \schedule_{t-1}|$.} of $A_v$. Then, it adds the first link to $A_v$ from its neighboring links $\mathcal{N}(v)$: if there is a link in $\schedule_{t-1} \cap \mathcal{N}(v)$, then we include the link in $A_v$, and otherwise, we randomly choose a link from $\mathcal{N}(v)$. Once the first link $(v,n)$ is chosen, the two nodes $v$ and $n$ coordinate with each other by exchanging necessary information through request (REQ) and acknowledgement (ACK) messages. The information (including current $A_v$, current $G_t(A_v)$ according to (\ref{eq:gain}), $\bar{Z}_v$, etc) allows node $n$ to continue building the augmentation. At the end of the first mini-slot, node $v$ becomes inactive and node $n$ becomes active.
        \item {\bf Augmenting stage:} It extends $A_v$ by adding a link at each mini-slot $\tau \in [2, 2k+1]$. Current active node selects new link that should be either a random neighboring link outside $\schedule_{t-1}$ (if the previously selected link is in $\schedule_{t-1}$) or a neighboring link in $\schedule_{t-1}$ (if the previously selected link is not in $\schedule_{t-1}$). It updates $G_t(A_v)$ and $A_v$ accordingly, and proceeds a similar coordination through REQ and ACK messages, and at the end of mini-slot, active node changes to the corresponding end node of the new link. The extension continues until one of the following conditions hold: the size of $A_v$ equals $\bar{Z}_v$, $A_v$ cannot be extended any more, or the extension fails due to message collision (see Fig.~\ref{fig:augmenting}).
        \item {\bf Cycle-checking stage:} When the augmenting stage finishes, the last node is called the \emph{terminus}. The terminus checks whether the final augmentation $A_v$ forms a cycle or not. %If node $n$ is the last node when the augmenting stage finishes (called a terminus), it can verify whether the augmentation consists of a cycle or not. It is a cycle, if all the following three hold: (i) node $n$ of $A_s$ belongs to $\mathcal{N}(s)$, where $s$ is the seed of that augmentation, (ii) both end links of $A_s$ belong to $\schedule_{t-1}$, and (iii) the current size $Z_v$ is strictly less than $\bar{Z}_v$. 
        If it forms a cycle, the gain is updated to include the last link $(n,s)$.
    }
    \item {\bf Back-propagating/scheduling stage:} The last stage is for back-propagating the final gain from the terminus to the seed through $A_v$, and in the meantime, constructing $S^*$ either by scheduling links outside $\schedule_{t-1}$ if the gain is positive, or by scheduling links in $\schedule_{t-1}$ if the gain is negative. This takes additional $2k+1$ mini-slots at most. The final result $S^*$ will be used as the new schedule $\schedule_t$ during time slot $t$.
\end{enumerate}

\noindent \emph{Remarks:} In our description, we present $\schedule_{t-1}$ as if it is a global variable, but each node $v$ indeed requires only the local view of it, i.e., $\mathcal{N}(v) \cap \schedule_{t-1}$, which can be obtained during the back-propagation in the last stage of the previous time slot. 
After all the stages, a set of augmentations (one per a seed node) will be generated. Since a size-$\bar{Z}$ augmentation $A$ can have at most $\bar{Z}+1$ links of $\schedule_{t-1}$ and $\bar{Z}$ new links, the number of total links in $A$ can be up to $2\bar{Z}+1 \le 2k + 1$.

\begin{figure}[t]
\centering
\includegraphics[width=3.3in]{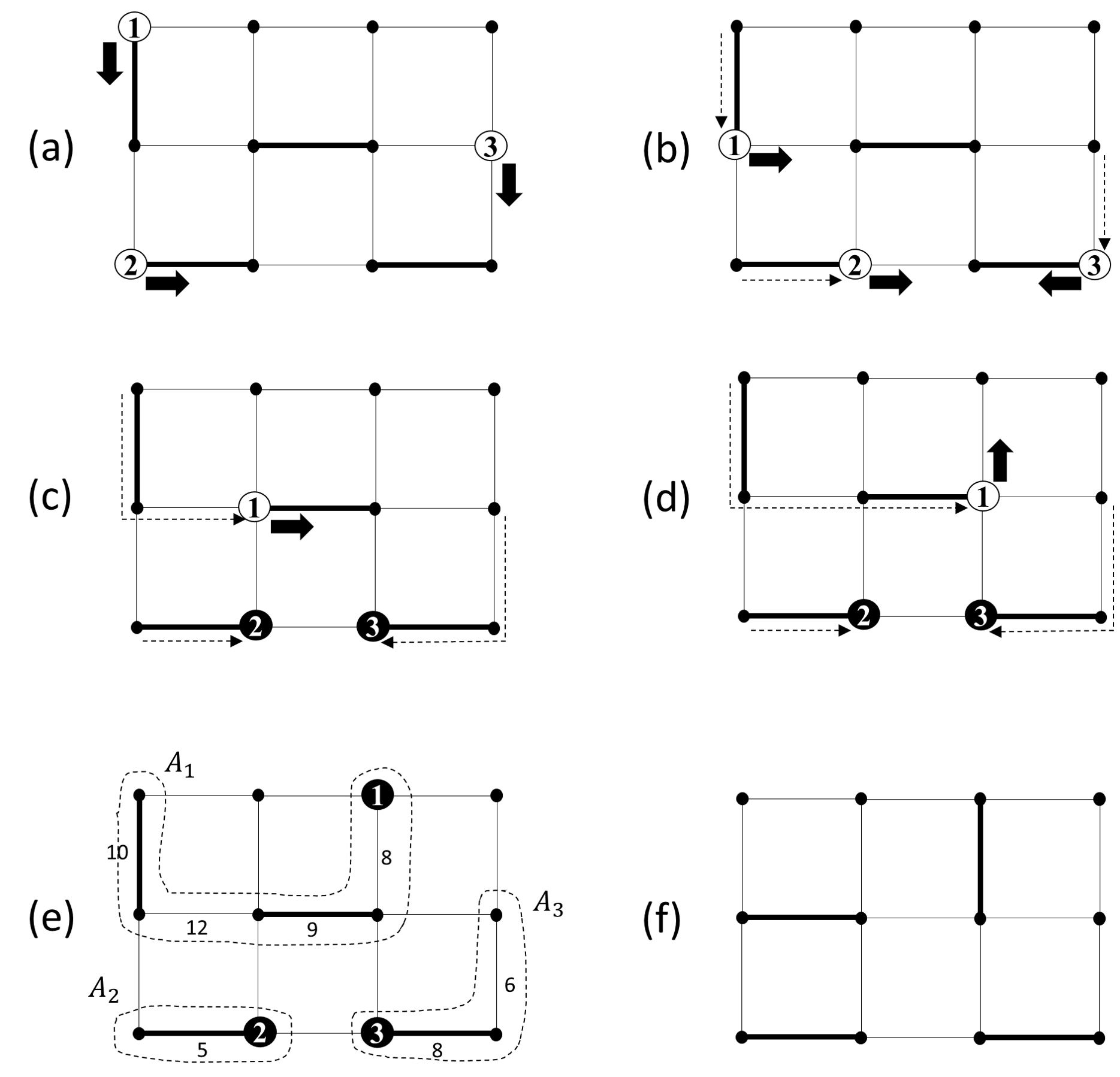}
\caption{
%\BLUE{Example operation of Augmentation algorithm when $k=2$. Circled numbers are active nodes at each step. In first four graphs, Bold lines are links in $S_{t-1}$ and short bold arrows indicate which direction the active nodes want to go extend. Chains of dashed arrows are augmentations being extending or not. In fifth graph, chains of long bold arrows are augmentations which finished extending. In the last graph, bold lines are links in $S_t$.}
    Example operation of the augmentation algorithm with $k=2$ in a time slot. After the cycle-checking stage, link weights of final augmentations are shown in (e). After back-propagation, links are accordingly augmented with $A_1$, and the final schedule is marked by thick solid lines in (f).}   
% \RED{( (i) plz add a number on subfigs (or add big arrows between subfigs) such that one can see the order of subfigs. (ii) make thick link thicker. (iii) in 5-th subfig, how about adding a wrappting dotted line for each 'completed' augmentation.)}}
\label{fig:augmenting}
\end{figure}

Fig.~\ref{fig:augmenting} illustrates an example operation of the augmentation algorithm during time slot $t$ in a $3\times4$ grid topology. Nodes are dots, and solid lines are links. The previous schedule $\schedule_{t-1}$ is marked by thick solid lines in Fig.~\ref{fig:augmenting}(a)-(e). At the beginning ($\tau=1$), each node selects itself as a seed with probability $p$. In this example, three nodes are selected and become an active as marked by (white) numbered circles in Fig.~\ref{fig:augmenting}$(a)$. Active nodes $1$ and $2$ have to start its augmentation with the previous scheduled link, while active node $3$ selects one of three neighboring nodes at random. The solid arrow denotes the link selected by the active node. Once selected, the nodes exchange the necessary information. 
In the next mini-slot ($\tau=2$), active nodes change as shown in Fig.~\ref{fig:augmenting}$(b)$. Narrow dotted arrows denote the augmentation up to now. The active nodes continue to build the augmentation repeating the procedure, until the augmentation cannot be extended or it reaches the maximum size.
%In the second mini-slot, current active nodes of $1$ and $2$ will randomly select one of its neighboring node as next active nodes, and current active node of $3$ follows the previously scheduled link and selects the next active node in a deterministic manner, as shown by thick short arrows. Then they exchange REQ/ACK and active nodes change in the next mini-slot. 
%The procedure repeats until $2k+1$-th mini-slot, or the augmentation cannot be extended.
%\footnote{If active node $3$ in Fig.~\ref{fig:augmenting}(a) selects the upward node as next active node, then new active node cannot find any link in $\schedule_{t-1}$, and terminates the augmentation.}. 
%
In the meantime, if two augmentations collide as shown by active nodes $2$ and $3$ in Fig.~\ref{fig:augmenting}$(b)$, both augmentations terminate. The node at the collision point will belong to the augmentation that follows a link in $\schedule_{t-1}$, as shown in Fig.~\ref{fig:augmenting}(c). The terminus nodes are marked by solid (black) number circles.
The result after $2k+1$-th mini-slots is shown in Fig.~\ref{fig:augmenting}(e), where each of three augmentations is marked by a dotted enclosure. The number of links in the augmentations denote their weight. 
Then after checking a cycle, the back-propagating stage follows. Each terminus makes the final decision by comparing the weight sum as in~\eqref{eq:gain}. The decision propagates backward through the augmentation, and leads to new schedule $\schedule_{t}$ as shown in Fig.~\ref{fig:augmenting}(f).

By using the aforementioned augmentation algorithm in matching selection of line 8 in Algorithm~\ref{alg:generic_ucb}, we complete A$^k$-UCB scheme. However, evaluating the performance of A$^k$-UCB is not straightforward. Since the scheduler do not know the true link rate $\mu_i$ and instead use the experience-based UCB index $\bar{w}_{i,t}$, the previous analysis of~\cite{bui_ton09} for scheduling performance is not applicable due to inaccurate link rate information. Also for learning performance, considering A$^k$-UCB as an $(\alpha,\beta)$-oracle is not much helpful due to the fact that probability $\beta$ can be arbitrarily small as the network scales up.
%On the other hand, the augmentation algorithm can serve as an $(\alpha, \beta)$-approximation oracle that takes weight $\vmweight$ as input, and outputs matching $S$ such that $P\{ S \in \nopts_{\vmweight} \} > \beta$ with $\alpha = \alpk$. By adopting this oracle, one can develop a learning algorithm that achieves the logarithmic growth of $\alpha\beta$-regret with $\alpha = \alpk$~\cite{chen_icml2013}. However, as the network scales, the probability~$\beta$ of the augmentation algorithm can be very small (see Lemma~\ref{lem:probability}), leading to a performance bound that is not much meaningful. 

Our main contribution is to analytically characterize the performance of our joint learning and scheduling scheme A$^k$-UCB, and to show that it can achieves the rate-optimal logarithmic growth of $\alpk$-regret regardless of the network size in the learning, and further it has the close-to-optimal stability region that equals $\alpk \Lambda$ in the scheduling.

\section{Performance Evaluation} \label{sec:evaluation}

We first consider the regret performance of A$^k$-UCB in a single frame, and then evaluate its scheduling performance across frames.

\subsection{Regret performance in a single frame}

We show that A$^k$-UCB has distribution-dependent upper bound of $O(\log \framel)$ on the regret in a frame of length~$\framel$. 
\iftoggle{tech_report}
{
    We start with the following lemmas. 
    \begin{lemma} \label{lem:possibility}
        Given matching $\schedule_{t-1}$, weight $\viweight_t$, and a fixed $k > 0$,
        the augmentation algorithm can generate a set $\augstar$ of disjoint augmentations of $\schedule_{t-1}$ such that 
        \begin{equation*} \textstyle
            %\totalmw_{\vmweight}( \schedule_{t-1}\oplus\augstar) \geq \left( \alpk \right) \cdot \maxtotalmw_{\vmweight},
            \totalmw_{\viweight_t}( \schedule_{t-1}\oplus\augstar) \geq \left( \alpk \right) \cdot \maxtotalmw_{\viweight_t},
        \end{equation*}
        and every augmentation $A$ in $\augstar$ has a size no greater than $k$. 
    \end{lemma}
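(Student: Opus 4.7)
The plan is to prove this by explicit construction. Let $w(\cdot)$ abbreviate the cumulative $\viweight_t$-weight $\totalmw_{\viweight_t}(\cdot)$ throughout. I would start with an optimal matching $S^*\in\setOpts_{\viweight_t}$ and build a good $\augstar$ from the structure of $S^*\triangle\schedule_{t-1}$. Since both $S^*$ and $\schedule_{t-1}$ are matchings, every vertex has degree at most two in this symmetric difference, so it partitions into node-disjoint alternating paths and cycles. Each such component $C$ is, by construction, an augmentation of $\schedule_{t-1}$ in the sense of Section~\ref{sec:algorithm}: applying it replaces the $\schedule_{t-1}$-edges of $C$ by the $S^*$-edges of $C$.

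For every component $C$ whose new-edge count $|C\cap S^*|$ is at most $k$, I would put $C$ itself into $\augstar$; its size is already within budget. The delicate step is handling a long component with $m>k$ new edges $e_1,\ldots,e_m$ (interleaved with $\schedule_{t-1}$-edges). Here I would cut $C$ into shorter sub-augmentations of size at most $k$ by omitting a subset $I$ of its new edges so that between consecutive omissions lie at most $k$ kept edges, each consecutive run defining a valid alternating sub-augmentation of size $\le k$. To bound the weight lost to $I$, I would use the standard offset-averaging trick: among the $k+1$ candidate sets $I_\phi=\{i:i\equiv\phi\pmod{k+1}\}$, each of which is a valid cut pattern, summing the omitted weight over $\phi$ counts every $e_i$ exactly once, so some $\phi$ achieves omitted weight at most $w(C\cap S^*)/(k+1)$.

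Next I would verify: (a) the produced set is pairwise disjoint in the paper's sense, because two sub-augmentations from one component are separated by an omitted edge so their new-edge parts share no node, while sub-augmentations from different components are node-disjoint by construction of $S^*\triangle\schedule_{t-1}$; (b) every augmentation has size at most $k$ by construction; and (c) applying the whole $\augstar$ yields the matching
\[
w\bigl(\schedule_{t-1}\oplus\augstar\bigr) = w(S^*\cap\schedule_{t-1})+\sum_{C}\bigl(w(C\cap S^*)-w(\text{omitted in }C)\bigr),
\]
which by the previous step is at least $\tfrac{k}{k+1}w(S^*)=\tfrac{k}{k+1}\maxtotalmw_{\viweight_t}$, comfortably exceeding the required $\tfrac{k-1}{k+1}\maxtotalmw_{\viweight_t}$. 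The last piece is to argue that this specific $\augstar$ is reachable by the distributed algorithm: the required seed set arises with positive probability $p^{|\text{seeds}|}(1-p)^{|\setNode|-|\text{seeds}|}$, and every subsequent random choice (each $\bar Z_v\in[k]$, each random neighbor selection) independently has positive probability of matching the construction, so the joint probability of generating exactly this $\augstar$ is strictly positive.

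The main obstacle I anticipate is the endpoint bookkeeping rather than the weight estimate: for path components one must check what happens at the two ends of the path (where the first or last sub-augmentation may not include a boundary $\schedule_{t-1}$-edge), and one must ensure that the algorithm's rule ``a seed with a $\schedule_{t-1}$-edge must start with that edge'' is consistent with placing seeds at the endpoints dictated by the construction. Handling cycle components cleanly is a second subtlety, since a cycle has no natural endpoint and the first omission in $I_\phi$ is what converts the cycle into a path that the algorithm can grow outward.
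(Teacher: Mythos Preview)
Your construction via the symmetric difference $S^*\triangle\schedule_{t-1}$ and the offset-averaging cut is exactly the paper's route. The genuine gap is the cycle case: your assertion that every $I_\phi$ is a valid cut pattern fails there. For a cycle component with $m=k+2$ new edges indexed $0,\dots,k+1$, the set $I_\phi=\{i:i\equiv\phi\pmod{k+1}\}$ is the singleton $\{\phi\}$ for each $\phi\in\{1,\dots,k\}$; omitting a single new edge from a cycle leaves one arc containing $k+1$ new edges, violating the size budget. Thus only one of the $k+1$ offsets is admissible, the averaging argument collapses, and the asserted $\tfrac{k}{k+1}$ bound is unjustified for cycle components. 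You flagged cycles as a subtlety, but this is precisely where your argument breaks, not merely a bookkeeping nuisance.

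The paper's fix (its Lemma~\ref{lem:cycle}) is to first delete the minimum-weight new edge of the cycle, which costs at most $w(C\cap S^*)/m\le w(C\cap S^*)/(k+1)$ since $m\ge k+1$, and only then apply the path argument (its Lemma~\ref{lem:path}) to the resulting path of $m-1$ new edges. This compounds two $\tfrac{k}{k+1}$ factors to $(\tfrac{k}{k+1})^2\ge\tfrac{k-1}{k+1}$, and that second loss is exactly why the lemma states $\tfrac{k-1}{k+1}$ rather than the $\tfrac{k}{k+1}$ you reached for paths. Your endpoint concerns and the positive-probability reachability sketch are otherwise on target; the paper in fact separates the latter out as Lemma~\ref{lem:probability}.
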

    %Lemma \ref{lem:possibility} also holds with any weight, including $\vmweight$.
}
{
    We start with the following lemma.
    %\begin{lemma} \label{lem:possibility}
    %    Given matching $\schedule_{t-1}$, weight $\viweight_t$, and a fixed $k > 0$,    the augmentation algorithm can generate a set $\augstar$ of disjoint    augmentations of $\schedule_{t-1}$ such that 
    %    \begin{equation*} \textstyle
            %\totalmw_{\vmweight}( \schedule_{t-1}\oplus\augstar) \geq \left( \alpk \right) \cdot \maxtotalmw_{\vmweight},
    %        \totalmw_{\viweight_t}( \schedule_{t-1}\oplus\augstar) \geq \left( \alpk \right) \cdot \maxtotalmw_{\viweight_t},
    %    \end{equation*}
    %    and every augmentation $A$ in $\augstar$ has a size no greater than $k$.
    %\end{lemma} 
    %Lemma~\ref{lem:possibility} also holds when using $\vmweight$ instead to $\viweight_t$.
%    We start with the following proposition, which states that the augmentation algorithm is an $(\alpha,\beta)$-approximation oracle with $\alpha = \alpk$ and $\beta = \delta$ for some $\delta > 0$. The proof follows the same line of analysis of~\cite{bui_ton09}, and can be found in~\cite{tr}.
}
\begin{lemma}\label{lem:probability} 
Given any $\schedule_{t-1}$, weight $\viweight_t$, and a fixed $k > 0$, there exists $\delta > 0$ such that, with probability at least $\delta$, the augmentation algorithm generates a set $\augstar$ of disjoint augmentations that satisfies $(\schedule_{t-1} \oplus \mathcal{A}^*) \in \nopts_{\viweight_t}$, i.e., $\Pr \{ (\schedule_{t-1} \oplus \mathcal{A}^*) \in \nopts_{\viweight_t}\} \ge \delta$, or equivalently,
    \begin{equation} \label{eq:alpha-delta} \textstyle 
        \Pr \left\{ 
            %\totalmw_{\vmweight}(\schedule_{t-1} \oplus \augstar) \geq \left( \alpk \right) \cdot \maxtotalmw_{\vmweight}
            \totalmw_{\viweight_t}(\schedule_{t-1} \oplus \augstar) \geq \alpk \cdot \maxtotalmw_{\viweight_t}
            \right\} \geq \delta,
    \end{equation}
    where $\delta \ge \min \{ 1, ( \frac{p}{1-p} )^{|\setNode|} \} \cdot ( \frac{1-p}{k\Sigma})^{|\setNode|}$, $|\setNode|$ is the number of nodes, and $\Sigma$ is the maximum node degree.
\end{lemma}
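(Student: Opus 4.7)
The plan is to reduce Lemma~\ref{lem:probability} to an auxiliary existence result (essentially Lemma~\ref{lem:possibility} from the technical report version) guaranteeing that there exists a set $\augstar$ of disjoint augmentations of $\schedule_{t-1}$, each of size at most $k$, with $\totalmw_{\viweight_t}(\schedule_{t-1} \oplus \augstar) \ge \alpk \cdot \maxtotalmw_{\viweight_t}$. Once this target configuration is fixed, the task becomes purely probabilistic: I just need to lower bound the probability that the randomized augmentation algorithm reproduces \emph{this particular} $\augstar$ (or a set at least as good), since any such outcome lies in $\nopts_{\viweight_t}$.

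The key is to identify the independent random choices of the algorithm that together determine the output, and to force each one to match $\augstar$. First, designate a subset $\setS_{\text{seed}} \subseteq \setNode$ of target seed nodes (one per augmentation in $\augstar$). The event that the algorithm reproduces $\augstar$ is implied by: (i) every $v \in \setS_{\text{seed}}$ sets $seed_v = 1$ (each with probability $p$); (ii) every $v \in \setNode \setminus \setS_{\text{seed}}$ sets $seed_v = 0$ (each with probability $1-p$); (iii) every seed draws the correct maximum size $\bar{Z}_v$ from $[1,k]$ (each with probability $1/k$); and (iv) every uniform link selection made during the augmenting stage picks the neighbor prescribed by $\augstar$ (each with probability at least $1/\Sigma$). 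Under (i)--(ii), no spurious seed initiates a competing augmentation, so no REQ/ACK collisions arise with the constructions in $\augstar$; under (iii)--(iv), each augmentation grows exactly along the links dictated by $\augstar$ until its target size is reached.

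Since these choices are mutually independent, the probability of the joint event factorizes. Letting $n_s = |\setS_{\text{seed}}|$ and noting that the number of random link selections is bounded by $|\setNode|$, the seed-selection contribution is $p^{n_s}(1-p)^{|\setNode|-n_s} = (1-p)^{|\setNode|}(p/(1-p))^{n_s}$. To bound uniformly over the unknown $n_s \in [0,|\setNode|]$, split on whether $p/(1-p) \ge 1$ or $<1$, yielding the factor $(1-p)^{|\setNode|} \cdot \min\{1, (p/(1-p))^{|\setNode|}\}$. Multiplying by $(1/(k\Sigma))^{|\setNode|}$ from (iii)--(iv) (using $1/k, 1/\Sigma \le 1$ to absorb the actual counts into a uniform $|\setNode|$ exponent) gives exactly the stated $\delta$.

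The main obstacle is step (iv): rigorously verifying that constraining the random neighbor selections suffices to trace out $\augstar$ link by link, given the subtle interplay between when $G_v$/$A_v$ are updated versus sent (see line~16 of Algorithm~\ref{alg:path-aug}) and the rules governing when a seed or an active node must follow a scheduled link in $\schedule_{t-1}$ rather than choosing randomly. I would handle this by inducting on mini-slot $\tau$: at each step, the current active node $v$ either deterministically follows a matched link in $\schedule_{t-1}\cap\mathcal{N}(v)$ (no probability loss) or uniformly samples from $\mathcal{N}(v)\setminus A_v$ (contributing one $1/\Sigma$ factor), and in both cases the induction hypothesis keeps $A_v$ aligned with the corresponding augmentation in $\augstar$ until its size reaches the chosen $\bar{Z}_v$. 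The disjointness of augmentations in $\augstar$ ensures these per-augmentation events are probabilistically independent, so the product bound is valid.
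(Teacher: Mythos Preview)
Your proposal is correct and follows essentially the same approach as the paper's proof: fix the target configuration $\augstar$ from the existence lemma, then lower bound the probability of the joint event that (i) exactly the right seed nodes self-select, (ii) each chooses the right $\bar{Z}_v$, and (iii) every random neighbor selection follows $\augstar$, giving $p^{n_s}(1-p)^{|\setNode|-n_s} \cdot k^{-n_s} \cdot \Sigma^{-|\setNode|}$ and then minimizing over $n_s$. Your treatment of step~(iv) via induction on mini-slots is in fact more careful than the paper's, which simply asserts that at most $|\setNode|$ neighbor selections occur, each with probability at least $\Sigma^{-1}$.
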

\noindent Lemma~\ref{lem:probability} means that the augmentation algorithm is an $(\alpha,\beta)$-approximation oracle with $\alpha = \alpk$ and $\beta > 0$, where $\beta$ can be arbitrarily small according to the network size.
\iftoggle{tech_report}
{
    The proofs of the lemmas follow the same line of analysis of~\cite{bui_ton09} and can be found in Appendices~\ref{app:A} and~\ref{app:B}, respectively.
}
{
    The proof follows the same line of analysis of~\cite{bui_ton09} and thus omitted. For the completeness, we provide the proof in~\cite{tr}.
}

Next, we need a generalized version of the decomposition inequality for $\alpha$-regret. From (\ref{eq:a-regret}), we have
\begin{equation} \label{eq:decomposition}
\begin{split}
    \alpR(t) 
    &\textstyle = t \cdot \alpha \cdot \maxtotalmw_{\vmweight} 
        - \mathbb{E}\left[ \sum_{\tau=1}^t \totalmw_{\vmweight}(\schedule_{\tau}) \right]\\
    %&= \sum_{{\tau}=1}^t \left[ \alpha\cdot \maxtotalmw_{\vmweight}
    %    - \mathbb{E} [ \totalmw_{\vmweight}(\schedule_{\tau}) ] \right]\\
    &\hspace{-0.5cm}\textstyle = \sum_{{\tau}=1}^t \sum_{S \in \setS}
        \mathbb{E}\left[ \mathbb{I} \{ \schedule_{\tau}=S \} \cdot ( \alpha \maxtotalmw_{\vmweight} - \totalmw_{\vmweight}(S) ) \right]\\
    %&= \sum_{S \in \setS} \mathbb{E} 
    %    \left[ \sum_{{\tau}=1}^t\mathbb{I}(\schedule_{\tau}=S) \right] \cdot ( \alpha\cdot \maxtotalmw_{\vmweight} - \totalmw_{\vmweight}(S) )\\
    &\hspace{-0.5cm}\textstyle \leq \sum_{S \in \setS} \mathbb{E} [ \nochosen_S(t) ]
        \cdot\alpgapmax,
\end{split}
\end{equation}
where $\alpgapmax=\alpha\cdot \maxtotalmw_{\vmweight}-\min_{S\in\cnopts} \totalmw_{\vmweight}(S)$ is the maximum near-optimal gap. Similarly we define the minimum near-optimal gap $\alpgapmin=\alpha\cdot \maxtotalmw_{\vmweight}-\max_{S\in\cnopts} \totalmw_{\vmweight}(S)$.

%Since we use the augmentation algorithm with UCB index $\viweight_t$ instead of true weight $\vmweight$, which is unknown a priori, we need a new technical approach different from~\cite{bui_ton09}. Recall that 
The following lemma ensures that, if a non-near-optimal matching in $\cnopts_{\vmweight}$ is played many times, then its index sum is smaller than that of any near-optimal matching in $\nopts_{\vmweight}$. 
\begin{lemma} \label{lem:nearoptimal}
Given a frame of length $\framel$, if a non-near-optimal matching $S \in \cnopts_{\vmweight}$ is played more than $l_T=\lceil \frac{ 4\noLink^2(\noLink+1) \ln T }{ \alpgapmin } \rceil$ times by $t$-th time slot in the frame, then the probability that the total sum of UCB indices over $S$ at time slot $t$ is greater than that over any near-optimal matching $S' \in \nopts_{\vmweight}$ is bounded by
\begin{equation}
\Pr \left\{ \totalmw_{\viweight_t}(S)  \geq \totalmw_{\viweight_t}(S') \right\} 
    \leq 2\noLink t^{-2},
\end{equation}
for all $t \leq \framel$ such that $\nochosen_S(t)\geq l_\framel$.
\end{lemma}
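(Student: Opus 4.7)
The plan is to adapt the standard CUCB-style argument (Auer/Chen \emph{et al.}) to the weighted rewards inside a single frame. Abbreviate $c_{t,s} := \sqrt{(\noLink+1)\ln t/s}$, so that $\iweight_{i,t} = \sweight_i(t-1) + c_{t,\nochosen_i(t-1)}$. The event $\{\totalmw_{\viweight_t}(S) \geq \totalmw_{\viweight_t}(S')\}$ is contained in the union of three sub-events: (a) some $i \in S$ satisfies $\sweight_i(t-1) \geq \mweight_i + c_{t,\nochosen_i(t-1)}$, i.e., an arm of the non-near-optimal $S$ overestimates; (b) some $i \in S'$ satisfies $\sweight_i(t-1) \leq \mweight_i - c_{t,\nochosen_i(t-1)}$, i.e., an arm of the near-optimal $S'$ underestimates; (c) neither (a) nor (b) holds, yet the index-sum inequality still holds.

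I would first argue deterministically that (c) is impossible whenever $\nochosen_S(t)\geq l_T$. When neither (a) nor (b) occurs, $\totalmw_{\viweight_t}(S) \leq \totalmw_{\vmweight}(S)+2\sum_{i\in S} c_{t,\nochosen_i(t-1)}$ and $\totalmw_{\viweight_t}(S') \geq \totalmw_{\vmweight}(S')$; combining with the index-sum inequality and the minimum near-optimal gap $\totalmw_{\vmweight}(S') - \totalmw_{\vmweight}(S) \geq \alpgapmin$ forces $2\sum_{i\in S} c_{t,\nochosen_i(t-1)} \geq \alpgapmin$. Because every play of $S$ increments $\nochosen_i$ for each $i \in S$, we have $\nochosen_i(t-1)\geq \nochosen_S(t)\geq l_T$ for all $i\in S$; together with $|S|\leq \noLink$ and the calibration of $l_T$ (which, for the arithmetic to close, must be of the order $4\noLink^2(\noLink+1)\ln T/\alpgapmin^2$ so that $2\noLink\, c_{t,l_T}<\alpgapmin$), this contradicts the inequality just derived.

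For the probabilistic part, I would bound $\Pr(\text{a})+\Pr(\text{b})$ via Hoeffding. Since $\sweight_i(t-1)$ is the empirical mean of $\nochosen_i(t-1)$ i.i.d.\ copies of $(q_i/\maxq)\,\serv_i(\cdot) \in [0,q_i/\maxq]\subseteq[0,1]$ with mean $\mweight_i$, Hoeffding's inequality gives, for any fixed $s$,
\[
\Pr\{\sweight_i(t-1) - \mweight_i \geq c_{t,s},\ \nochosen_i(t-1)=s\}\leq \exp(-2(\noLink+1)\ln t)= t^{-2(\noLink+1)},
\]
and symmetrically on the lower tail. Union-bounding over the at most $t-1$ possible values of $\nochosen_i(t-1)$ and over the at most $\noLink$ arms in $S$ (resp.\ in $S'$) yields $\Pr(\text{a}),\Pr(\text{b})\leq \noLink\cdot t\cdot t^{-2(\noLink+1)}\leq \noLink\,t^{-2}$ for $\noLink\geq 1$. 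Summing the two contributions gives the claimed $2\noLink\,t^{-2}$.

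The main obstacle is the deterministic step: pinning down the calibration of $l_T$ so that the combined confidence radius shrinks strictly below $\alpgapmin$. Once that arithmetic is fixed, the rest is a routine Hoeffding union bound that exploits the boundedness of the scaled rewards $(q_i/\maxq)\serv_i$ in $[0,1]$ and the fact that $q_i/\maxq$ is deterministic within a frame, so the observations of each arm remain i.i.d.\ with the right mean $\mweight_i$.
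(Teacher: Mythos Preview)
Your proposal is correct and follows the same route as the paper: a three-event decomposition (over-estimation of some arm in $S$, under-estimation of some arm in $S'$, and a residual ``gap'' event), a deterministic argument that the gap event is vacuous once every $\nochosen_i(t-1)\ge l_T$, and Chernoff--Hoeffding plus a union bound over the possible arm counts---the paper unions jointly over all $(c_1,\dots,c_A,c'_1,\dots,c'_B)$, picking up a factor $t^{A+B}\le t^{2\noLink}$, whereas you union per arm, but both arrive at $2\noLink t^{-2}$. Your parenthetical that the arithmetic requires $l_T$ of order $4\noLink^2(\noLink+1)\ln T/(\alpgapmin)^2$ is exactly what the paper's proof uses; the missing square on $\alpgapmin$ in the lemma statement is a typo there.
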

\noindent We emphasize that $\nopts_{\vmweight}$ and $\cnopts_{\vmweight}$ are defined with true weight $\vmweight$, while the matching comparison is based on UCB index $\viweight_t$. The lemma shows that the augmentation algorithm may still work well, even when the true weight is replaced with the UCB index. The proof of the lemma is analogous to Lemma $A.1$ of~\cite{kang_tmc20} but has some differences due to 'near-optimality'.
\iftoggle{tech_report}
{
    It can be found in Appendix~\ref{app:C}.
}
{
    It can be found in~\cite{tr}.
}

One of our main results, the regret bound of A$^k$-UCB, can be obtained as follows.
\begin{proposition} \label{pro:regretbound}
For a network graph $\mathcal{G}=(\setNode,\setLink)$, A$^k$-UCB achieves the regret performance bound of
\begin{equation*}
\begin{split}
    \alpR(t) 
        &\textstyle \leq \alpgapmax \left[ D_1 
        \cdot \frac{\log t}{(\alpgapmin)^2} + D_2
        %\frac{4\noLink^2(\noLink+1)\ln t}{(\alpgapmin)^2} \cdot \left( \vert \setS \vert-1 \right)
        \right],
        %&~~~+ \alpgapmax \cdot (D_1 + D_2),
\end{split}
\end{equation*}
for all $t \in \{1,...,\framel\}$, where 
%$p$ and $k$ denote the seed probability and the maximum augmentation length of the augmentation algorithm, respectively. Further, 
$D_1 = \left( 1 + \frac{1}{\delta} \right) \cdot 4\noLink^2(\noLink+1) \cdot \left( \vert \setS \vert-1 \right)$, 
$D_2 = \frac{\vert\setS\vert-1}{\delta} (1+ \frac{\noLink\delta\pi^2}{3} + \frac{\noLink(\vert\setS\vert-2)\pi^2}{6} ) + \frac{1-\delta}{\delta} + \frac{2\noLink\pi^2}{3\delta}$,
$\alpha = \alpk$, and $\delta = \min \{ 1,( \frac{p}{1-p} )^{|\setNode|} \} \cdot ( \frac{1-p}{k\Sigma} )^{|\setNode|}$.
\end{proposition}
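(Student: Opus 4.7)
The proof starts from the decomposition inequality (\ref{eq:decomposition}), which reduces the task of bounding $\alpR(t)$ to controlling $\sum_{S \in \cnopts_{\vmweight}} \mathbb{E}[\nochosen_S(t)]$, the expected total number of plays of non-near-optimal matchings. For each $S \in \cnopts_{\vmweight}$, I would split $\mathbb{E}[\nochosen_S(t)]$ into an ``early'' contribution from time slots with $\nochosen_S(\tau) < l_T$ (at most $l_T$ plays per matching) and a ``late'' contribution from time slots with $\nochosen_S(\tau) \geq l_T$. Summed over $\cnopts_{\vmweight}$, the early part directly yields the $\log t/(\alpgapmin)^2$ term and accounts for the combinatorial factor $4\noLink^2(\noLink+1)(|\setS|-1)$ appearing in $D_1$.

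For the late contribution, I would condition at each time $\tau$ on whether the augmentation algorithm's output lies in $\nopts_{\viweight_\tau}$ (``oracle success,'' probability at least $\delta$ by Lemma~\ref{lem:probability}) or in $\cnopts_{\viweight_\tau}$ (``oracle failure,'' probability at most $1-\delta$). In the success sub-case, if $\schedule_\tau = S \in \cnopts_{\vmweight}$ then $\totalmw_{\viweight_\tau}(S) \geq \alpk \cdot \maxtotalmw_{\viweight_\tau} \geq \totalmw_{\viweight_\tau}(S')$ for any fixed $S' \in \nopts_{\vmweight}$, and applying Lemma~\ref{lem:nearoptimal} with $\nochosen_S(\tau)\geq l_T$ gives $O(\noLink/\tau^2)$ per time slot; summing via $\sum_\tau 1/\tau^2 = \pi^2/6$ produces the several $\pi^2$ contributions in $D_2$.

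The main obstacle is the failure sub-case: the naive bound of $1-\delta$ per time slot, summed over $\tau$, is linear in $t$ and thus too weak. I plan to handle it by a geometric-amortization argument exploiting that the success probability is bounded below by $\delta$ \emph{uniformly over the history} $\mathcal{F}_{\tau-1}$ (this is what Lemma~\ref{lem:probability} actually delivers). Conceptually, I group slots into ``rounds'' that terminate at the next oracle success; each round has geometric length with mean $1/\delta$, so the expected number of failure slots charged to each success slot is $(1-\delta)/\delta$, producing the $(1-\delta)/\delta$ constant in $D_2$ and inflating the success-case $O(\log t)$ bound by a multiplicative $1/\delta$ (yielding the $(1+1/\delta)$ factor in $D_1$). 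Assembling the three contributions --- the $l_T$ threshold, the Lemma~\ref{lem:nearoptimal} tail, and the failure-to-success amortization --- across all $S \in \cnopts_{\vmweight}$ should reproduce $D_1$ and $D_2$ exactly as stated. The delicate step is justifying the ``independent rounds'' abstraction: because the oracle's randomness is correlated with $\viweight_\tau$ and $\schedule_{\tau-1}$, the rounds are not genuinely independent, and the coupling must instead be set up as a stopping-time argument against the underlying filtration using only the uniform lower bound $\delta$ on conditional success.
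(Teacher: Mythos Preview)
Your high-level architecture --- reduce via the decomposition inequality, split at the threshold $l_T$, and use a geometric $(1-\delta)$-contraction to absorb the ``failure'' slots --- matches the paper's. The amortization you describe is essentially the recursion $\Theta_n(t) \le (1-\delta)\Theta_n(t-1) + \cdots$ that the paper unwinds in Lemma~\ref{lem:theta}. However, your ``success sub-case'' argument has a genuine gap.

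You condition on oracle success meaning $\schedule_\tau \in \nopts_{\viweight_\tau}$, and then write the chain
\[
\totalmw_{\viweight_\tau}(S) \;\ge\; \alpk \cdot \maxtotalmw_{\viweight_\tau} \;\ge\; \totalmw_{\viweight_\tau}(S')
\quad\text{for } S' \in \nopts_{\vmweight}.
\]
The second inequality is false in general: $S'$ being near-optimal for the \emph{true} weights $\vmweight$ says nothing about where $\totalmw_{\viweight_\tau}(S')$ sits relative to $\alpha\,\maxtotalmw_{\viweight_\tau}$; all you know is $\totalmw_{\viweight_\tau}(S') \le \maxtotalmw_{\viweight_\tau}$. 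So you cannot feed this into Lemma~\ref{lem:nearoptimal}, which requires the direct comparison $\totalmw_{\viweight_t}(S) \ge \totalmw_{\viweight_t}(S')$ between a sufficiently-played $S \in \cnopts_{\vmweight}$ and a specific $S' \in \nopts_{\vmweight}$.

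The paper fixes this by conditioning not on abstract ``oracle success'' but on the \emph{previous} schedule $\schedule_{t-1}$, exploiting that A$^k$-UCB's decision is always a head-to-head UCB-index comparison between exactly two matchings, $\schedule_{t-1}$ and $\schedule_{t-1}\oplus\mathcal{A}_t$. Three sub-cases arise: (i) if $\schedule_{t-1}\in\nopts_{\vmweight}$ and we moved to $S\in\overline{\setS}(t^n)$, then the algorithm judged $\totalmw_{\viweight_t}(S)\ge\totalmw_{\viweight_t}(\schedule_{t-1})$ and Lemma~\ref{lem:nearoptimal} applies with $S'=\schedule_{t-1}$; (ii) if $\schedule_{t-1}\in\overline{\setS}(t^n)$, then Lemma~\ref{lem:probability} (applied with the true weights $\vmweight$, which is legitimate because the augmentation generation is weight-agnostic) gives $\schedule_{t-1}\oplus\mathcal{A}_t \in \nopts_{\vmweight}$ with probability $\ge\delta$, and on that event staying at $\schedule_{t-1}$ again triggers Lemma~\ref{lem:nearoptimal}; (iii) the remaining case $\schedule_{t-1}\in\underline{\setS}(t^n)$ is charged separately and handled by the global budget $l_h$ per matching. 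The resulting one-step inequality is what drives the recursion you correctly anticipated. To repair your plan, replace the oracle-success conditioning by this transition-based conditioning on $\schedule_{t-1}$.
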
 
\noindent It shows that A$^k$-UCB achieves the logarithmic growth $O(\log \framel)$ of $\alpha$-regret.
Note that although the result is somewhat similar to those in~\cite{chen_icml2013, gai_ton2012}, their proof techniques are not applicable. Suppose that at time slot $t$, A$^k$-UCB randomly generates a set $\mathcal{A}_t$ of augmentations based on the previous schedule $\schedule_{t-1}$, and $\mathcal{A}_t$ consists of a single augmentation for the ease of exposition. It is possible that both the matchings, previous schedule $\schedule_{t-1}$ and schedule $\schedule_{t-1} \oplus \mathcal{A}_t$ generated by the augmentation algorithm, are non-near-optimal. This implies that we cannot ensure that the index sum of the chosen schedule (i.e., either $\schedule_{t-1}$ or $\schedule_{t-1} \oplus \mathcal{A}_t$) is greater than $\alpha \maxtotalmw_{\viweight_t}$, because the index-sum comparison is done only between the two non-near-optimal matchings. This, the lack of comparison with the optimal matching (or near-optimal matching in our case) at every time slot, makes the previous regret analysis technique non-applicable. 
\iftoggle{tech_report}
{ 
    We successfully address the difficulties by grouping the plays of non-near-optimal matchings. The detailed proof can be found in Appendix~\ref{app:D}.
}
{
    We successfully address the difficulties by grouping the plays of non-near-optimal matchings.
    
%%%%%%%%%%%%%%%%%%%%%%%%%%%%
% app:D
%\section{Proof of Proposition~\ref{pro:regretbound}} \label{app:D}

Overall, we show that the number of explorations to non-near-optimal matchings is bounded. To this end, we consider a sequence of time points where a non-near-optimal matching is sufficiently played at each point. They serve as a foothold to count the total number of plays of non-near-optimal matchings.

To begin with, for an arbitrary fixed time $h>0$, let $l_h = \lceil \frac{4\noLink^2(\noLink+1)\ln h}{(\alpgapmin)^2} \rceil$, and let $\hat{t}_h$ denote the first time when all non-near-optimal matchings are sufficiently (i.e., more than $l_h$ times) explored, i.e., 
\begin{align*}
\hat{t}_h = \min \left\{ t ~\middle|~ 
    \nochosen_S(t) \geq  l_h ~\text{for all}~ S \in \cnopts_{\vmweight} \right\}. 
\end{align*}

\noindent \textbf{(1) When} $\hat{t}_h \leq h :$ 
Let $\cnopts_{\vmweight} = \{S^1,S^2,...,S^M\}$ with $M = \vert \cnopts_{\vmweight} \vert$. Further we define $\overline{\setS}(t) = \{ S \in \cnopts_{\vmweight} ~|~ \nochosen_S(t) \geq l_h\}$, which is the set of non-near-optimal matchings that are scheduled sufficiently many times by time $t$, and $\underline{\setS}(t) = \cnopts_{\vmweight} - \overline{\setS}(t)$ denotes the set of not-yet-sufficiently-scheduled non-near-optimal matchings.  
Also, let $t^n$ denote the time when matching $S^n$ is sufficiently scheduled, i.e., $\nochosen_{S^n}(t^n) = l_h$. Without loss of generality, we assume $t^1 < t^2 < \dots < t^M = \hat{t}_h$.

To apply the decomposition inequality (\ref{eq:decomposition}), we need to estimate the expected value of $\sum_{S \in \cnopts_{\vmweight}} \nochosen_S(\hat{t}_h)$, which can be written as
\begin{equation} \label{eq:tau_samplepath}
\begin{split}
&\textstyle \sum_{S \in \cnopts_{\vmweight}} \nochosen_S(\hat{t}_h) 
    \textstyle= \sum_{S \in \cnopts_{\vmweight}} \sum_{t=1}^{\hat{t}_h} \mathbb{I}\{\schedule_t=S\} \\
    %&= \sum_{j=1}^M \sum_{t=1}^{\hat{t}_h} \mathbb{I}\{\schedule_t=S^j\} \\
    &\textstyle = l_h M + \sum_{n=1}^{M-1}\sum_{t=t^n+1}^{t^{n+1}} \sum_{S \in \overline{\setS}(t^n)} \mathbb{I}\{ \schedule_t =S\}.
    %&= l_h M + \sum_{n=1}^{M-1} \sum_{t=t^n+1}^{t^{n+1}} \sum_{j=1}^n  \mathbb{I}\{ \schedule_t =S^j \}.
\end{split}
\end{equation}
%
%\iftoggle{tech_report}
%{
    Hence, we need to estimate $\sum_{S \in \overline{\setS}(t^n)} \Pr \{ \schedule_t =S \}$ for $t \in (t^n, t^{n+1}]$, which can be obtained as in the following lemma.
%}

\begin{lemma} \label{lem:prob-for-sufficiently-played-matching}
For each $t \in (t^n, t^{n+1}]$, we have 
    \begin{equation} \label{eq:upper-bound-for-sufficiently-played-matching}
    \begin{split}
        &\textstyle \sum_{S \in \overline{\setS}(t^n)} \Pr \{ \schedule_t =S \}\\
            &\textstyle \leq (1-\delta) \cdot \sum_{S \in \overline{\setS} (t^n)} \Pr \{\schedule_{t-1}=S\} \\
            &\textstyle + \Pr \{\schedule_{t-1} \in \underline{\setS}(t^n) \} +\left( \vert \overline{\setS}(t^n) \vert + \delta \right) \cdot 2\noLink t^{-2}.
    \end{split}
    \end{equation}
\end{lemma}
\begin{proof}
We first divide the case into three exclusive sub-cases based on the previous schedule $\schedule_{t-1}$: 
%From $\overline{\setS}(t^n) \cup \underline{\setS}(t^n) \cup \nopts_{\vmweight} = \setS$, we define three 
events $\mathbb{A}=\{ \schedule_{t-1} \in \nopts_{\vmweight} \}$, $\mathbb{B}=\{\schedule_{t-1} \in \underline{\setS}(t^n)\}$, and $\mathbb{C}=\{\schedule_{t-1} \in \overline{\setS}(t^n)\}$. Then we have 
\begin{align}
&\textstyle \sum_{S \in \overline{\setS}(t^n)} \Pr \{ \schedule_t =S \} \nonumber \\
%&&\sum_{j=1}^n \Pr \{ \schedule_t = S^j \} \nonumber \\
    &= \textstyle \sum_{S \in \overline{\setS}(t^n)} \Pr \{ \schedule_t =S ~\vert~ \mathbb{A} \} \cdot \Pr \{ \mathbb{A} \} \label{eq:first}\\
    %&&~~= \sum_{j=1}^n \Pr \{ \schedule_t = S^j \vert \schedule_{t-1} \in \nopts_{\vmweight} \} \cdot \Pr (\schedule_{t-1} \in \nopts_{\vmweight}) \label{eq:first}\\
    &~~\textstyle + \sum_{S \in \overline{\setS} (t^n)} \Pr \{ \schedule_t =S ~\vert~ \mathbb{B}\} \cdot \Pr \{\mathbb{B}\} \label{eq:second} \\
    %&&~~~~+ \sum_{j=1}^n \Pr \{ \schedule_t = S^j \vert \schedule_{t-1} \in \underline{\setS}(t^n)\} \cdot \Pr \{\schedule_{t-1} \in \underline{\setS}(t^n)\} \label{eq:second} \\
    &~~\textstyle + \sum_{S \in \overline{\setS} (t^n)} \Pr \{ \schedule_t =S ~\vert~ \mathbb{C}\} \cdot \Pr \{\mathbb{C}\}. \label{eq:third}
    %&&~~~~+ \sum_{j=1}^n \Pr \{ \schedule_t = S^j \vert \schedule_{t-1} \in \overline{\setS}(t^n)\} \cdot \Pr \{\schedule_{t-1} \in \overline{\setS}(t^n)\}. \label{eq:third}
\end{align}
Let $\mathcal{A}_t$ denote the set of augmentations chosen under our algorithm at time $t$. We can obtain a bound on (\ref{eq:first}) as
\begin{align}
    &\textstyle\sum_{S \in \overline{\setS}(t^n)} \Pr \{ \schedule_t =S ~\vert~ \mathbb{A} \} \cdot \Pr \{\mathbb{A} \} \nonumber\\
    %&~~\textstyle\leq \sum_{S \in \overline{\setS}(t^n)} \Pr \{\schedule_{t-1} \in \nopts_{\vmweight}\} \cdot \Pr \{ S = \schedule_{t-1} \oplus \mathcal{A}_t \} \nonumber \\
    %&~~\textstyle\leq \sum_{S \in \overline{\setS}(t^n)} \Pr \{ \mathbb{A} \} \cdot \Pr \{ S = (\schedule_{t-1} \oplus \mathcal{A}_t) \} \nonumber \\
        %&~~~~~~~~~~~~ \textstyle \cdot \Pr \{ \totalmw_{\viweight_t}(S)  \geq \totalmw_{\viweight_t}(\schedule_{t-1}) ~|~ \schedule_{t-1} \in \nopts_{\vmweight} \} \nonumber \\
        %&~~~~~~~~~~~~~~~~ \textstyle \cdot \Pr \{ \totalmw_{\viweight_t}(S)  \geq \totalmw_{\viweight_t}(\schedule_{t-1}) ~|~ \mathbb{A} \} \nonumber \\
    %&~~\textstyle \leq \sum_{S \in \overline{\setS}(t^n)} \Pr \{\totalmw_{\viweight_t}(S)  \geq \totalmw_{\viweight_t}(\schedule_{t-1})~|~\schedule_{t-1} \in \nopts_{\vmweight} \} \nonumber\\
    &~~\textstyle \leq \sum_{S \in \overline{\setS}(t^n)} \Pr \{\totalmw_{\viweight_t}(S)  \geq \totalmw_{\viweight_t}(\schedule_{t-1})~|~\mathbb{A}\} \cdot \Pr \{ \mathbb{A} \} \nonumber\\
    &~~\leq \vert \overline{\setS}(t^n) \vert \cdot 2 \noLink t^{-2}, \label{eq:4}
\end{align}
where the last inequality comes from Lemma \ref{lem:nearoptimal}. The result holds for all $t \in (t^n,t^{n+1}]$. 
For the second term (\ref{eq:second}), we have
\begin{align}
\textstyle \sum_{S \in \overline{\setS} (t^n)} 
    %& \Pr \{ \schedule_t =S \vert \schedule_{t-1} \in \underline{\setS}(t^n)\} \cdot \Pr \{\schedule_{t-1} \in \underline{\setS}(t^n)\}\nonumber \\ 
    \Pr \{ \schedule_t =S ~\vert~ \mathbb{B} \} \cdot \Pr \{\mathbb{B} \}
    %&\leq \Pr \{\schedule_{t-1} \in \underline{\setS}(t^n) \}. \label{eq:5}
    \leq \Pr \{\mathbb{B}  \}. \label{eq:5}
\end{align}
Finally, the third term (\ref{eq:third}) denotes the probability to transit from a sufficiently-played non-near-optimal matching to a sufficiently-played non-near-optimal matching, and thus we have
\begin{align*}
    &\textstyle \sum_{S \in \overline{\setS} (t^n)} \Pr \{ \schedule_t = S ~\vert~ \mathbb{C} \} \cdot 
    \Pr \{ \mathbb{C} \} \\
        &\textstyle = \sum_{S \in \overline{\setS} (t^n)} \Pr \{ \schedule_t \in \overline{\setS}(t^n) \vert \schedule_{t-1}=S \} \cdot \Pr \{\schedule_{t-1}=S\}.
\end{align*}
%Then, for $\alpha = \alpk$ and $S \in \overline{\setS}(t^n)$, 
Letting $S' = S \oplus \mathcal{A}_t$ and using Lemma~\ref{lem:probability}, the conditional probability can be derived as
\[
\begin{split}
&\Pr \{ \schedule_t \in \overline{\setS}(t^n) \vert \schedule_{t-1}=S \}\\
    %&\le \Pr \{ \schedule_t \in \overline{\setS}(t^n) \vert \schedule_{t-1}=S, S' \in \nopts_{\vmweight} \} \cdot \Pr \{ S' \in \nopts_{\vmweight}\} \\
    %\\&~~~~~~+ 
    %\Pr \{ \schedule_t \in \overline{\setS}(t^n) \vert \schedule_{t-1}=S, S \oplus \mathcal{A}_t \in \cnopts_{\vmweight} \} \cdot 
    %&\hspace{6cm} + \Pr \{S' \in \cnopts_{\vmweight}\}\\
    &\leq \Pr \{ \schedule_t \in \overline{\setS}(t^n) \vert \schedule_{t-1}=S, S' \in \nopts_{\vmweight} \} \cdot \delta + (1-\delta)\\
    &= \Pr \{\totalmw_{\viweight_t}(S) \geq \totalmw_{\viweight_t}(S')\}\cdot \delta + (1-\delta)\\
    %&~= \Pr \left\{\sum_{e \in S} \ucbI_{e,t} \geq \sum_{e \in C_t}\ucbI_{e,t}\middle|~S \in \overline{\setS}(t^n),~ C_t\in \nopts_{\vmweight}\right\}\cdot \delta +1-\delta\\
    &\leq 2\noLink t^{-2}\cdot \delta + (1-\delta). 
\end{split}
\]
where %the second inequality comes from Lemma~\ref{lem:probability}, 
the equality holds since $\schedule_t$ should be $S$ (otherwise, $\schedule_t = (S \oplus \mathcal{A}_t) \notin \overline{\setS}(t^n)$) and thus $S$ should have the larger weight sum to be chosen by the augmentation algorithm, and the last inequality comes from Lemma~\ref{lem:nearoptimal}.
%
%\iftoggle{tech_report}
%{
    Hence, the third term (\ref{eq:third}) can be upper bounded by
    \begin{equation}  \label{eq:6} 
    \begin{split}
        &\textstyle \sum_{S \in \overline{\setS} (t^n)} \Pr \{\schedule_{t-1}=S\} 
            \cdot \left( 2 \delta \noLink t^{-2} + 1 - \delta \right) \\
        &\textstyle \le 2 \delta \noLink t^{-2} 
            + \left( 1 - \delta \right) \sum_{S \in \overline{\setS} (t^n)} \Pr \{\schedule_{t-1}=S\},
    \end{split}
    \end{equation}
    for all $t \in (t^n,t^{n+1}]$. 

    The result can be obtained by combining (\ref{eq:4}), (\ref{eq:5}), and (\ref{eq:6}).
%}
%{
%    Combining these results, we can obtain (\ref{eq:upper-bound-for-sufficiently-played-matching}).
%}
\end{proof}

In order to apply Lemma~\ref{lem:prob-for-sufficiently-played-matching} to (\ref{eq:tau_samplepath}), we rewrite it in a recursive form. Let $\eta = 1 - \delta$, $G_n = \left( \vert \overline{\setS}(t^n) \vert +\delta \right) \cdot 2\noLink = (n + \delta) \cdot 2\noLink$, and $\Theta_n(t) = \sum_{S \in \overline{\setS}(t^n)} \Pr \{ \schedule_t =S \}$. We have a recursive form of  (\ref{eq:upper-bound-for-sufficiently-played-matching}) as
\[
\Theta_n(t) 
    \leq  \Pr \{\schedule_{t-1} \in \underline{\setS}(t^n) \} + G_n \cdot t^{-2} + \eta \Theta_n(t-1),
\]
for $t \in (t^n, t^{n+1}]$. 
%\iftoggle{tech_report}
%{
    Extending the right side further down to $t^n$, we can obtain that
    \begin{align} 
    \Theta_n(t) 
        &\le \eta^{t-t^n} \Theta_n(t^n) \label{eq:7} \\
        &\textstyle~~+ G_n \sum_{i=t^n+1}^{t} \eta^{t-i} \cdot i^{-2} \label{eq:8} \\
        %&~~+ G_n (t^{-2} +\eta (t-1)^{-2} \cdots +\eta^{t-t^n-1} (t^n+1)^{-2} ) \\
        &\textstyle~~+ \sum_{i=t^n+1}^{t} \eta^{t-i} \cdot \Pr \{\schedule_{i-1} \in \underline{\setS}(t^n) \}. \label{eq:9}
    \end{align}
    By summing it over $t \in (t^n, t^{n+1}]$ on the both sides, we obtain the following lemma.
%}
%{
%    Extending the right side further down to $t^n$ and summing over $t \in (t^n, t^{n+1}]$, we obtain the following lemma.
%}
%
\begin{lemma} \label{lem:theta}
The total number of times that sufficiently played non-near-optimal matchings are selected during $(t^n,t^{n+1}]$ is bounded by 
\begin{equation}\label{eq:reuse}
\textstyle \sum_{t=t^n+1}^{t^{n+1}} \Theta_n(t)
    \leq \frac{1}{\delta}
    ( 
        1+ \frac{\pi^2}{6} G_n  + \mathbb{E} [ \sum_{S \in \underline{\setS}(t^n)} \tau_{S,n+1} ]
    ),
\end{equation}
where $\tau_{S,n+1}$ denote the number of time slots that $S$ is scheduled in $(t^{n},t^{n+1}]$.
\end{lemma}
\noindent The proof of Lemma~\ref{lem:theta} is omitted and can be found in~\cite{tr}.

Now, by taking expectation on (\ref{eq:tau_samplepath}), we can obtain the expected total number of times that non-near optimal matchings are selected up to time $\hat{t}_h (\leq h)$  as 
\begin{align*}
    &\textstyle \sum_{S\in \cnopts_{\vmweight}} \mathbb{E}\left[ \nochosen_S(\hat{t}_h) \right] 
        %=\sum_{S\in \cnopts_{\vmweight}}\sum_{t=1}^{t'} \Pr \{ \schedule_t =S \}\\
        \textstyle = l_h M + \sum_{n=1}^{M-1} \sum_{t=t^n + 1}^{t^{n+1}} \Theta_n(t)\\
        &\textstyle \leq l_hM + \frac{1}{\delta} \sum_{n=1}^{M-1} \left( 1+ G_n \frac{\pi^2}{6} + \mathbb{E}\Big[ \sum_{S \in \underline{\setS}(t^n)} \tau_{S,n+1} \Big] \right),\\
        %&\textstyle \leq l_hM + \left(\frac{1}{\delta} + \frac{\noLink\pi^2}{3}\right) (M-1)
        %+\frac{\noLink\pi^2}{3\delta}\cdot\sum_{n=1}^{M-1} \vert \overline{\setS}(t^n)\vert
        %+\frac{1}{\delta}\cdot \sum_{n=1}^{M-1} \mathbb{E}\left[ \sum_{s=n+1}^{M} \tau_{s,n+1}\right]\\
        %&\textstyle \leq l_hM + (M-1)\left(\frac{1}{\delta} + \frac{\noLink\pi^2}{3}\right) +\frac{\noLink\pi^2}{3\delta}\cdot \frac{M(M-1)}{2} +\frac{1}{\delta}\cdot l_hM\\
        &\textstyle \le l_hM + \frac{M}{\delta} \left( 1 + \frac{\noLink \delta \pi^2}{3} +\frac{\noLink(M-1)\pi^2}{6} +l_h \right).
\end{align*}
%
%\iftoggle{tech_report}
%{
    The last inequality holds since  
    (i) $\sum_{n=1}^{M-1} G_n = \sum_{n=1}^{M-1} (n + \delta) \cdot 2\noLink 
    \le M \cdot \noLink \cdot (2\delta + (M-1))$, and 
    (ii) $\sum_{S \in \underline{\setS}(t^n)} \tau_{S,n+1}$ is the total number that the matchings that have been chosen less than $l_h$ up to $t^n$ are chosen during $(t^n, t^{n+1}]$ and thus results in $\sum_{n=1}^{M-1} \sum_{S \in \underline{\setS}(t^n)} \tau_{S,n+1} \leq \sum_{k=2}^{M} l_h \le l_hM$.
%}{}
%
From $M\leq\vert \setS \vert -1,$ we have 
\begin{align} 
\textstyle \sum_{S\in \cnopts_{\vmweight}} \mathbb{E}\left[ \nochosen_S(\hat{t}_h) \right]
    %\textstyle &\leq \left( 1+\frac{1}{\delta} \right)l_hM +D_1 \\
    %&\leq \left( 1+\frac{1}{\delta} \right) \cdot  \frac{4\noLink^2(\noLink+1)\ln h}{(\alpgapmin)^2} ( \vert \setS \vert -1 ) +D_1,
    &\textstyle \leq D_1 \cdot \frac{\ln h}{(\alpgapmin)^2} \nonumber\\
        &\textstyle \hspace{-1cm} + \frac{\vert\setS\vert-1}{\delta} (1+ \frac{\noLink\delta\pi^2}{3} + \frac{\noLink(\vert\setS\vert-2)\pi^2}{6} ). \label{eq:16}
\end{align}

This provides a bound on the number of times that non-near-optimal matchings are selected up to $\hat{t}_h$. For the rest time $t \in (\hat{t}_h,h]$, we need to compute $\sum_{t=\hat{t}_h+1}^h \Pr \{\schedule_t \in \cnopts_{\vmweight}\}$. Let $S' = \schedule_{t-1} \oplus \mathcal{A}_t$. Since next schedule $\schedule_t$ is either $\schedule_{t-1}$ and $S'$ under the algorithm, we divide the event $\{\schedule_t \in \cnopts_{\vmweight}\}$ into three sub-cases based on $\schedule_{t-1}$ and $S'$, and compute the probability as 
\begin{align}
%&\textstyle \sum_{t=\hat{t}_h+1}^h \sum_{S \in \cnopts_{\vmweight}}  \Pr \{ \schedule_t = S \} \\
\textstyle \Pr \{ \schedule_t \in \cnopts_{\vmweight}\} 
    &\textstyle =  \Pr\{S' \in\cnopts_{\vmweight},\schedule_{t-1}\in\cnopts_{\vmweight}\}\nonumber \\
    &\textstyle \hspace{-0.5cm}+ \Pr\{S' \in \cnopts_{\vmweight},~\schedule_{t-1}\in\nopts_{\vmweight}, \totalmw_{\vmweight}(S')\geq\totalmw_{\vmweight}(\schedule_{t-1})\}\nonumber\\
    &\textstyle \hspace{-0.5cm}+ \Pr\{S' \in \nopts_{\vmweight},~\schedule_{t-1}\in\cnopts_{\vmweight}, \totalmw_{\vmweight}(S')\leq\totalmw_{\vmweight}(\schedule_{t-1})\}.\nonumber
\end{align}
This leads to
\begin{align}
\textstyle \Pr \{ \schedule_t \in \cnopts_{\vmweight}\} 
    &\textstyle \leq  \Pr\{S' \in \cnopts_{\vmweight}~|~\schedule_{t-1} \in\cnopts_{\vmweight}\} \cdot \Pr\{\schedule_{t-1}\in\cnopts_{\vmweight} \} \nonumber \\
    &\textstyle \hspace{-0.5cm}+ \Pr\{\totalmw_{\vmweight}(S')\geq\totalmw_{\vmweight}(\schedule_{t-1}) ~|~S' \in \cnopts_{\vmweight},~\schedule_{t-1}\in\nopts_{\vmweight}\} \nonumber\\
    &\textstyle \hspace{-0.5cm}+  \Pr\{\totalmw_{\vmweight}(S')\leq\totalmw_{\vmweight}(\schedule_{t-1})~|~S' \in \nopts_{\vmweight},~\schedule_{t-1}\in\cnopts_{\vmweight}\}.\nonumber
\end{align}
From Lemma~\ref{lem:probability}, we have $\Pr\{S' \in \cnopts_{\vmweight}~|~\schedule_{t-1} \in\cnopts_{\vmweight}\} = 1 - \Pr\{S' \in\nopts_{\vmweight}~|~\schedule_{t-1} \in \cnopts_{\vmweight}\} \le 1-\delta = \eta$. Since $\hat{\tau}_S(t) \geq l_h$ for all $S$ and $t \in (\hat{t}_h,h]$, Lemma~\ref{lem:nearoptimal} provides an upper bound $2\noLink t^{-2}$ on each conditional probability in the second and the third terms. As a result, we can obtain
\[ \textstyle 
    \Pr \{\schedule_t\in \cnopts_{\vmweight}\} 
        \textstyle \leq  \eta \cdot \Pr\{\schedule_{t-1} \in \cnopts_{\vmweight}\} + 4\noLink t^{-2}.
\]
By extending the inequality in a recursive manner down to $\hat{t}_h$, we obtain that 
\[
\begin{split}
    \Pr \{\schedule_t \in \cnopts_{\vmweight}\} 
        &\textstyle \leq \eta^{t-\hat{t}_h} \cdot \Pr \{ \schedule_{\hat{t}_h}\in \cnopts_{\vmweight} \} 
        + 4\noLink \sum_{i=\hat{t}_h+1}^t \eta^{t-i}i^{-2} \\
        &\textstyle = \eta^{t-\hat{t}_h}  
        + 4\noLink \sum_{i=\hat{t}_h+1}^t \eta^{t-i}i^{-2},
\end{split}
\]
where the last equality holds since $\Pr \{ \schedule_{\hat{t}_h}\in \cnopts_{\vmweight} \} = 1$ from the definition of $\hat{t}_h$. Summing over $t \in (\hat{t}_h, h]$ on the both sides, and from $\eta = 1-\delta$, we have
\begin{equation} \label{eq:19} 
    \textstyle
    \sum_{t=\hat{t}_h+1}^h %\sum\limits_{S \in \cnopts_{\vmweight}} \Pr \{\schedule_t=S\}
    \Pr \{\schedule_t\in \cnopts_{\vmweight}\}
        \le \frac{1-\delta}{\delta}+ \frac{2\noLink\pi^2}{3\delta}.
\end{equation}

Combining (\ref{eq:16}) and (\ref{eq:19}), we obtain
\begin{equation} \label{eq:prop2_key1}
\begin{split}
    &\textstyle \sum_{S\in \cnopts_{\vmweight}} \mathbb{E} [ \nochosen_S(h) ]\\
        &=\textstyle \sum_{S\in \cnopts_{\vmweight}} \mathbb{E}\left[ \nochosen_S(\hat{t}_h) \right]
            + \sum_{t=\hat{t}_h+1}^h \Pr \{\schedule_t\in \cnopts_{\vmweight}\} \\
        &\leq \textstyle  D_1 \cdot \frac{\ln h}{(\alpgapmin)^2} + D_2, 
\end{split}
\end{equation}
where 
%new D_1 and D_2
$D_1 = \left( 1 + \frac{1}{\delta} \right) \cdot 4\noLink^2(\noLink+1) \cdot \left( \vert \setS \vert-1 \right)$, and
$D_2 = \frac{\vert\setS\vert-1}{\delta} (1+ \frac{\noLink\delta\pi^2}{3} + \frac{\noLink(\vert\setS\vert-2)\pi^2}{6} ) + \frac{1-\delta}{\delta} + \frac{2\noLink\pi^2}{3\delta}$.
% old D_1 and D_2
%$D_1 = \frac{\vert\setS\vert-1}{\delta}\left(1+ \frac{\noLink\pi^2\delta}{3}+ \frac{\noLink(\vert\setS\vert-2)\pi^2}{6} \right)$, 
%$D_2=\frac{1-\delta}{\delta}+ \frac{2\noLink\pi^2}{3\delta}$, 
%and $\delta = \min \{ 1, ( \frac{p}{1-p} )^V \} ( \frac{1-p}{k\Sigma} )^V$.

%\iftoggle{tech_report}{}
\textbf{(2) When} $\hat{t}_h > h$ (i.e,. $\exists S$ such that $\nochosen_S(h) < l_h$) : With the same definitions of $l_h$, $\overline{\setS}(t)$, and $\underline{\setS}(t)$, let $\vert \overline{\setS} \vert = |\overline{\setS}(h)|$ and $\vert \underline{\setS} \vert = |\overline{\setS}(h)|$. At this time, we define $\snopts(h)=\{ S^1,S^2,...,S^{\vert\overline{\setS}\vert} \}$ and let $t^n$ denote the time at which matching $S^n$ is sufficiently scheduled, i.e., $\nochosen_{S^n}(t^n)=l_h$. Without loss of generality, we assume $t^1 < t^2<...<t^{\vert \overline{\setS}\vert}$. By time slot $h$, $\underline{\setS}(h)$ is non-empty (since $\hat{t}_h >h)$, and it is clear that $\sum_{S \in \underline{\setS}(h)} \nochosen_S(h) \leq l_h \vert \underline{\setS} \vert$. 

Similar to the case when $\hat{t}_h \le h$, we can obtain 
\[
\begin{split}
&\textstyle \sum_{S \in \cnopts_{\vmweight}} \mathbb{E}\left[ \nochosen_S(h) \right] \\
    &\textstyle =\sum_{S \in \underline{\setS}(h)} \mathbb{E} [ \nochosen_S(h) ]
        %+\sum_{S \in \overline{\setS}(h)} \mathbb{E} [ \nochosen_S(h) ] \\
        + \sum_{t=1}^h\sum_{S \in \overline{S}(h)}\Pr \{ \schedule_t =S \}\\
    %&\textstyle \leq l_h\vert\underline{\setS} \vert + \sum_{t=1}^h\sum_{S \in \overline{S}(h)}\Pr \{ S_t =S \} \\
    &\textstyle \leq l_h\vert\underline{\setS} \vert + l_h\vert \overline{\setS}\vert+ \sum_{n=1}^{\vert \overline{\setS} \vert}\sum_{t=t^n+1}^{t^{n+1}} \Theta_n(t) \\
    &\textstyle \leq l_h M + \sum_{n=1}^{\vert \overline{\setS} \vert} \frac{1}{\delta} 
        %\left( 1 + \frac{\pi^2}{6} G_n + \mathbb{E}\left[\sum_{x=n+1}^{\vert\overline{\setS}\vert} \tau_{x,n+1} \right] \right),
        ( 1 + \frac{\pi^2}{6} G_n + \mathbb{E} [ \sum\limits_{S \in \underline{\setS}(t^n)} \tau_{x,n+1} ] ),
\end{split}
\]
where the last inequality comes from Lemma~\ref{lem:theta}. As in (\ref{eq:16}), we can obtain 
\begin{equation}\label{eq:prop2_key2}
    \begin{split}
    &\textstyle \sum_{S \in \cnopts_{\vmweight}} \mathbb{E}\left[ \nochosen_S(h) \right]\\
    &\le \textstyle l_h M + \frac{\vert \overline{\setS} \vert}{\delta} \left(1 + \frac{\noLink \pi^2 \delta}{3} + \frac{ \noLink (\vert \overline{\setS} \vert + 1) \pi^2}{6} + l_h \right)\\
    %&\leq l_h(\vert \setS\vert-1) + \vert\overline{\setS}\vert\left(\frac{1}{\delta}+\frac{\noLink\pi^2}{3}\right) + \frac{\noLink\pi^2}{6\delta}\vert \overline{\setS}\vert(\vert \overline{\setS}\vert+1) +l_h\frac{\vert \overline{\setS}\vert}{\delta}, 
    %&\textstyle \leq \left( 1+\frac{1}{\delta} \right) \cdot  \frac{4\noLink^2(\noLink+1)\ln h}{(\alpgapmin)^2} ( \vert \setS \vert -1 ) +D_1 +D_2,
    &\leq \textstyle D_1 \frac{\ln h}{(\alpgapmin)^2} + D_2, 
    \end{split}
\end{equation} 
where the last inequality holds due to $\vert \overline{S} \vert \le M-1$.
Proposition~\ref{pro:regretbound} can be obtained by applying (\ref{eq:prop2_key1}) and (\ref{eq:prop2_key2}) to the decomposition inequality (\ref{eq:decomposition}).
%{
%    \textbf{(2) When} $\hat{t}_h > h$ (i.e,. $\exists \setS$ such that $\nochosen_S(h) < l_h$) : In this case, we consider the set of non-near-optimal matchings that are sufficiently scheduled by $h$, and use it in the place of the whole set $\cnopts_{\vmweight}$ of non-near-optimal matching. The proof is similar to the case when $\hat{t}+h \le h$, and can be found in~\cite{tr}.
%}

}

\emph{Remarks:} Despite the logarithmic bound, the algorithm may suffer from slow convergence due to large values of constant $D_1$ and $D_2$. On the other hand, the bound is quite loose because we consider each matching separately. In practice, a link belongs to multiple matchings, and thus it can learn much faster.

%%%%%%%%%%%%%% do not remove the following %%%%%%%%%%%%%%%%%%%%%%%%%%
%\RED{(CJ: Do you think that the following is provable? Some constant may change, but the point is that we have $\frac{1}{\alpgapmin}$ instead of $\frac{1}{(\alpgapmin)^2}$) We note that the proof of Proposition~\ref{pro:regretbound} is obtained by applying the distribution-independent regret decomposition inequality (\ref{eq:decomposition}). By using a distributed-dependent version, e.g., $\alpR(t) \le \sum_{S \in \setS} \mathbb{E} [ \nochosen_S(t) ] \cdot \Delta_S$, where $\Delta_S = \alpha \cdot \maxtotalmw_{\vmweight} - \totalmw_{\vmweight}(S)$, we can have the following result.} \\
%
%\RED{In a network graph of $V$ nodes with maximum degree $\Sigma$, our dAUCB algorithm with seed probability $p$ and parameter $k$ achieves
%\begin{align*}
%\alpR(t) 
%    \leq \left( 1 + \frac{1}{\delta} 
%\right)
%    \cdot  \frac{4\noLink^2(\noLink+1)\ln t}{\alpgapmin} \cdot \left( \vert\setS \vert-1 \right) \\ 
%    ~~~~~+ \alpgapmax \cdot (D_1 + D_2), 
%    ~\text{for all}~t \in \{1,...,\framel\}.
%\end{align*}}
%%%%%%%%%%%%%%%%%%%%%%%%%%%%%%%%%%%%%%%%%%%%%%%%%%%%%%%%%%%%%%%%%%%%%

\subsection{Scheduling efficiency}

We now consider the throughput performance of A$^k$-UCB across multiple frames. It can be obtained through the Lyapunov technique with time unit of frame length. 
\begin{proposition} \label{pro:stability}
For a sufficiently large frame length $\framel$, A$^k$-UCB is rate-stable for any arrival rate strictly inside ${\alpk}\Lambda$. 
\end{proposition}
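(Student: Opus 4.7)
The plan is to perform a frame-based Lyapunov drift analysis with the quadratic function $L(\vq) = \sum_{i \in \setLink} q_i^2$, and to combine it with the $\alpk$-regret bound from Proposition~\ref{pro:regretbound}. A standard per-frame computation (using that per-slot arrivals and services are bounded by $1$, expanding squares, and applying the tower property with the independence of $\serv_i(\tau)$ from past decisions) yields
\begin{equation*}
\mathbb{E}\bigl[L(\vq(t_{n+1})) - L(\vq(t_n)) \mid \vq(t_n)\bigr] \le B\framel^2 + 2\framel \sum_i q_i(t_n)\marv_i - 2\,\mathbb{E}\Bigl[\sum_{\tau=t_n}^{t_{n+1}-1}\sum_{i\in\schedule_\tau} q_i(t_n)\mserv_i \,\Big|\, \vq(t_n)\Bigr],
\end{equation*}
for a constant $B$ depending only on $|\setLink|$.

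Next I would translate the regret bound back to un-normalized weights. Inside the frame the scheduler's weight vector is $\mweight_i = (q_i(t_n)/\maxq(t_n))\mserv_i \in [0,1]$, and Proposition~\ref{pro:regretbound} gives $\alpR(\framel) \le C(\framel)$ with $C(\framel) = O(\log\framel)$. Since $\totalmw_{\vmweight}(S) = \frac{1}{\maxq(t_n)}\sum_{i\in S} q_i(t_n)\mserv_i$, multiplying through by $\maxq(t_n)$ yields
\begin{equation*}
\mathbb{E}\Bigl[\sum_{\tau=t_n}^{t_{n+1}-1}\sum_{i\in\schedule_\tau} q_i(t_n)\mserv_i \,\Big|\, \vq(t_n)\Bigr] \ge \framel\cdot\alpk\cdot \max_{S\in\setS}\sum_{i\in S} q_i(t_n)\mserv_i \;-\; \maxq(t_n)\,C(\framel).
\end{equation*}
The strict-interior hypothesis $\vlambda \in \alpk\sregion$ then supplies $\epsilon>0$ and probabilities $\{p_S\}$ with $\alpk \sum_S p_S \mathbb{I}\{i\in S\}\mserv_i \ge \marv_i + \epsilon$ for every link $i$; taking the $q_i(t_n)$-weighted sum and using $\max_S \ge \sum_S p_S(\cdot)$ gives $\alpk \max_S \sum_{i\in S}q_i(t_n)\mserv_i \ge \sum_i q_i(t_n)\marv_i + \epsilon\sum_i q_i(t_n)$.

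Chaining these three inequalities causes $\sum_i q_i(t_n)\marv_i$ to cancel, leaving
\begin{equation*}
\mathbb{E}\bigl[L(\vq(t_{n+1})) - L(\vq(t_n)) \mid \vq(t_n)\bigr] \le B\framel^2 + 2\maxq(t_n)\,C(\framel) - 2\framel\,\epsilon \sum_i q_i(t_n).
\end{equation*}
Since $\sum_i q_i(t_n) \ge \maxq(t_n)$, the right-hand side is at most $B\framel^2 - 2\maxq(t_n)\bigl(\framel\epsilon - C(\framel)\bigr)$. For any $\framel$ large enough that $\framel\epsilon > C(\framel)$—which holds eventually because $C(\framel)=O(\log\framel)$—this is a Foster-Lyapunov-style negative drift condition at the frame timescale. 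Telescoping over frames yields $\frac{1}{N}\sum_{n=1}^{N}\mathbb{E}[\maxq(t_n)] = O(1)$, which via the bounded per-slot increment $|q_i(t)-q_i(t_n)|\le\framel$ implies $q_i(t)/t\to 0$ a.s., i.e., rate stability.

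The main obstacle is that the constants absorbed into $C(\framel)$ involve a factor of $1/(\alpgapmin)^2$, and the near-optimality gap $\alpgapmin$ of the frame's weight vector $\vmweight$ can become arbitrarily small as queue ratios drift. I plan to handle this by a case split on $\alpgapmin$: when $\alpgapmin \ge \Delta_0$ for a pre-chosen threshold, Proposition~\ref{pro:regretbound} applies with uniformly bounded constants; when $\alpgapmin < \Delta_0$, every non-near-optimal matching is within $\Delta_0$ of the $\alpk$-optimal threshold, so $\alpR(\framel)$ is trivially bounded by $\framel\Delta_0(|\setS|-1)$. Picking $\Delta_0 \ll \epsilon$ and then $\framel$ large enough makes both branches smaller than $\framel\epsilon\maxq(t_n)$, which secures the negative drift uniformly in $\vq(t_n)$ and closes the argument.
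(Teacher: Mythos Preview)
Your main argument---frame-level quadratic Lyapunov drift, bounding the service term via the $\alpha$-regret of Proposition~\ref{pro:regretbound}, and extracting the $\epsilon$-slack from the strict-interior hypothesis---is exactly the paper's approach; the paper arrives at the same inequality $\frac{1}{\framel}\mathbb{E}[\dri(t_n)\mid\vq(t_n)]\le C-\epsilon\sum_i q_i(t_n)+\maxq(t_n)\,\alpR(\framel)/\framel$ and then simply asserts $\alpR(\framel)/\framel<\epsilon$ for large $\framel$. You are in fact more careful than the paper in two respects: you spell out the passage from negative drift to rate stability, and you flag that the constants in Proposition~\ref{pro:regretbound} carry a factor $(\alpgapmin)^{-2}$, where $\alpgapmin$ depends on the queue-ratio vector $\vmweight$ and is not bounded away from zero uniformly. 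The paper's proof does not address this uniformity issue at all.

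However, your proposed fix for that uniformity issue is not correct as written. In your Case~2 you claim that $\alpgapmin<\Delta_0$ forces every non-near-optimal matching to lie within $\Delta_0$ of the threshold. This is backwards: $\alpgapmin$ is by definition the \emph{minimum} gap, i.e., the distance from $\alpha\maxtotalmw_{\vmweight}$ to the \emph{best} non-near-optimal matching, so $\alpgapmin<\Delta_0$ only says that \emph{some} such matching has small gap; the others can still have gap up to $\alpgapmax$, and the bound $\alpR(\framel)\le\framel\Delta_0(|\setS|-1)$ does not follow. The repair is to split per matching rather than on the global $\alpgapmin$: plays of matchings whose individual gap is at most $\Delta_0$ contribute at most $\framel\Delta_0$ to $\alpR(\framel)$ in total, while for the matchings with gap exceeding $\Delta_0$ one re-runs the counting argument behind Proposition~\ref{pro:regretbound} with $\Delta_0$ in place of $\alpgapmin$ (this takes a little extra care because the recursion there conditions on whether $\schedule_{t-1}$ is near-optimal, but Lemma~\ref{lem:probability} still supplies a genuinely $\alpha$-near-optimal comparison alternative). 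The resulting estimate $\alpR(\framel)\le\framel\Delta_0+O(\log\framel/\Delta_0^2)$ has constants independent of $\vmweight$, and choosing $\Delta_0$ proportional to $\epsilon$ then delivers the uniform $\alpR(\framel)/\framel<\epsilon$ you need.
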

%
%\iftoggle{tech_report}
\begin{proof}
    %We use the Lyapunov technique with time unit of frame length. 
    Given any $\vlambda$ strictly inside $\alpha\Lambda$ with $\alpha = \alpk$, we consider the Lyapunov function $\lya(t_{\orderF}) = \frac{1}{2} \sum_{i \in \mathcal{L}} \left(q_i(t_{\orderF})\right)^2$ at the start time $t_n$ of the $n$-th frame. If the Lyapunov function has a negative drift for sufficiently large queue lengths, then all the queues will remain finite.
    %, i.e., $\lim_{t\rightarrow \infty} \frac{q_i(t)}{t} = 0$ for all $i$.
    
    From the queue evolution (\ref{eq:q_evolv}), we have 
    \[
    \begin{split}
        q_i(t_{n+1}) 
            &\textstyle \leq \left (q_i(t_{\orderF}) -\sum_{t=t_{\orderF}}^{t_{\orderF}+\framel-1} \serv_i(t) \cdot \mathbb{I}\{ i\in \schedule_t \} \right )^+ \\
            &\textstyle ~~~~+ \sum_{t=t_{\orderF}}^{t_{\orderF} +\framel - 1 } \arv_i(t),
    \end{split}
    \]
    where $\{\schedule_t\}$ denotes the sequence of matchings chosen by A$^k$-UCB. Let $\dri(t_{\orderF}) = L(t_{\orderF+1}) - L(t_{\orderF})$. The drift during a frame time can be written as 
    \[
    \begin{split}
        \mathbb{E} &\left[ \dri(t_n) ~| \vq(t_{\orderF})\right] 
        %&~\textstyle =\frac{1}{2} \mathbb{E}\left[ \sum_{i \in \mathcal{L}} ( q_i(t_{n+1})^2 - q_i(t_{\orderF})^2)  ~\middle| \vq(t_{\orderF}) \right] \\
        \textstyle \leq \frac{1}{2} \sum_{i \in \mathcal{L}} \mathbb{E} [ (\sum_{t=t_{\orderF}}^{t_{\orderF}+\framel-1} \arv_i(t) )^2 | \vq(t_{\orderF}) ]\\  
        &\textstyle +\frac{1}{2}\sum_{i \in \mathcal{L}} \mathbb{E} [ ( \sum_{t=t_{\orderF}}^{t_{\orderF}+\framel-1} \serv_i(t) \cdot \mathbb{I}\{ i\in \schedule_t \} )^2 | \vq(t_{\orderF}) ] \\
        &\textstyle + \sum_{t=t_{\orderF}}^{t_{\orderF}+\framel-1} \mathbb{E} [  \sum_{i \in \mathcal{L}} q_i(t_{\orderF}) \arv_i(t) - \sum_{i \in \schedule_t} q_i(t_{\orderF}) \serv_i(t))  ~| \vq(t_{\orderF}) ],
    \end{split}
    \]
    where the first two terms can be bounded by $C\framel$ for some constant $C$, because $\arv_i(t)$, $\serv_i(t)$, and $\noLink$ are bounded.
    Suppose that we have weight vector $\vmweight$ at time $t_{\orderF}$. Let $S^*$ denote an optimal matchings during the corresponding frame time, i.e., $S^* \in \setOpts_{\mweight} = \argmax_{S \in \mathcal{S}} \sum_{i \in S} \mweight_i$, and let $\maxtotalmw_{\vmweight} = \sum_{i \in S^*} \mweight_i$.
    Since $\vlambda$ strictly inside $\alpha \Lambda$, there exists $\epsilon > 0$ such that $\vlambda + \epsilon \boldsymbol{1} \in \alpha \Lambda$, where $\boldsymbol{1}$ is the vector of all ones. Then from $\mweight_i = \frac{q_i(t_{\orderF})}{\maxq(t_{\orderF})} \mu_i$, we can obtain
    \[
    \begin{split}
        \mathbb{E} &\left[ \dri(t_{\orderF}) ~| \vq(t_{\orderF})\right] \\
            \le&~ C\framel \textstyle + \sum_{t=t_{\orderF}}^{t_{\orderF}+\framel-1} 
                ( \mathbb{E} [ \sum_{i \in \mathcal{L}} q_i(t_{\orderF}) \arv_i(t)  ~| \vq(t_{\orderF}) ] \\
                & \textstyle ~~~~~~~~~~~~~~~~~~~~~~~ - \mathbb{E} [ \sum_{i \in \schedule_t} q_i(t_{\orderF}) \serv_i(t)  ~| \vq(t_{\orderF}) ])\\
            %
            %&\le C\framel \textstyle + \mathbb{E} \left[ \sum_{t=t_{\orderF}}^{t_{\orderF}+\framel-1}  \sum_{i \in \mathcal{L}} q_i(t_{\orderF}) \arv_i(t)~\middle|\vq(t_{\orderF})\right] \\         &\textstyle~~~~~~~~-\mathbb{E}\left[ \sum_{t=t_{\orderF}}^{t_{\orderF}+\framel-1}\alpha \sum_{i \in S^*} q_i(t_{\orderF}) \serv_i(t) ~\middle| \vq(t_{\orderF})\right]\\
            %&\textstyle~~~~~~~~+
            %\mathbb{E} \left[ \sum_{t=t_{\orderF}}^{t_{\orderF}+\framel-1}\alpha \sum_{i \in S^*} q_i(t_{\orderF}) \serv_i(t)  ~\middle| \vq(t_{\orderF}) \right]\\
            %&\textstyle ~~~~~~~~- \mathbb{E} \left[ \sum_{t=t_{\orderF}}^{t_{\orderF}+\framel-1} \sum_{i \in S(t)}~ q_i(t_{\orderF}) \serv_i(t)  ~\middle| \vq(t_{\orderF}) \right] \\
            =&~\textstyle  C\framel + \maxq(t_{\orderF}) \sum_{t=t_{\orderF}}^{ t_{\orderF} + \framel - 1 }  
                \Big( \sum_{i \in \mathcal{L}}\frac{q_i(t_{\orderF})}{\maxq(t_{\orderF})} \lambda_i - \alpha \maxtotalmw_{\vmweight} \Big)\\
                &\textstyle ~~~~~~~~+ \maxq(t_{\orderF}) \sum_{t=t_{\orderF}}^{t_{\orderF}+\framel-1}  
                %\Big( \alpha \sum\limits_{i \in S^*} w_i - \sum\limits_{i \in \schedule_t} w_i \Big)\\
                \Big( \alpha \maxtotalmw_{\vmweight} - \mathbb{E} \left[ \totalmw_{\vmweight}(\schedule_t) ~| \vq(t_{\orderF})\right] \Big)\\
            %&= C\framel \textstyle + \maxq(t_{\orderF}) \Big( \mathbb{E} \left[ \sum_{t=t_{\orderF}}^{t_{\orderF}+\framel-1}  \sum_{e \in \mathcal{L}} \frac{q_i(t_{\orderF})}{\maxq(t_{\orderF})} \lambda_i~\middle|\vq(t_{\orderF})\right] \\ &\textstyle~~~~~~~~~~~~~~~~~~~~-\mathbb{E}\left[ \sum_{t=t_{\orderF}}^{t_{\orderF}+\framel-1}\alpha \sum_{i \in S^*} \mweight_i ~\middle| \vq(t_{\orderF})\right]\\
            %&\textstyle~~~~~~~~~~~~~~~~~~~~+
            %\mathbb{E} \left[ \sum_{t=t_{\orderF}}^{t_{\orderF}+\framel-1}\alpha \sum_{i \in S^*} \mweight_i  ~\middle| \vq(t_{\orderF}) \right]\\
            %&\textstyle~~~~~~~~~~~~~~~~~~~~- \mathbb{E} \left[ \sum_{t=t_{\orderF}}^{t_{\orderF}+\framel-1} \sum_{i \in S(t)}~ \mweight_i  ~\middle| \vq(t_{\orderF}) \right]\Big) \\
            \le&~ C\framel \textstyle - \epsilon \framel \sum_{i \in \mathcal{L}} q_i(t_{\orderF}) + \maxq(t_{\orderF}) \cdot \alpR(\framel).\\
                %&\textstyle+ \maxq(t_{\orderF}) \left( \framel \alpha \cdot \maxtotalmw_{\vmweight} - \mathbb{E} \left[ \sum_{t=t_{\orderF}}^{t_{\orderF}+\framel-1} r_{\vmweight}(S(t))  ~| \vq(t_{\orderF}) \right] \right).
    \end{split}
    %\alpR(t)= t\cdot \alpha \cdot \maxtotalmw_{\vmweight}-\mathbb{E}\left[\sum_{\tau=1}^t r_{\vmweight}(\schedule_{\tau})\right],
    \]
    where the equality holds due to the independence of link rates, and the last inequality holds since $\vlambda + \epsilon \boldsymbol{1} \in \alpha\Lambda$ and thus $\sum_{i \in \mathcal{L}}\frac{q_i(t_{\orderF})}{\maxq(t_{\orderF})} (\lambda_i + \epsilon) < \alpha \maxtotalmw_{\vmweight}$.
    Dividing both sides by $\framel$, we have 
    \[
        \textstyle \frac{1}{T} \mathbb{E} \left[ \dri(t_{\orderF}) ~| \vq(t_{\orderF})\right]
            \le C - \epsilon \sum_{i \in \mathcal{L}} q_i(t_{\orderF})
            + \maxq(t_{\orderF}) \cdot \frac{ \alpR(\framel) }{\framel}.
    \]
    Since Proposition~\ref{pro:regretbound} implies that $\frac{ \alpR(\framel) }{\framel} < \epsilon$ for sufficiently large $\framel$, we have a negative drift for sufficiently large queue lengths.
\end{proof}
Proposition~\ref{pro:stability} means that A$^k$-UCB can stabilize the queue lengths under packet arrival dynamics for any $\vlambda \in \alpk \Lambda$.
%
%{
%    The proof follows the same line of analysis of~\cite{stahlbuhk_adhoc2019}, and uses the Lyapunov technique with a quadratic-form Lyapunov function. Due the limited space, we omit the proof,  which can be found in~\cite{tr}.
%}

\begin{figure*}[t]
    \centering
    \subfigure[A$^k$-UCB]{
        \includegraphics[width=0.31\linewidth]{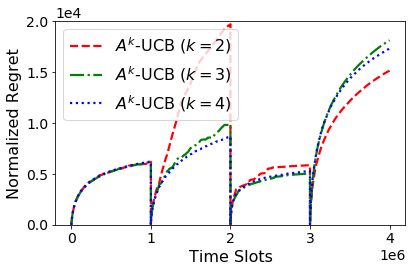}
        \label{fig:global_regret}
    } 
    \subfigure[A$^k$-UCB ($\alpha$-regret)]{
        \includegraphics[width=0.31\linewidth]{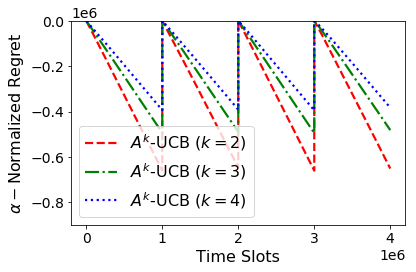}
        \label{fig:alpha_regret}
    } 
    \subfigure[$d$A$^k$-UCB]{
        \includegraphics[width=0.31\linewidth]{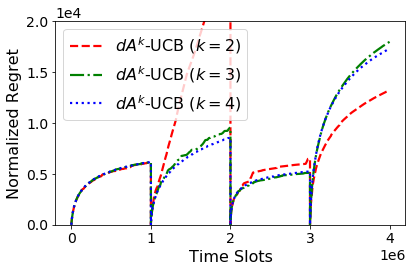}
        \label{fig:local_regret}
    }
    \caption{Regret traces for learning performance. For comparison across frames, the regret is reset to $0$ at each frame boundary ($\framel = 10^6$ time slots), and normalized by the maximum expected reward $\maxtotalmw_{\vmweight}$.}
    \label{fig:regret}
\end{figure*}

\begin{figure*}
    \centering
    \subfigure[A$^k$-UCB ($k = 3$) with different $\framel$'s]{
        \includegraphics[width=0.31\linewidth]{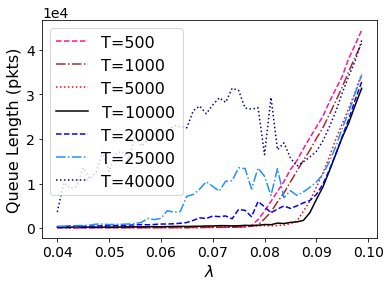}
        \label{fig:frame_size}
    }
    \subfigure[A$^k$-UCB (zoom-in)]{
        \includegraphics[width=0.31\linewidth]{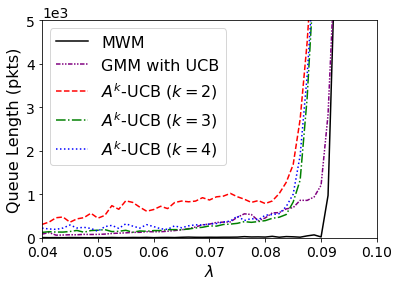}
        \label{fig:global_queue}
    }
    \subfigure[$d$A$^k$-UCB (zoom-in)]{
        \includegraphics[width=0.31\linewidth]{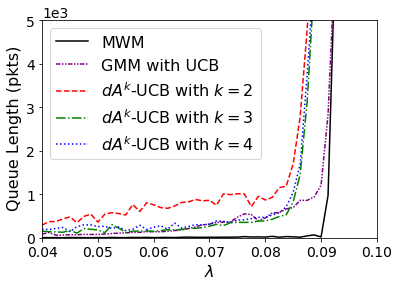}
        \label{fig:local_queue}
    }
    \caption{Queue lengths for scheduling efficiency. Given a scheduler, if the arrival rate gets closer to the boundary of its stability region, the queue length soars quickly.} \label{fig:stability}
\end{figure*}

%%%%%%%%%%%%%%%%%%%%%%%%%%%%%%%%%%%
%\section{Extensions of \BLUE{A$^k$-UCB}}
%%%%%%%%%%%%%%%%%%%%%%%%%%%%%%%%%%%
\section{Distributed implementation ($d$A$^k$-UCB)} \label{sec:distributed}

Although the augmentation algorithm has $O(k)$ complexity and amenable to implement in a distributed fashion, the link index $\viweight_t$ of A$^k$-UCB includes \emph{global information} $\maxq(t_\orderF)$ -- the largest queue length in the network at the start of each frame $\orderF$. This normalization is due to Hoeffding inequality and essential for the provable regret performance bound. 
%The same problem has been also observed in the greedy algorithm of~\cite{stahlbuhk_adhoc2019}.

We pay attention to the fact that A$^k$-UCB indeed learns the expected value of the queue weighted link rate, i.e., $q_i(t_\orderF) \mu_i$, and the global information $\maxq(t_\orderF)$ takes the role of normalizing the weight in the range of $[0,1]$. 
This implies that it may be able to separate the normalizing parameter from the learning. To this end, we develop a distributed version of A$^k$-UCB, denoted by $d$A$^k$-UCB, and describe it with two key differences. For the ease of exposition, we assume that $\mathcal{A}_t$ consists of a single augmentation.
\begin{enumerate} 
    \item {\bf Local normalizer:} Each node $v$ maintains a local normalizer $\tilde{q}_v$, which is initialized to $\max_{u \in \mathcal{N}(v)} q_{(u,v)}(t_\orderF)$ at the beginning of each frame $n$. At each time slot $t$, node $v$ in an augmentation updates its local normalizer twice as follows. 1) In each initialization stage or path augmenting stage, the REQ message from $u$ to $v$ includes additional information of $\tilde{q}_u$. The receiving node $v$ sets $\tilde{q}_v \leftarrow \max\{ \tilde{q}_u, \tilde{q}_v \}$. This repeats while building the augmentation. After the cycle-checking stage, the terminus $w$ has $\tilde{q}_w = \tilde{q}^*$ that is the largest local normalizer in the augmentation. 2) In the back-propagating stage, this value $\tilde{q}^*$ is back-propagated together and each node $v$ in the augmentation sets $\tilde{q}_v \leftarrow \tilde{q}^*$. Hence, at the end of the time slot, all the nodes in the augmentation have the same local normalizer~$\tilde{q}^*$.
    \item {\bf Separate gain computation:} In the meantime, we change the way to compute the gain. To elaborate, let $G'_u$ denote the new gain normalized by $\tilde{q}_u$. At each mini-slot in the path augmenting stage, whenever node $u$ transmits an REQ message to node $v$, we divide $G'_u$ into $G'_{u,1} + G'_{u,2}$, where $G'_{u,1}$ is for average reward (normalized by factor $\tilde{q}_u$) and $G'_{u,2}$ for confidence interval. They are included in the REQ message, separately. Then, after the receiving node~$v$ updates the local normalizer $\tilde{q}_v$, it re-normalizes the received reward gain as $G'_{u,1} \cdot \tilde{q}_u / \tilde{q}_v$. Once the next link is decided as $i = (v,n)$, it computes $G'_{v,1}$ accordingly by either adding or subtracting its average reward normalized by~$\tilde{q}_v$, i.e., $\sweight'_i (t) = \frac{q_i(t_\orderF)}{\tilde{q}_v} \cdot \frac{1}{\nochosen_i(t)}\sum_{j=t_n + 1}^{t}  \serv_{i}(j) \cdot  \mathbb{I}\{i \in S_j\}$. $G'_{u,2}$ can be obtained simply by adding the confidence interval. Let $A'$ is the augmentation up to node $v$, and let $A'_1=A'-\schedule_{t-1}$ and let $A'_2=A' \cap \schedule_{t-1}$. Then the gains
    \[
    \begin{split}
        G'_{v,1} 
            &\textstyle = \sum_{i \in A'_1} \hat{w}'_i - \sum_{j \in A'_2} \hat{w}'_j \\ 
            &\textstyle = \frac{\tilde{q}_u }{\tilde{q}_v} G'_{u,1} + (\mathbb{I}\{v\in A'_1\} - \mathbb{I}\{v\in A'_2\}) \cdot \hat{w}'_v,\\
        G'_{v,2} 
            &\textstyle = \sum_{i \in A'_1} \sqrt{\frac{(\noLink+1)\ln t}{\nochosen_i}} - \sum_{j \in A'_2} \sqrt{\frac{(\noLink+1)\ln t}{\nochosen_j}} \\
            &\textstyle = G'_{u,2} + (\mathbb{I}\{v\in A'_1\} - \mathbb{I}\{v\in A'_2\}) \cdot \sqrt{\frac{(\noLink+1)\ln t}{\nochosen_v}},
    \end{split}
    \]
    can be computed given the value of $\tilde{q}_u$ and the gains $G'_{u,1}, G'_{u,2}$. By repeating this during the augmenting stage, we can obtain the gain  normalized by~$\tilde{q}^*$ at the terminus.
\end{enumerate}
\noindent \emph{Remarks:} During a frame time, the local normalizer of a node is non-decreasing over time slots. In addition, at the same time slot, two nodes in the network may have a different normalizer value. Hence, our previous analysis results for A$^k$-UCB cannot be directly applied to $d$A$^k$-UCB. However, we highlight that, given a time slot, all the nodes in the same augmentation have the same value of the (local) normalizer, which is of importance, since the gain comparison for making a decision occurs only within an augmentation. 
%We conjecture that $d$A$^k$-UCB also achieves the logarithmic growth of $\alpha$-regret and $\alpha$ fraction of the capacity region, with $\alpha=\alpk$. A rigorous proof for this remains as an interesting open problem. 
On the other hand, as the time slot $t$ increases, the value of the global normalizer $\maxq(t_n)$ is disseminated throughout the network and all the local normalizers will converge to this value. 
Considering that, it is not difficult to show that there exists some $T'$ such that all nodes~$v$ have $\tilde{q}_v = \maxq(t_n)$ with probability close to $1$ for all $t > T'$, we believe that $d$A$^k$-UCB also achieves $O(\log T)$ regret performance and ${\alpk}\Lambda$ capacity, if the frame length $\framel$ is sufficiently large. Rigorous proof remains as future work.

\section{Numerical Results} \label{sec:sim}

We evaluate the performance of our proposed schemes through simulations. We first consider a 4x4 grid network topology with the primary interference model, and then conduct extend simulations with a randomly generated network.
Time is slotted. At each time slot, a packet arrives at link $i$ with probability $\lambda_i$, and for a scheduled link $j$, a packet successfully departs the link with probability $\mu_j$. Both $\vlambda$ and $\vmu$ are unknown to the controller. The arrivals and the departures are independent across the links and time slots.
%the instance link rate $\dep_i(t)$ is drawn from an i.i.d. Bernoulli distribution with mean $\mu_i$.}
%$\lambda_i = 0.08$ for all $i$
%$\mu_i \in [0.25, 0.75]$

\begin{figure*}
    \centering
    \subfigure[A randomly generated network]{
        \includegraphics[width=0.31\linewidth]{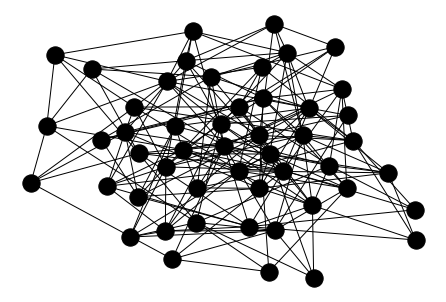}
        \label{fig:mTopology}
    }
    \subfigure[Performance in the random network]{
        \includegraphics[width=0.31\linewidth]{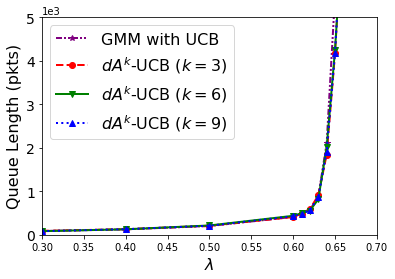}
        \label{fig:mdAkUCB}
    }
    \subfigure[Performance in a ring network]{
        \includegraphics[width=0.31\linewidth]{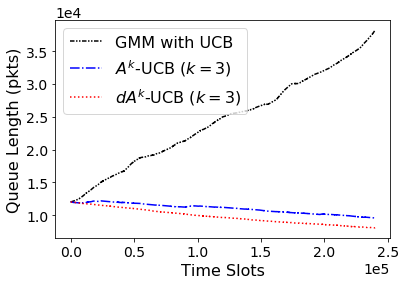}
        \label{fig:ring}
    }
    \caption{Stability in different network topologies. $d$A$^k$-UCB achieves high performance in a larger randomly-generated network, and GMM with UCB may suffer from low performance in a specific ring network.} \label{fig:mstability}
\end{figure*}

%%% Regret performance %%%
\noindent {\bf Regret performance:} We first investigate the regret performance of A$^k$-UCB and $d$A$^k$-UCB. We set the seed probability $p=0.2$ for the schemes. We consider a large frame length of $\framel = 10^6$ time slots to observe their regret growth. An identical arrival rate $\lambda_i = 0.08$ is set for all links $i$, and the departure rate $\mu_i$ is set uniformly at random in range $[0.25, 0.75]$\footnote{Due to randomized $\mu_i$, each link has different traffic load despite the identical arrival rate. Setting a different arrival rate for each link leads to similar results.}. We simulate the two schemes with different~$k$'s and measure their regret performance. For the comparison across frames, the regret value is set to $0$ at each frame start, and normalized with respect to the maximum expected reward sum $\maxtotalmw_{\vmweight}$ within the frame.

Fig.~\ref{fig:global_regret} illustrates the regret traces of A$^k$-UCB, which is an average of 10 simulation runs. We can observe the logarithmic regret growth in both cases. Recall that our regret analysis in Proposition~\ref{pro:regretbound} is for $\alpha$-regret with $\alpha = \alpk$. Thus empirical performance of the proposed schemes are much better than the analytical bound. The performance in terms of $\alpha$-regret is shown in Fig.~\ref{fig:alpha_regret}, where the gap from $0$ can be interpreted as the level of practical difficulty in achieving analytic performance bound: as $k$ increases, it harder to achieve $\alpk$-regret. 
Fig.~\ref{fig:local_regret} shows the regret values of $d$A$^k$-UCB. Comparing with those of A$^k$-UCB, they achieve similar learning performance in terms of regret, and thus we can conclude that the performance loss due to local normalizer in $d$A$^k$-UCB is negligible in practice.

%%% queue lengths %%% 
\noindent {\bf Scheduling efficiency:} 
We evaluate scheduling efficiency of the proposed schemes in the grid network.
We use the same simulation settings, but change frame length $\framel$ and arrival rate $\lambda$ ($\lambda_i = \lambda$ for all $i$). By increasing $\lambda$, the arrival rate gets closer to the boundary of its stability region, and when this occurs, the queue length will soar quickly. We conduct each simulation for $10^6$ time slots and measure queue lengths when the simulations end. For each $\lambda$, we average the queue length of $10$ simulation runs. By observing the arrival rate where the queue length starts increasing quickly, we can indirectly compare the achievable stability region $\gamma\Lambda$ for different scheduling policies~\cite{joo_ton09a}. A policy with larger $\gamma$ is better.
%change the arrival rate by setting $\lambda_i = \lambda$ for all $i$ and increasing the  value of $\lambda$. We consider the same simulation settings with different $\framel$'s and $\lambda$'s. Note that, under any scheduling scheme, when the arrival rate gets closer to the boundary of its stability region, the queue length soars quickly. Thus, we can estimate the performance bound of the scheme from the queue lengths. 

Fig.~\ref{fig:frame_size} demonstrates how the bound changes according to the frame length. From the results, we can observe that the critical point of $\lambda$, around which the queue length starts soaring, is increasing for $\framel \le 10^6$ and then decreasing for $\framel \ge 2 \cdot 10^6$. This is somewhat expected, since too small frame length will lead to incomplete learning, and a larger frame length results in a relatively slower response to the queue dynamics.

We now evaluate the performance of A$^k$-UCB in comparison with two schemes of MWM and UCB-based GMM. MWM is a well-known optimal scheduler~\cite{tassiulas_tac92}, and it is a centralized algorithm that not only requires the knowledge about the weight $q_i(t) \mu_i$ at each time slot $t$, but also has a high-order computational complexity. In our simulations, we use its performance as a reference value. The UCB-based GMM~\cite{stahlbuhk_adhoc2019}) finds a matching by including the link with the highest UCB index first. It is known to achieves $\frac{1}{2}\Lambda$ and has the linear computational complexity.

Fig.~\ref{fig:global_queue} demonstrate the queue lengths of MWM, UCB-based GMM, and A$^k$-UCB. MWM operates at each time slot, and for UCB-based GMM and A$^k$-UCB, we use $\framel = 5000$. Each simulation runs for $10^6$ time slots (i.e., $200$ frames) and we measure the queue lengths after the simulations.
For MWM, the queue length quickly increases at around $\lambda = 0.09$, which can be considered as the boundary of the capacity region $\Lambda$. Under A$^k$-UCB with $k=2,3,4$, the queue lengths increase quickly around $\lambda = 0.084$ for all $k$, which exceeds their theoretic bound $\alpk \cdot 0.09 = 0.03, 0.045, 0.054$, respectively. Interestingly, the impact of $k$ on throughput is not significant, which seems to be due to the small network size -- we can observe that a larger $k$ leads to lower queue lengths in all arrival rates and thus achieves better delay performance. The UCB-based GMM achieves the performance closet to that of MWM, which is also far beyond its theoretic bound $\frac{1}{2}$.
Fig.~\ref{fig:local_queue} shows that $d$A$^k$-UCB achieves almost the same performance as A$^k$-UCB.

\noindent {\bf Performance in randomly generated networks:} 
We now evaluate the performance of the proposed schemes in a larger, irregular-shaped network. To this end, we randomly generate a network of $50$ nodes and $200$ links as shown in Fig.~\ref{fig:mTopology}, and run simulations for $1000$ frame times with $\framel = 500$ (i.e., total $5\cdot 10^5$ time slots). For each link $i$, we set the successful transmission rate $\mu_i$ uniformly at random in range $[0.25, 0.75]$, and set the arrival rate as $\lambda_i = \lambda \cdot \rho_i$, where $\rho_i$ is chosen uniformly at random in range $[0.4,0.7]$. Due to high computational complexity of MWM, we simulate only UCB-based GMM, A$^k$-UCB, and $d$A$^k-$UCB in this experiment, and use the performance of UCB-based GMM as a reference value. It has been observed that GMM algorithm often achieves the optimal scheduling performance in this randomized network environment~\cite{joo_ton09a}.

Fig.~\ref{fig:mdAkUCB} demonstrates the queue lengths of UCB-based GMM and $d$A$^k$-UCB with $k=3,6,9$. All $4$ schemes achieve almost-identical performance, and the setting of $k$ is not sensitive to the performance. The results of A$^k$-UCB are almost identical to that of $d$A$^k$-UCB, and thus omitted.

\noindent {\bf Low performance of UCB-based GMM:} 
So far, UCB-based GMM achieves close-to-optimal performance despite its low performance guarantee of $\frac{1}{2}\Lambda$. It is an interesting question whether the performance bound is not tight due to technical difficulties and its true performance is close to the optimal.
%Then one may use UCB-based GMM in moderate-size networks accounting for its linear complexity.
Unfortunately, however, we shows in the next experiment that this is not the case, and UCB-based GMM may suffer from low performance in a certain scenario. 
%\iftoggle{tech_report}
%{
    We consider a $6$-link ring topology, where the links are numbered from $1$ to $6$ in a clockwise direction. The service rate of each link follows a Bernoulli distribution with mean $\frac{1}{2}$ and the packet arrival on each link is also a Bernoulli process with mean\footnote{This implies that $\vlambda \in (\frac{2}{3}+4\epsilon)\Lambda$ since an optimal scheduler can support arrival rate of up to $\frac{1}{4}$ on each link.} $\frac{1}{6}+\epsilon$ where $\epsilon = 0.08$. 
    We set the frame length $\framel = 6000$. Other environment settings are the same as before, except that the initial queue length is $\{\frac{3\framel}{6}, \frac{2\framel}{6}, \frac{\framel}{6},\frac{3\framel}{6}, \frac{2\framel}{6}, \frac{\framel}{6}\}$.
%}
%{
%    Detailed settings for the simulation can be found in~\cite{tr}.
%}
Fig.~\ref{fig:ring} shows the queue length traces for UCB-based GMM and our proposed schemes. We can observe that while the queue lengths of A$^k$-UCB and $d$A$^k$-UCB are stabilized, those of the UCB-based GMM keep increasing.
%\iftoggle{tech_report}
%{
    This is because, the greedy algorithm tends select a matching with the two links of the largest queue at the beginning of each frame. In contrast, A$^k$-UCB and $d$A$^k$-UCB select a matching with three links by considering their weight sum.
%}
This result implies that in a certain circumstance, UCB-based GMM may suffer from low scheduling efficiency. 

\begin{comment}
\begin{figure}[t]
\includegraphics[width=0.35\textwidth]{Ring_GMM.png}
\caption{Stability in a 6-link ring topology.}
\label{fig:ring}
\end{figure}
\end{comment}

\section{Conclusion} \label{sec:conclusion}

In this work, we addressed the joint problem of learning and scheduling in multi-hop wireless networks. Without a priori knowledge on link rates, we aim to find a sequence of schedules such that all the queue lengths remain finite under packet arrival dynamics. By incorporating the augmentation algorithm into a learning procedure, we develop provably efficient low-complexity schemes that i) achieve logarithmic regret growth in learning, and ii) have the throughput performance that can be arbitrarily close to the optimal. We extend the result to a distributed scheme that is amenable to implement in large-scale networks. We also verify our results through simulations.

\bibliographystyle{IEEEtran}
\bibliography{bibliography}

\iftoggle{tech_report}
{}
{

\begin{IEEEbiography}[{\includegraphics[width=0.96in, keepaspectratio]{pdh2.eps}}]
{Daehyun Park} received his M.S. degree from the school of ECE at Ulsan National Institute of Science and Technology (UNIST) in 2020. His research interests include multi-armed bandits.
%\vspace{\biovspace}
\end{IEEEbiography}

\begin{IEEEbiography}[{\includegraphics[width=0.96in, keepaspectratio]{sunjung_1.eps}}]
{Sunjung Kang} received her M.S. degree from the school of ECE at Ulsan National Institute of Science and Technology (UNIST) in 2018. She is currently a Ph.D. student in the department of ECE at The Ohio State University. Her research interests include the age of information, remote estimation and multi-armed bandits.
%\vspace{\biovspace}
\end{IEEEbiography}

%\vspace{\biovspace}
\begin{IEEEbiography}[{\includegraphics[width=0.96in,keepaspectratio]{cjoo-20190415.eps}}]
{Changhee Joo} received the Ph.D. degree from Seoul National University in 2005. He was with Purdue University and The Ohio State University, and then worked at Korea University of Technology and Education (KoreaTech), and  Ulsan National Institute of Science and Technology (UNIST). Since 2019, he has been with Korea University. 
His research interests are in the broad areas networking, learning, modeling, and optimization. He was a recipient of the IEEE INFOCOM 2008 Best Paper Award, the KICS Haedong Young Scholar Award (2014), the ICTC 2015 Best Paper Award, and the GAMENETS 2018 Best Paper Award. 
He was an Associate Editor of the \emph{IEEE/ACM Transactions on Networking}, and currently an Editor of the \emph{IEEE Transactions Vehicular Technology}, a Division Editor of the \emph{Journal of Communications and Networks}, and has served several primary conferences as a technical program committee member, including IEEE INFOCOM, ACM MOBIHOC, IEEE WiOpt, and IEEE GLOBECOM.
%\vspace{\biovspace}
%\vspace{-0.4cm}
\end{IEEEbiography}
\vfill
}

%%%%%%%%%%%%%%%%%%%%%%%%%%%%
% Appendix
\appendices

%%%%%%%%%%%%%%%%%%%%%%%%%%%%
% app A & B
\iftoggle{tech_report}  % app:A and app:B are included in Technical Report only
{

%%%%%%%%% app:A and app:B need a revision !!!!!!!

\section{Proof of Lemma~\ref{lem:possibility}} \label{app:A}

The lemma shows the possibility for our algorithm to yield a disjoint augmentation of $\schedule_{t-1}$ satisfying the inequality. To this end, we consider the best augmentation set that the algorithm can generate. It can be obtained by combining the previous schedule $\schedule_{t-1}$ and an optimal matching $\opts_{\vmweight}$ with the largest weight sum.

We consider two specific matchings (i.e., super arms) of $\schedule_{t-1}$, the schedule at the previous time slot, and $\opts_{\vmweight} \in \setOpts_{\vmweight}$, an optimal matching with the largest weight sum under weight $\vmweight$. Let us define symmetric difference $S_d = \schedule_{t-1}~ \Delta ~ \opts_{\vmweight} = (\schedule_{t-1}-\opts_{\vmweight}) \cup (\opts_{\vmweight} - \schedule_{t-1})$, and consider graph $G' =(V,S_d)$ that contains only the links in $S_d$. Note that for any vertex in $G'$, its degree is at most $2$ because $\schedule_{t-1}$ and $\opts_{\vmweight}$ are a matching. Further, we can define \emph{component} as a set of connected links in $G'$, which is either a path or an even-length cycle such that links in $\schedule_{t-1}$ and links in $\opts_{\vmweight}$ are alternating, i.e., an augmentation. There can be multiple components in $G'$. For a component $C$, let $C_{opt} = C \cap \opts_{\vmweight}$ and $C_{t-1} = C \cap \schedule_{t-1}$. We note that $size(C) =\vert C_{opt} \vert$ and both $C_{opt}$ and $C_{t-1}$ are a matching in $G$.

Given weight $\vmweight$, let $G_t^{\vmweight} (A)$ denote the gain of augmentation defined as in (\ref{eq:gain}). We also extend the definition to a set $\mathcal{A}$ of disjoint augmentation by adding up the gains, i.e., $G_t^{\vmweight} (\mathcal{A}) = \sum_{A \in \mathcal{A}} G_t^{\vmweight} (A)$. The following two lemmas show that each component $C$ can be decomposed into small-size augmentations with a set of large-weight-sum links. 
\begin{lemma} \label{lem:path}
For component $C$ that is a path, there exists a set of disjoint augmentations $\mathcal{A}(C) \subset C$ such that 
\begin{enumerate}
    \item $size(A) \leq k$ for all augmentations $A \in \mathcal{A}(C)$,
    \item $G^{\vmweight}_t(\mathcal{A}(C)) \geq \frac{k}{k+1} \cdot \totalmw_{\vmweight}(C_{opt}) - \totalmw_{\vmweight}(C_{t-1})$.
\end{enumerate} 
\end{lemma}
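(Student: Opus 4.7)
The plan is to use an averaging (pigeonhole) argument to split $C$ into bounded-size sub-augmentations while losing only a $\frac{1}{k+1}$ fraction of the new-link weight. First I would index the new links of $C$ as $e_1, e_2, \dots, e_n$ in the order they appear along the path (so $C_{opt} = \{e_1, \dots, e_n\}$) and the interleaving old links as $f_0, f_1, \dots, f_n$, where the boundary links $f_0, f_n$ are present only when $C$ begins or ends with a link in $\schedule_{t-1}$. For each offset $r \in \{0, 1, \dots, k\}$ I define the drop set $D_r = \{\, e_j : j \equiv r \pmod{k+1}\,\}$. Since every new link lies in exactly one $D_r$, we have $\sum_{r=0}^{k}\totalmw_{\vmweight}(D_r) = \totalmw_{\vmweight}(C_{opt})$, and by averaging there exists $r^*$ with $\totalmw_{\vmweight}(D_{r^*}) \le \frac{1}{k+1}\,\totalmw_{\vmweight}(C_{opt})$.

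Second, I would build $\mathcal{A}(C)$ by treating the new links in $D_{r^*}$ as cuts. Removing them splits the remaining new links into maximal consecutive runs $\{e_{j+1}, \dots, e_{j'-1}\}$ (with $e_j, e_{j'} \in D_{r^*}$, or $j=0$, $j'=n+1$ at the ends of $C$), each of size at most $k$ by the spacing of $D_{r^*}$. For each non-empty run I take the augmentation $A$ consisting of those new links together with the interleaving old links $f_{j+1}, \dots, f_{j'-2}$ and the boundary old links $f_j, f_{j'-1}$ whenever they exist. The key validity check is at an internal endpoint of $A$ (a vertex that is not an endpoint of $C$): a subpath of $C$ augments $\schedule_{t-1}$ correctly there iff its last link is in $\schedule_{t-1}$, since otherwise the old $\schedule_{t-1}$-link just outside $A$ at that vertex would conflict with the new link inside $A$ after the augmenting operation. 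Including $f_j$ and $f_{j'-1}$ enforces this, while excluding the adjacent dropped new links guarantees node-disjointness of neighboring $A$'s. Endpoints of $A$ coinciding with endpoints of $C$ are automatically valid because those vertices carry no $\schedule_{t-1}$-link outside $C$, by the path-in-symmetric-difference structure. Each such $A$ contains at most $k$ new links, giving $size(A) \le k$.

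Finally, I would bound the gain by linearity. Since the union of new links across all $A \in \mathcal{A}(C)$ is exactly $C_{opt} - D_{r^*}$ and the union of old links is contained in $C_{t-1}$,
\begin{equation*}
    G^{\vmweight}_t(\mathcal{A}(C))
      \;\ge\; \totalmw_{\vmweight}(C_{opt} - D_{r^*}) - \totalmw_{\vmweight}(C_{t-1})
      \;\ge\; \tfrac{k}{k+1}\,\totalmw_{\vmweight}(C_{opt}) - \totalmw_{\vmweight}(C_{t-1}),
\end{equation*}
where the last step uses the choice of $r^*$. The main obstacle I anticipate is the validity check in the second step: one must handle the boundary old links adjacent to a cut carefully, and also the corner case in which a drop lands on $e_1$ or $e_n$, orphaning $f_0$ or $f_n$. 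Such an orphan can simply be discarded from $\mathcal{A}(C)$; the inequality is unaffected because we only ever upper-bound the total old-link weight included by $\totalmw_{\vmweight}(C_{t-1})$.
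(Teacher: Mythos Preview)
Your proposal is correct and follows essentially the same averaging argument as the paper: both partition the links of $C_{opt}$ into $k+1$ residue classes modulo $k+1$, drop one class chosen by pigeonhole, and observe that the remaining pieces are disjoint augmentations each of size at most $k$. Your treatment is in fact more careful than the paper's about the validity of the sub-augmentations at the cut points and the handling of boundary old links; the paper simply asserts that removing the chosen $C_{opt}$-links ``divides $C$ into a set of disjoint augmentations'' and then averages the gains directly.
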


\begin{proof}
When $size(C) \le k$, we set $\mathcal{A}(C) = C$. The gain, by definition, equals $G^{\vmweight}_t (\mathcal{A}(C))= \totalmw_{\vmweight}(C_{opt})-\totalmw_{\vmweight}(C_{t-1})$. Thus the two conditions are satisfied. 

When $size(C) > k$, we construct a family of augmentation sets $\{\mathcal{A}_i\}$ and show that at least one of the sets satisfies the two conditions. For path $C$, we select an endpoint. From the endpoint, we denote the first $(k+1)$ links in $C_{opt}$ by $e_1, e_2, \dots, e_{k+1}$. For each $e_i$, we can construct an augmentation set as follows. From $C$, we remove $e_i$ and every $2(k+1)$-th link thereafter (until we reach the other endpoint). Since $C$ is an augmentation and $e_i \in C_{opt}$, any subsequently removed links also belong to $C_{opt}$. After the removals, augmentation $C$ is divided into a set of disjoint augmentations, each of which includes at most $k$ links of $C_{opt}$. Let $\mathcal{A}_i$ denote this set of disjoint augmentations. By repeating the procedure for each $1 \le i \le k+1$, we can construct a family of augmentation sets $\{\mathcal{A}_i\}_{i=1}^{k+1}$ such that any augmentation $A \in \cup_{i=1}^{k+1} \mathcal{A}_i$ satisfies $size(A) \le k$.

Note that, in the construction procedure of $\{\mathcal{A}_i\}_{i=1}^{k+1}$, each link of $C_{opt}$ is removed exactly once, and no link of $C_{t-1}$ is removed. This implies that the gain sum over $\{\mathcal{A}_i\}$ satisfies
$\sum_{i=1}^{k+1} G^{\vmweight}_t (\mathcal{A}_i) = k \cdot \totalmw_{\vmweight}(C_{opt}) - (k+1) \cdot \totalmw_{\vmweight}(C_{t-1})$. Hence, there exists at least one $j$ such that 
$G^{\vmweight}_t (\mathcal{A}_j) \geq \frac{k}{k+1} \cdot \totalmw_{\vmweight}(C_{opt})
         -\totalmw_{\vmweight}(C_{t-1})$.
\end{proof}

Lemma~\ref{lem:path} shows the existence of a good set of disjoint augmentations in every path component in graph $G'$. We use Lemma~\ref{lem:path} to prove a slightly weaker result for a cycle component in $G'$.
\begin{lemma} \label{lem:cycle}
For component C that is a cycle, there exists a set of disjoint augmentations $\mathcal{A}(C)$ such that
%If component $C$ of $S_d$ is a cycle, then there exists a set $\mathcal{A}_C$ of disjoint augmentations such that 
    \begin{enumerate}
        \item $size(A) \leq k$ for all augmentations $A \in \mathcal{A}(C)$,
        \item $G^{\vmweight}_t (\mathcal{A}(C)) \geq \alpk \cdot  \totalmw_{\vmweight}(C_{opt})-\totalmw_{\vmweight}(C_{t-1})$.
    \end{enumerate}
\end{lemma}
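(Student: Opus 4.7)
The plan is to reduce the cycle case to the path case by converting the cycle component $C$ into a path through the deletion of a single well-chosen link in $C_{opt}$, and then invoking Lemma~\ref{lem:path} on that path.

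First I would handle the easy subcase $|C_{opt}| \le k$. Here the cycle $C$ itself is an augmentation of size at most $k$, so I would simply take $\mathcal{A}(C) = \{C\}$. Then $G^{\vmweight}_t(\mathcal{A}(C)) = \totalmw_{\vmweight}(C_{opt}) - \totalmw_{\vmweight}(C_{t-1}) \ge \alpk\,\totalmw_{\vmweight}(C_{opt}) - \totalmw_{\vmweight}(C_{t-1})$ holds trivially since $\alpk \le 1$.

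For the subcase $|C_{opt}| > k$, I would pick $e^* \in C_{opt}$ to be the link of minimum weight in $C_{opt}$ and form $C' = C \setminus \{e^*\}$. Since $C$ is an even alternating cycle, removing one link of $C_{opt}$ turns $C$ into a path augmentation $C'$ with $C'_{opt} = C_{opt} \setminus \{e^*\}$ and $C'_{t-1} = C_{t-1}$. Applying Lemma~\ref{lem:path} to $C'$ yields a disjoint augmentation set $\mathcal{A}(C')$ with each augmentation of size at most $k$ and
\begin{equation*}
G^{\vmweight}_t(\mathcal{A}(C')) \;\ge\; \tfrac{k}{k+1}\,\totalmw_{\vmweight}(C'_{opt}) - \totalmw_{\vmweight}(C'_{t-1}) \;=\; \tfrac{k}{k+1}\bigl(\totalmw_{\vmweight}(C_{opt}) - \mweight_{e^*}\bigr) - \totalmw_{\vmweight}(C_{t-1}).
\end{equation*}
I would then set $\mathcal{A}(C) = \mathcal{A}(C')$, so property~(1) is inherited from Lemma~\ref{lem:path}.

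The only remaining task is to verify property~(2), which is the main (but mild) obstacle: we must convert the $\frac{k}{k+1}$ factor inherited from the path case into $\frac{k-1}{k+1}$ in the presence of the discarded link $e^*$. Here I would use the averaging argument: since $|C_{opt}| \ge k+1$, the minimum-weight link satisfies $\mweight_{e^*} \le \totalmw_{\vmweight}(C_{opt})/|C_{opt}| \le \totalmw_{\vmweight}(C_{opt})/(k+1) \le \totalmw_{\vmweight}(C_{opt})/k$. Plugging this bound into the inequality above gives
\begin{equation*}
\tfrac{k}{k+1}\bigl(\totalmw_{\vmweight}(C_{opt}) - \mweight_{e^*}\bigr) \;\ge\; \tfrac{k}{k+1}\cdot\tfrac{k-1}{k}\,\totalmw_{\vmweight}(C_{opt}) \;=\; \alpk\,\totalmw_{\vmweight}(C_{opt}),
\end{equation*}
which establishes property~(2) and completes the proof. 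The whole argument is a short cycle-to-path reduction; the technical core is just the observation that the minimum-weight element of $C_{opt}$ is cheap enough that sacrificing it costs only a factor of $\frac{k-1}{k}$ relative to $C_{opt}$.
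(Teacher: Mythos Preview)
Your proof is correct and follows essentially the same approach as the paper: remove the minimum-weight link of $C_{opt}$ to turn the cycle into a path, invoke Lemma~\ref{lem:path}, and use the averaging bound on the discarded link to absorb the loss. The only cosmetic difference is that the paper bounds $\totalmw_{\vmweight}(C_{opt}\setminus\{e^*\}) \ge \tfrac{k}{k+1}\totalmw_{\vmweight}(C_{opt})$ and then uses $\tfrac{k^2}{(k+1)^2}\ge\tfrac{k-1}{k+1}$, whereas you weaken $\mweight_{e^*}\le \totalmw_{\vmweight}(C_{opt})/k$ to land on $\alpk$ directly; both routes are equivalent.
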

%\indent Note that the ratio is $\alpk$ instead of $\frac{k}{k+1}$ as it was for paths.
\begin{proof} 

As in the proof of Lemma~\ref{lem:path}, when $size(C) \le k$, we can set $\mathcal{A}(C) = C$, and the two conditions are satisfied. Thus, we assume that $size(C) > k$.

Let $e \in C_{opt}$ denote the link with the smallest weight in $C_{opt}$. Consider path $\hat{C}=C-e$ and define $\hat{C}_{opt} = \hat{C}\cap \opts_{\vmweight}$ and $\hat{C}_{t-1} = \hat{C}\cap \schedule_{t-1}$. Since $\hat{C}_{opt}$ is obtained by removing the link of the smallest weight from $C_{opt}$, it has a larger average weight than $C_{opt}$, i.e., $\frac {\totalmw_{\vmweight}(\hat{C}_{opt})}{\vert C_{opt} \vert-1} \ge \frac{\totalmw_{\vmweight}(C_{opt})}{\vert C_{opt} \vert}$, which implies that, from $\vert C_{opt} \vert = size(C) > k$,
\[
\totalmw_{\vmweight}(\hat{C}_{opt}) 
    %\geq \frac{\vert C_1\vert-1}{\vert C_1\vert} \cdot \totalmw_{\vmweight}(C_{1}) 
    \textstyle  \geq \frac{k}{k+1} \cdot \totalmw_{\vmweight}(C_{opt}).
\]

Lemma~\ref{lem:path} tells us that there exists a set $\mathcal{A}(\hat{C})$ of disjoint augmentations, where i) the size of each augmentation is at most $k$, and ii) the following inequality holds: $G^{\vmweight}_t(\mathcal{A}(\hat{C})) 
        \textstyle \geq \frac{k}{k+1} \cdot  \totalmw_{\vmweight}(\hat{C}_{opt})-\totalmw_{\vmweight}(\hat{C}_{t-1})
        %&\geq \frac{k}{k+1}\left(\frac{k}{k+1}\totalmw_{\vmweight}(C_{1})\right)-\totalmw_{\vmweight}(\hat{C}_{2})\\
        \textstyle \geq \alpk \cdot \totalmw_{\vmweight}(C_{opt})-\totalmw_{\vmweight}(C_{t-1})$.
\end{proof}

We now build $\augstar$ for the proof of Lemma~\ref{lem:possibility}. For each component $C$ of $S_d$, the corresponding $\mathcal{A}(C)$ can be specified by Lemma~\ref{lem:path} or Lemma~\ref{lem:cycle}. We set $\augstar = \cup_C \mathcal{A}(C)$. Then all augmentations $A \in \augstar$ have $size(A) \le k$. Further, we have 
\[
\begin{split}
    G^{\vmweight}_t(\augstar)
        &\textstyle \geq \alpk \cdot  \sum_C \totalmw_{\vmweight}(C_{opt}) 
            -\sum_C \totalmw_{\vmweight}(C_{t-1})  \\
        &\textstyle \geq \alpk \cdot \sum_{i \in \opts_{\vmweight}} \mweight_i 
            -\sum_{i \in \schedule_{t-1}} \mweight_i.
\end{split}
\]
Hence, we have 
$\totalmw_{\vmweight} (\schedule_{t-1} \oplus \augstar) = G^{\vmweight}_t (\augstar) + \totalmw_{\vmweight} (\schedule_{t-1}) \ge \alpk \cdot \maxtotalmw_{\vmweight}$.

\section{Proof of Lemma~\ref{lem:probability}} \label{app:B}

Suppose that, given $\schedule_{t-1}$, our algorithm has generated the set $\mathcal{A}$ of disjoint augmentations. Let $\augstar$ denote a near optimal set of augmentations satisfying  $\totalmw_{\vmweight}(\schedule_{t-1}\oplus\mathcal{A^*}) \ge \alpk \cdot \totalmw_{\vmweight} (\opts_{\vmweight})$, which consists of $d$ disjoint augmentations $\{A_1, \dots, A_d\}$. For each $A_i$, let $n_i$ denote an (arbitrary) endpoint of $A_i$ if $A_i$ is a path, or an arbitrary node in $A_i$ if $A_i$ is a cycle.
%We obtain a lower bound to $\Pr[\mathcal{A}=\mathcal{A}^* ~ \vert ~\schedule_{t-1}]$.
Consider an event $\boldsymbol{\mathcal{E}}$ under our algorithm such that all of the following are true: 
\begin{itemize}
\item the set of self-selected seeds equals $\{n_1, \dots, n_d\}$,
\item all seed nodes $n_i$ select its intended size as $size(A_i)$, and
\item the generated set of augmentation $\mathcal{A}$ equals $\mathcal{A}^*$. 
\end{itemize}
Since it is clear that $\Pr[\mathcal{A}=\mathcal{A}^* \vert \schedule_{t-1}] \geq \Pr[\boldsymbol{\mathcal{E}} \vert \schedule_{t-1}]$, we focus on a lower bound of $\Pr[\boldsymbol{\mathcal{E}} \vert \schedule_{t-1}]$.

Note that (i) the seed selection of $\{n_i\}$ occurs with probability $p^d(1-p)^{V-d}$, where $V = \vert \mathcal{V}\vert$, 
and (ii) the probability that all seeds $n_i$ independently select its intended size of $size(A_i)$ is $k^{-d}$. 
(iii) Finally, we compute the probability that the generated set of augmentation $\mathcal{A}$ equals $\mathcal{A}^*$, which occurs when, at each iteration, each active node select its corresponding neighbor that belongs to $\mathcal{A}^*$. Each selection has probability $\Sigma^{-1}$, where $\Sigma$ is the maximum node degree, and total $V$ selections will be made under the algorithm. Thus, combining the three probabilities, we have that 
\begin{equation}
\begin{split}
\Pr [\boldsymbol{\mathcal{E}} \vert \schedule_{t-1}] 
    &\geq p^d(1-p)^{V-d} \cdot k^{-d} \cdot \Sigma^{-V} \\
    &\textstyle \geq \min \left\{ 1,\left( \frac{p}{1-p} \right)^V \right\} \left( \frac{1-p}{k\Sigma}\right)^V.
\end{split}
\end{equation}
Setting $\delta = \min \left\{ 1,\left( \frac{p}{1-p} \right)^V \right\} \left( \frac{1-p}{k\Sigma}\right)^V$ completes the proof.
}
%%%%%%%%%%%%%%%%%%% end of app:A and app:B

%%%%%%%%%%%%%%%%%%%%%%%%%%%%%
% app:C
%\iftoggle{tech_report}{   % app:C will be included in Technical Report only

\iftoggle{tech_report}
{
%revision by 190324
\section{Proof of Lemma~\ref{lem:nearoptimal}} \label{app:C}

At time $t>0$, define $l_t=\lceil \frac{4\noLink ^2(\noLink +1)\ln t}{(\alpgapmin)^2} \rceil$. Suppose that a non-near-optimal matching $S \in \cnopts_{\vmweight}$ such that $\nochosen_S(t) \ge l_h$ is scheduled. It also implies $\nochosen_{i}(t) \geq l_h$ for all links $i \in S$. Also, consider an arbitrary near-optimal matching $S' \in \nopts_{\vmweight}$, and let $S = \{ e_1, e_2, \dots, e_A \}$ and $S' = \{ e'_1, e'_2, \dots, e'_B \}$, where $A = \vert S \vert$ and $B=\vert S' \vert$, respectively.

We consider the event that the total index sum over links in $S$ is greater than that over links in $S'$. 
\begin{equation} \label{eq:event_reward}
\begin{split}
    &\mathbb{I} \left\{\totalmw_{\viweight_t}(S) \geq \totalmw_{\viweight_t}(S')\right\} 
     %=\mathbb{I}\left\{\sum_{e \in S} \ucbI_{e,t} \geq \sum_{e \in S'}\ucbI_{e,t}\right\}
        \textstyle = \mathbb{I}\left\{ \sum_{i=1}^A \iweight_{e_i,t} \geq \sum_{j=1}^B \iweight_{e'_j,t} \right\} \\
        &\leq \mathbb{I} \left[  \scriptstyle 
         \max\limits_{l \leq c_1,\dots,c_A <t}  \sum\limits_{i=1}^A \left(  \sweight_{e_i,c_i}+\sqrt{\frac{(\noLink+1)\ln t}{c_i}} \right) \right. \\
            &\hspace{2cm}\scriptstyle ~~~~~~ \left.\geq \min\limits_{0 < c'_1,\dots,c'_B<t} \sum\limits_{j=1}^B  \left(  \sweight_{e'_j,c'_j}+\sqrt{\frac{(\noLink+1)\ln t}{c'_j}} \right) \right], \\
\end{split}
\end{equation}
which is bounded by 
$
 \textstyle \sum^{t}_{c_1=1} \cdots \sum^{t}_{c_A=1} \sum^{t}_{c'_1=1} \cdots \sum^{t}_{c'_B=1}
 \mathbb{I} \{\mathbb{Z}\},
$
where event $\mathbb{Z}$ is defined as 
\[\scriptstyle 
\left\{ \sum\limits_{i=1}^A \left( \sweight_{e_i,c_i}+\sqrt{\frac{(\noLink +1)\ln t}{c_i}} \right) \geq \sum\limits_{j=1}^B \left( \sweight_{e'_j,c'_j}+\sqrt{\frac{(\noLink +1)\ln t}{c'_j}}  \right)\right\}.
\]
We further decompose $\mathbb{Z}$ into $\{\mathbb{A}_i\}$, $\{\mathbb{B}_j\}$ and $\mathbb{C}$ as
\begin{equation}
\begin{split}
    \mathbb{A}_i :&\textstyle  \sweight_{e_i,c_i}-\sqrt{\frac{(\noLink +1)\ln t}{c_i}} \geq \mweight_{e_i},\\
    \mathbb{B}_j :&\textstyle  \sweight_{e_j,c'_j}+\sqrt{\frac{(\noLink +1)\ln t}{c'_j}} \leq \mweight_{e'_j},\\
    \mathbb{C}   :&\textstyle  \sum_{j=1}^B \mweight_{e'_j} < \sum_{i=1}^A ( \mweight_{e_i} +2\sqrt{\frac{(\noLink +1)\ln t}{c_i}} ).
\end{split}
\end{equation}
Note that if $\mathbb{I}\{\mathbb{Z}\}=1$, we should have $\sum_{i=1}^A \mathbb{I}\{\mathbb{A}_i\} + \sum_{j=1}^B \mathbb{I}\{\mathbb{B}_j\} + \mathbb{I}\{\mathbb{C}\} \ge 1$, which can be easily shown by contradiction. The probability of each event of $\{\mathbb{A}_i\}$ and $\{\mathbb{B}_j\}$ can be bounded by the Chernoff-Hoeffding bound as
\begin{equation}
\begin{split}
\textstyle \Pr \{ \sweight_{e_i,c_i}-\sqrt{\frac{(\noLink +1)\ln t}{c_i}} \geq \mweight_{e_i} \}
    &\leq t^{-2(\noLink +1)},\\
\textstyle \Pr \{ \sweight_{e'_j,c'_j}+\sqrt{\frac{(\noLink +1)\ln t}{c'_j}} \leq \mweight_{e'_j} \} 
    &\leq t^{-2(\noLink +1)}. 
\end{split}
\end{equation}
Further, we can show $\Pr\{\mathbb{C}\}=0$ if $ c_i \geq \lceil \frac{4\noLink ^2(\noLink +1)\ln t}{(\alpgapmin)^2} \rceil$ for all links $e_i \in S$ as follows. Suppose that $\mathbb{C}$ occurs. Then we should have $0 > \sum_{j=1}^B \mweight_{e'_j} - \sum_{i=1}^A \mweight_{e_i} - \sum_{i=1}^A 2\sqrt{\frac{(\noLink+1)\ln t}{c_i}}$. The right term is no smaller than $\alpha\cdot \maxtotalmw_{\vmweight} -\max_{S \in \cnopts_{\vmweight}}r_{\boldsymbol{\mweight}}(S)  -\alpgapmin$, which equals $0$ by definition of $\alpgapmin$, resulting in a contradiction.

Using these and taking expectation on (\ref{eq:event_reward}), we have
\begin{equation*}
\begin{split}
&\Pr \textstyle \left\{ r_{ \boldsymbol{ \iweight_t} }(S) \geq r_{ \boldsymbol{ \iweight_t } } (S') \right\}
    \leq \sum\limits^{t}_{c_1=1} \cdots \sum\limits^{t}_{c_A=1}\sum\limits^{t}_{c'_1=1} \cdots \sum\limits^{t}_{c'_B=1} 
        \Pr\{\mathbb{Z} \}\\
    &\textstyle \leq \sum\limits^{t}_{c_1=1} \cdots \sum\limits^{t}_{c_A=1}\sum\limits^{t}_{c'_1=1} \cdots \sum\limits^{t}_{c'_B=1}   
        \left( \sum\limits_{i=1}^A\Pr\{\mathbb{A}_i\} + \sum\limits_{j=1}^B\Pr\{\mathbb{B}_j\} \right) \\
    %&\leq \sum^{t}_{c_1=1}\cdot\cdot\sum^{t}_{c_A=1}\sum^{t}_{c'_1=1}\cdot\cdot\sum^{t}_{c'_B=1}   2\noLink t^{-2(\noLink +1)}\\
    &\leq t^{A+B} \cdot 2\noLink t^{-2(\noLink+1)}    \leq 2\noLink t^{-2}.
\end{split}
\end{equation*}
%}{}
%%%%%%%%%%%%%% end of app:C
}

\iftoggle{tech_report}
{
%%%%%%%%%%%%%%%%%%%%%%%%%%%%
% app:D
\section{Proof of Proposition~\ref{pro:regretbound}} \label{app:D}

Overall, we show that the number of explorations to non-near-optimal matchings is bounded. To this end, we consider a sequence of time points where a non-near-optimal matching is sufficiently played at each point. They serve as a foothold to count the total number of plays of non-near-optimal matchings.

To begin with, for an arbitrary fixed time $h>0$, let $l_h = \lceil \frac{4\noLink^2(\noLink+1)\ln h}{(\alpgapmin)^2} \rceil$, and let $\hat{t}_h$ denote the first time when all non-near-optimal matchings are sufficiently (i.e., more than $l_h$ times) explored, i.e., 
\begin{align*}
\hat{t}_h = \min \left\{ t ~\middle|~ 
    \nochosen_S(t) \geq  l_h ~\text{for all}~ S \in \cnopts_{\vmweight} \right\}. 
\end{align*}

\noindent \textbf{(1) When} $\hat{t}_h \leq h :$ 
Let $\cnopts_{\vmweight} = \{S^1,S^2,...,S^M\}$ with $M = \vert \cnopts_{\vmweight} \vert$. Further we define $\overline{\setS}(t) = \{ S \in \cnopts_{\vmweight} ~|~ \nochosen_S(t) \geq l_h\}$, which is the set of non-near-optimal matchings that are scheduled sufficiently many times by time $t$, and $\underline{\setS}(t) = \cnopts_{\vmweight} - \overline{\setS}(t)$ denotes the set of not-yet-sufficiently-scheduled non-near-optimal matchings.  
Also, let $t^n$ denote the time when matching $S^n$ is sufficiently scheduled, i.e., $\nochosen_{S^n}(t^n) = l_h$. Without loss of generality, we assume $t^1 < t^2 < \dots < t^M = \hat{t}_h$.

To apply the decomposition inequality (\ref{eq:decomposition}), we need to estimate the expected value of $\sum_{S \in \cnopts_{\vmweight}} \nochosen_S(\hat{t}_h)$, which can be written as
\begin{equation} \label{eq:tau_samplepath}
\begin{split}
&\textstyle \sum_{S \in \cnopts_{\vmweight}} \nochosen_S(\hat{t}_h) 
    \textstyle= \sum_{S \in \cnopts_{\vmweight}} \sum_{t=1}^{\hat{t}_h} \mathbb{I}\{\schedule_t=S\} \\
    %&= \sum_{j=1}^M \sum_{t=1}^{\hat{t}_h} \mathbb{I}\{\schedule_t=S^j\} \\
    &\textstyle = l_h M + \sum_{n=1}^{M-1}\sum_{t=t^n+1}^{t^{n+1}} \sum_{S \in \overline{\setS}(t^n)} \mathbb{I}\{ \schedule_t =S\}.
    %&= l_h M + \sum_{n=1}^{M-1} \sum_{t=t^n+1}^{t^{n+1}} \sum_{j=1}^n  \mathbb{I}\{ \schedule_t =S^j \}.
\end{split}
\end{equation}
%
%\iftoggle{tech_report}
%{
    Hence, we need to estimate $\sum_{S \in \overline{\setS}(t^n)} \Pr \{ \schedule_t =S \}$ for $t \in (t^n, t^{n+1}]$, which can be obtained as in the following lemma.
%}

\begin{lemma} \label{lem:prob-for-sufficiently-played-matching}
For each $t \in (t^n, t^{n+1}]$, we have 
    \begin{equation} \label{eq:upper-bound-for-sufficiently-played-matching}
    \begin{split}
        &\textstyle \sum_{S \in \overline{\setS}(t^n)} \Pr \{ \schedule_t =S \}\\
            &\textstyle \leq (1-\delta) \cdot \sum_{S \in \overline{\setS} (t^n)} \Pr \{\schedule_{t-1}=S\} \\
            &\textstyle + \Pr \{\schedule_{t-1} \in \underline{\setS}(t^n) \} +\left( \vert \overline{\setS}(t^n) \vert + \delta \right) \cdot 2\noLink t^{-2}.
    \end{split}
    \end{equation}
\end{lemma}
\begin{proof}
We first divide the case into three exclusive sub-cases based on the previous schedule $\schedule_{t-1}$: 
%From $\overline{\setS}(t^n) \cup \underline{\setS}(t^n) \cup \nopts_{\vmweight} = \setS$, we define three 
events $\mathbb{A}=\{ \schedule_{t-1} \in \nopts_{\vmweight} \}$, $\mathbb{B}=\{\schedule_{t-1} \in \underline{\setS}(t^n)\}$, and $\mathbb{C}=\{\schedule_{t-1} \in \overline{\setS}(t^n)\}$. Then we have 
\begin{align}
&\textstyle \sum_{S \in \overline{\setS}(t^n)} \Pr \{ \schedule_t =S \} \nonumber \\
%&&\sum_{j=1}^n \Pr \{ \schedule_t = S^j \} \nonumber \\
    &= \textstyle \sum_{S \in \overline{\setS}(t^n)} \Pr \{ \schedule_t =S ~\vert~ \mathbb{A} \} \cdot \Pr \{ \mathbb{A} \} \label{eq:first}\\
    %&&~~= \sum_{j=1}^n \Pr \{ \schedule_t = S^j \vert \schedule_{t-1} \in \nopts_{\vmweight} \} \cdot \Pr (\schedule_{t-1} \in \nopts_{\vmweight}) \label{eq:first}\\
    &~~\textstyle + \sum_{S \in \overline{\setS} (t^n)} \Pr \{ \schedule_t =S ~\vert~ \mathbb{B}\} \cdot \Pr \{\mathbb{B}\} \label{eq:second} \\
    %&&~~~~+ \sum_{j=1}^n \Pr \{ \schedule_t = S^j \vert \schedule_{t-1} \in \underline{\setS}(t^n)\} \cdot \Pr \{\schedule_{t-1} \in \underline{\setS}(t^n)\} \label{eq:second} \\
    &~~\textstyle + \sum_{S \in \overline{\setS} (t^n)} \Pr \{ \schedule_t =S ~\vert~ \mathbb{C}\} \cdot \Pr \{\mathbb{C}\}. \label{eq:third}
    %&&~~~~+ \sum_{j=1}^n \Pr \{ \schedule_t = S^j \vert \schedule_{t-1} \in \overline{\setS}(t^n)\} \cdot \Pr \{\schedule_{t-1} \in \overline{\setS}(t^n)\}. \label{eq:third}
\end{align}
Let $\mathcal{A}_t$ denote the set of augmentations chosen under our algorithm at time $t$. We can obtain a bound on (\ref{eq:first}) as
\begin{align}
    &\textstyle\sum_{S \in \overline{\setS}(t^n)} \Pr \{ \schedule_t =S ~\vert~ \mathbb{A} \} \cdot \Pr \{\mathbb{A} \} \nonumber\\
    %&~~\textstyle\leq \sum_{S \in \overline{\setS}(t^n)} \Pr \{\schedule_{t-1} \in \nopts_{\vmweight}\} \cdot \Pr \{ S = \schedule_{t-1} \oplus \mathcal{A}_t \} \nonumber \\
    %&~~\textstyle\leq \sum_{S \in \overline{\setS}(t^n)} \Pr \{ \mathbb{A} \} \cdot \Pr \{ S = (\schedule_{t-1} \oplus \mathcal{A}_t) \} \nonumber \\
        %&~~~~~~~~~~~~ \textstyle \cdot \Pr \{ \totalmw_{\viweight_t}(S)  \geq \totalmw_{\viweight_t}(\schedule_{t-1}) ~|~ \schedule_{t-1} \in \nopts_{\vmweight} \} \nonumber \\
        %&~~~~~~~~~~~~~~~~ \textstyle \cdot \Pr \{ \totalmw_{\viweight_t}(S)  \geq \totalmw_{\viweight_t}(\schedule_{t-1}) ~|~ \mathbb{A} \} \nonumber \\
    %&~~\textstyle \leq \sum_{S \in \overline{\setS}(t^n)} \Pr \{\totalmw_{\viweight_t}(S)  \geq \totalmw_{\viweight_t}(\schedule_{t-1})~|~\schedule_{t-1} \in \nopts_{\vmweight} \} \nonumber\\
    &~~\textstyle \leq \sum_{S \in \overline{\setS}(t^n)} \Pr \{\totalmw_{\viweight_t}(S)  \geq \totalmw_{\viweight_t}(\schedule_{t-1})~|~\mathbb{A}\} \cdot \Pr \{ \mathbb{A} \} \nonumber\\
    &~~\leq \vert \overline{\setS}(t^n) \vert \cdot 2 \noLink t^{-2}, \label{eq:4}
\end{align}
where the last inequality comes from Lemma \ref{lem:nearoptimal}. The result holds for all $t \in (t^n,t^{n+1}]$. 
For the second term (\ref{eq:second}), we have
\begin{align}
\textstyle \sum_{S \in \overline{\setS} (t^n)} 
    %& \Pr \{ \schedule_t =S \vert \schedule_{t-1} \in \underline{\setS}(t^n)\} \cdot \Pr \{\schedule_{t-1} \in \underline{\setS}(t^n)\}\nonumber \\ 
    \Pr \{ \schedule_t =S ~\vert~ \mathbb{B} \} \cdot \Pr \{\mathbb{B} \}
    %&\leq \Pr \{\schedule_{t-1} \in \underline{\setS}(t^n) \}. \label{eq:5}
    \leq \Pr \{\mathbb{B}  \}. \label{eq:5}
\end{align}
Finally, the third term (\ref{eq:third}) denotes the probability to transit from a sufficiently-played non-near-optimal matching to a sufficiently-played non-near-optimal matching, and thus we have
\begin{align*}
    &\textstyle \sum_{S \in \overline{\setS} (t^n)} \Pr \{ \schedule_t = S ~\vert~ \mathbb{C} \} \cdot 
    \Pr \{ \mathbb{C} \} \\
        &\textstyle = \sum_{S \in \overline{\setS} (t^n)} \Pr \{ \schedule_t \in \overline{\setS}(t^n) \vert \schedule_{t-1}=S \} \cdot \Pr \{\schedule_{t-1}=S\}.
\end{align*}
%Then, for $\alpha = \alpk$ and $S \in \overline{\setS}(t^n)$, 
Letting $S' = S \oplus \mathcal{A}_t$ and using Lemma~\ref{lem:probability}, the conditional probability can be derived as
\[
\begin{split}
&\Pr \{ \schedule_t \in \overline{\setS}(t^n) \vert \schedule_{t-1}=S \}\\
    %&\le \Pr \{ \schedule_t \in \overline{\setS}(t^n) \vert \schedule_{t-1}=S, S' \in \nopts_{\vmweight} \} \cdot \Pr \{ S' \in \nopts_{\vmweight}\} \\
    %\\&~~~~~~+ 
    %\Pr \{ \schedule_t \in \overline{\setS}(t^n) \vert \schedule_{t-1}=S, S \oplus \mathcal{A}_t \in \cnopts_{\vmweight} \} \cdot 
    %&\hspace{6cm} + \Pr \{S' \in \cnopts_{\vmweight}\}\\
    &\leq \Pr \{ \schedule_t \in \overline{\setS}(t^n) \vert \schedule_{t-1}=S, S' \in \nopts_{\vmweight} \} \cdot \delta + (1-\delta)\\
    &= \Pr \{\totalmw_{\viweight_t}(S) \geq \totalmw_{\viweight_t}(S')\}\cdot \delta + (1-\delta)\\
    %&~= \Pr \left\{\sum_{e \in S} \ucbI_{e,t} \geq \sum_{e \in C_t}\ucbI_{e,t}\middle|~S \in \overline{\setS}(t^n),~ C_t\in \nopts_{\vmweight}\right\}\cdot \delta +1-\delta\\
    &\leq 2\noLink t^{-2}\cdot \delta + (1-\delta). 
\end{split}
\]
where %the second inequality comes from Lemma~\ref{lem:probability}, 
the equality holds since $\schedule_t$ should be $S$ (otherwise, $\schedule_t = (S \oplus \mathcal{A}_t) \notin \overline{\setS}(t^n)$) and thus $S$ should have the larger weight sum to be chosen by the augmentation algorithm, and the last inequality comes from Lemma~\ref{lem:nearoptimal}.
%
%\iftoggle{tech_report}
%{
    Hence, the third term (\ref{eq:third}) can be upper bounded by
    \begin{equation}  \label{eq:6} 
    \begin{split}
        &\textstyle \sum_{S \in \overline{\setS} (t^n)} \Pr \{\schedule_{t-1}=S\} 
            \cdot \left( 2 \delta \noLink t^{-2} + 1 - \delta \right) \\
        &\textstyle \le 2 \delta \noLink t^{-2} 
            + \left( 1 - \delta \right) \sum_{S \in \overline{\setS} (t^n)} \Pr \{\schedule_{t-1}=S\},
    \end{split}
    \end{equation}
    for all $t \in (t^n,t^{n+1}]$. 

    The result can be obtained by combining (\ref{eq:4}), (\ref{eq:5}), and (\ref{eq:6}).
%}
%{
%    Combining these results, we can obtain (\ref{eq:upper-bound-for-sufficiently-played-matching}).
%}
\end{proof}

In order to apply Lemma~\ref{lem:prob-for-sufficiently-played-matching} to (\ref{eq:tau_samplepath}), we rewrite it in a recursive form. Let $\eta = 1 - \delta$, $G_n = \left( \vert \overline{\setS}(t^n) \vert +\delta \right) \cdot 2\noLink = (n + \delta) \cdot 2\noLink$, and $\Theta_n(t) = \sum_{S \in \overline{\setS}(t^n)} \Pr \{ \schedule_t =S \}$. We have a recursive form of  (\ref{eq:upper-bound-for-sufficiently-played-matching}) as
\[
\Theta_n(t) 
    \leq  \Pr \{\schedule_{t-1} \in \underline{\setS}(t^n) \} + G_n \cdot t^{-2} + \eta \Theta_n(t-1),
\]
for $t \in (t^n, t^{n+1}]$. 
%\iftoggle{tech_report}
%{
    Extending the right side further down to $t^n$, we can obtain that
    \begin{align} 
    \Theta_n(t) 
        &\le \eta^{t-t^n} \Theta_n(t^n) \label{eq:7} \\
        &\textstyle~~+ G_n \sum_{i=t^n+1}^{t} \eta^{t-i} \cdot i^{-2} \label{eq:8} \\
        %&~~+ G_n (t^{-2} +\eta (t-1)^{-2} \cdots +\eta^{t-t^n-1} (t^n+1)^{-2} ) \\
        &\textstyle~~+ \sum_{i=t^n+1}^{t} \eta^{t-i} \cdot \Pr \{\schedule_{i-1} \in \underline{\setS}(t^n) \}. \label{eq:9}
    \end{align}
    By summing it over $t \in (t^n, t^{n+1}]$ on the both sides, we obtain the following lemma.
%}
%{
%    Extending the right side further down to $t^n$ and summing over $t \in (t^n, t^{n+1}]$, we obtain the following lemma.
%}
%
\begin{lemma} \label{lem:theta}
The total number of times that sufficiently played non-near-optimal matchings are selected during $(t^n,t^{n+1}]$ is bounded by 
\begin{equation}\label{eq:reuse}
\textstyle \sum_{t=t^n+1}^{t^{n+1}} \Theta_n(t)
    \leq \frac{1}{\delta}
    ( 
        1+ \frac{\pi^2}{6} G_n  + \mathbb{E} [ \sum_{S \in \underline{\setS}(t^n)} \tau_{S,n+1} ]
    ),
\end{equation}
where $\tau_{S,n+1}$ denote the number of time slots that $S$ is scheduled in $(t^{n},t^{n+1}]$.
\end{lemma}
\iftoggle{tech_report}
{
    \noindent The proof of Lemma~\ref{lem:theta} follows the same line of analysis as~\cite{kang_tmc20} and included for the completion.
    
    \begin{proof}
    By summing up (\ref{eq:9}) over $t \in(t^n,t^{n+1}],$ we have
    \begin{align}
        &\hspace{-0.2cm}\textstyle \sum_{t=t^n +1}^{t^{n+1}} \sum_{i=t^n+1}^{t} \eta^{t-i} \cdot \Pr \{ \schedule_{i-1} \in \underline{\setS}(t^n) \} \nonumber\\
        &\textstyle =\sum_{t=t^n+1}^{t^{n+1}} \Pr \{ \schedule_{t-1} \in \underline{\setS}(t^n) \}  \cdot (\sum^{t^{n+1} -t}_{i=0} \eta^i)  \nonumber\\
        %&\textstyle =\sum_{t=t^n }^{t^{n+1}-1}\sum^{t^{n+1} -t-1}_{s=0} \beta^s \cdot \mathbb{E}\left[ \mathbb{I}\{ \schedule_{t} \in \underline{\setS}(t^n) \} \right] \nonumber\\
        %&\textstyle \leq\frac{1}{1-\beta}\mathbb{E}\left[ \sum^{t^{n+1}-1}_{t=t^n} \mathbb{I}\{  \schedule_{t} \in \underline{\setS}(t^n) \}\right]
        %\stackrel{(a)}{=}\frac{1}{1-\beta}\mathbb{E}\left[ \sum^{t^{n+1}-1}_{t=t^n+1} \mathbb{I}\{  \schedule_{t} \in \underline{\setS}(t^n) \}\right] \nonumber\\
        &\textstyle \stackrel{(a)}{\leq} \frac{1}{1-\eta} \mathbb{E}\left[ \sum^{t^{n+1}}_{t=t^n+1} \mathbb{I}\{ \schedule_{t} \in \underline{\setS}(t^n) \}\right]\nonumber\\
        &\textstyle =\frac{1}{\delta} \mathbb{E}\left[ \sum_{S \in \underline{\setS}(t^n)} \tau_{S,n+1}\right], \label{eq:lem-theta-1} 
    \end{align}
    where $(a)$ holds because $\sum^{t^{n+1} -t}_{i=0} \eta^i \le \frac{1}{1-\eta}$, $\Pr \{ \schedule_{t-1} \in \underline{\setS}(t^n) = \mathbb{E}\left[ \mathbb{I}\{ \schedule_{t-1} \in \underline{\setS}(t^n) \} \right]$, and $\mathbb{I}\{ \schedule_{t^n}\in \underline{\setS}(t^n)\}=0$ since the matching scheduled at $t^n$ does not belong to $\underline{\setS}(t^n)$ by definition.
    
    Also, by summing up (\ref{eq:8}) over $(t^n,t^{n+1}]$, we have
    \begin{equation} \label{eq:lem-theta-2} 
    \textstyle 
    \sum_{t=t^n+1}^{t^{n+1}} G_n \sum_{i=t^n+1}^{t} \eta^{t-i} \cdot i^{-2} 
        \le G_n \cdot \frac{1}{\delta}\cdot \frac{\pi^2}{6}.
    \end{equation}
    Similarly, we take the sum of (\ref{eq:7}) as
    \begin{equation} \label{eq:lem-theta-3}
    \textstyle
    \sum_{t=t^n+1}^{t^{n+1}}\eta^{t-t^n} \Theta_n(t^n)
        \le \sum_{t=t^n+1}^{t^{n+1}} \eta^{t-t^n}
        \le \frac{1}{\delta}.
    \end{equation}
    Combining (\ref{eq:lem-theta-1}), (\ref{eq:lem-theta-2}), and (\ref{eq:lem-theta-3}), we obtain the result.
    \end{proof}
}
{
    \noindent The proof of Lemma~\ref{lem:theta} is omitted and can be found in~\cite{tr}.
}

Now, by taking expectation on (\ref{eq:tau_samplepath}), we can obtain the expected total number of times that non-near optimal matchings are selected up to time $\hat{t}_h (\leq h)$  as 
\begin{align*}
    &\textstyle \sum_{S\in \cnopts_{\vmweight}} \mathbb{E}\left[ \nochosen_S(\hat{t}_h) \right] 
        %=\sum_{S\in \cnopts_{\vmweight}}\sum_{t=1}^{t'} \Pr \{ \schedule_t =S \}\\
        \textstyle = l_h M + \sum_{n=1}^{M-1} \sum_{t=t^n + 1}^{t^{n+1}} \Theta_n(t)\\
        &\textstyle \leq l_hM + \frac{1}{\delta} \sum_{n=1}^{M-1} \left( 1+ G_n \frac{\pi^2}{6} + \mathbb{E}\Big[ \sum_{S \in \underline{\setS}(t^n)} \tau_{S,n+1} \Big] \right),\\
        %&\textstyle \leq l_hM + \left(\frac{1}{\delta} + \frac{\noLink\pi^2}{3}\right) (M-1)
        %+\frac{\noLink\pi^2}{3\delta}\cdot\sum_{n=1}^{M-1} \vert \overline{\setS}(t^n)\vert
        %+\frac{1}{\delta}\cdot \sum_{n=1}^{M-1} \mathbb{E}\left[ \sum_{s=n+1}^{M} \tau_{s,n+1}\right]\\
        %&\textstyle \leq l_hM + (M-1)\left(\frac{1}{\delta} + \frac{\noLink\pi^2}{3}\right) +\frac{\noLink\pi^2}{3\delta}\cdot \frac{M(M-1)}{2} +\frac{1}{\delta}\cdot l_hM\\
        &\textstyle \le l_hM + \frac{M}{\delta} \left( 1 + \frac{\noLink \delta \pi^2}{3} +\frac{\noLink(M-1)\pi^2}{6} +l_h \right).
\end{align*}
%
%\iftoggle{tech_report}
%{
    The last inequality holds since  
    (i) $\sum_{n=1}^{M-1} G_n = \sum_{n=1}^{M-1} (n + \delta) \cdot 2\noLink 
    \le M \cdot \noLink \cdot (2\delta + (M-1))$, and 
    (ii) $\sum_{S \in \underline{\setS}(t^n)} \tau_{S,n+1}$ is the total number that the matchings that have been chosen less than $l_h$ up to $t^n$ are chosen during $(t^n, t^{n+1}]$ and thus results in $\sum_{n=1}^{M-1} \sum_{S \in \underline{\setS}(t^n)} \tau_{S,n+1} \leq \sum_{k=2}^{M} l_h \le l_hM$.
%}{}
%
From $M\leq\vert \setS \vert -1,$ we have 
\begin{align} 
\textstyle \sum_{S\in \cnopts_{\vmweight}} \mathbb{E}\left[ \nochosen_S(\hat{t}_h) \right]
    %\textstyle &\leq \left( 1+\frac{1}{\delta} \right)l_hM +D_1 \\
    %&\leq \left( 1+\frac{1}{\delta} \right) \cdot  \frac{4\noLink^2(\noLink+1)\ln h}{(\alpgapmin)^2} ( \vert \setS \vert -1 ) +D_1,
    &\textstyle \leq D_1 \cdot \frac{\ln h}{(\alpgapmin)^2} \nonumber\\
        &\textstyle \hspace{-1cm} + \frac{\vert\setS\vert-1}{\delta} (1+ \frac{\noLink\delta\pi^2}{3} + \frac{\noLink(\vert\setS\vert-2)\pi^2}{6} ). \label{eq:16}
\end{align}

This provides a bound on the number of times that non-near-optimal matchings are selected up to $\hat{t}_h$. For the rest time $t \in (\hat{t}_h,h]$, we need to compute $\sum_{t=\hat{t}_h+1}^h \Pr \{\schedule_t \in \cnopts_{\vmweight}\}$. Let $S' = \schedule_{t-1} \oplus \mathcal{A}_t$. Since next schedule $\schedule_t$ is either $\schedule_{t-1}$ and $S'$ under the algorithm, we divide the event $\{\schedule_t \in \cnopts_{\vmweight}\}$ into three sub-cases based on $\schedule_{t-1}$ and $S'$, and compute the probability as 
\begin{align}
%&\textstyle \sum_{t=\hat{t}_h+1}^h \sum_{S \in \cnopts_{\vmweight}}  \Pr \{ \schedule_t = S \} \\
\textstyle \Pr \{ \schedule_t \in \cnopts_{\vmweight}\} 
    &\textstyle =  \Pr\{S' \in\cnopts_{\vmweight},\schedule_{t-1}\in\cnopts_{\vmweight}\}\nonumber \\
    &\textstyle \hspace{-0.5cm}+ \Pr\{S' \in \cnopts_{\vmweight},~\schedule_{t-1}\in\nopts_{\vmweight}, \totalmw_{\vmweight}(S')\geq\totalmw_{\vmweight}(\schedule_{t-1})\}\nonumber\\
    &\textstyle \hspace{-0.5cm}+ \Pr\{S' \in \nopts_{\vmweight},~\schedule_{t-1}\in\cnopts_{\vmweight}, \totalmw_{\vmweight}(S')\leq\totalmw_{\vmweight}(\schedule_{t-1})\}.\nonumber
\end{align}
This leads to
\begin{align}
\textstyle \Pr \{ \schedule_t \in \cnopts_{\vmweight}\} 
    &\textstyle \leq  \Pr\{S' \in \cnopts_{\vmweight}~|~\schedule_{t-1} \in\cnopts_{\vmweight}\} \cdot \Pr\{\schedule_{t-1}\in\cnopts_{\vmweight} \} \nonumber \\
    &\textstyle \hspace{-0.5cm}+ \Pr\{\totalmw_{\vmweight}(S')\geq\totalmw_{\vmweight}(\schedule_{t-1}) ~|~S' \in \cnopts_{\vmweight},~\schedule_{t-1}\in\nopts_{\vmweight}\} \nonumber\\
    &\textstyle \hspace{-0.5cm}+  \Pr\{\totalmw_{\vmweight}(S')\leq\totalmw_{\vmweight}(\schedule_{t-1})~|~S' \in \nopts_{\vmweight},~\schedule_{t-1}\in\cnopts_{\vmweight}\}.\nonumber
\end{align}
From Lemma~\ref{lem:probability}, we have $\Pr\{S' \in \cnopts_{\vmweight}~|~\schedule_{t-1} \in\cnopts_{\vmweight}\} = 1 - \Pr\{S' \in\nopts_{\vmweight}~|~\schedule_{t-1} \in \cnopts_{\vmweight}\} \le 1-\delta = \eta$. Since $\hat{\tau}_S(t) \geq l_h$ for all $S$ and $t \in (\hat{t}_h,h]$, Lemma~\ref{lem:nearoptimal} provides an upper bound $2\noLink t^{-2}$ on each conditional probability in the second and the third terms. As a result, we can obtain
\[ \textstyle 
    \Pr \{\schedule_t\in \cnopts_{\vmweight}\} 
        \textstyle \leq  \eta \cdot \Pr\{\schedule_{t-1} \in \cnopts_{\vmweight}\} + 4\noLink t^{-2}.
\]
By extending the inequality in a recursive manner down to $\hat{t}_h$, we obtain that 
\[
\begin{split}
    \Pr \{\schedule_t \in \cnopts_{\vmweight}\} 
        &\textstyle \leq \eta^{t-\hat{t}_h} \cdot \Pr \{ \schedule_{\hat{t}_h}\in \cnopts_{\vmweight} \} 
        + 4\noLink \sum_{i=\hat{t}_h+1}^t \eta^{t-i}i^{-2} \\
        &\textstyle = \eta^{t-\hat{t}_h}  
        + 4\noLink \sum_{i=\hat{t}_h+1}^t \eta^{t-i}i^{-2},
\end{split}
\]
where the last equality holds since $\Pr \{ \schedule_{\hat{t}_h}\in \cnopts_{\vmweight} \} = 1$ from the definition of $\hat{t}_h$. Summing over $t \in (\hat{t}_h, h]$ on the both sides, and from $\eta = 1-\delta$, we have
\begin{equation} \label{eq:19} 
    \textstyle
    \sum_{t=\hat{t}_h+1}^h %\sum\limits_{S \in \cnopts_{\vmweight}} \Pr \{\schedule_t=S\}
    \Pr \{\schedule_t\in \cnopts_{\vmweight}\}
        \le \frac{1-\delta}{\delta}+ \frac{2\noLink\pi^2}{3\delta}.
\end{equation}

Combining (\ref{eq:16}) and (\ref{eq:19}), we obtain
\begin{equation} \label{eq:prop2_key1}
\begin{split}
    &\textstyle \sum_{S\in \cnopts_{\vmweight}} \mathbb{E} [ \nochosen_S(h) ]\\
        &=\textstyle \sum_{S\in \cnopts_{\vmweight}} \mathbb{E}\left[ \nochosen_S(\hat{t}_h) \right]
            + \sum_{t=\hat{t}_h+1}^h \Pr \{\schedule_t\in \cnopts_{\vmweight}\} \\
        &\leq \textstyle  D_1 \cdot \frac{\ln h}{(\alpgapmin)^2} + D_2, 
\end{split}
\end{equation}
where 
%new D_1 and D_2
$D_1 = \left( 1 + \frac{1}{\delta} \right) \cdot 4\noLink^2(\noLink+1) \cdot \left( \vert \setS \vert-1 \right)$, and
$D_2 = \frac{\vert\setS\vert-1}{\delta} (1+ \frac{\noLink\delta\pi^2}{3} + \frac{\noLink(\vert\setS\vert-2)\pi^2}{6} ) + \frac{1-\delta}{\delta} + \frac{2\noLink\pi^2}{3\delta}$.
% old D_1 and D_2
%$D_1 = \frac{\vert\setS\vert-1}{\delta}\left(1+ \frac{\noLink\pi^2\delta}{3}+ \frac{\noLink(\vert\setS\vert-2)\pi^2}{6} \right)$, 
%$D_2=\frac{1-\delta}{\delta}+ \frac{2\noLink\pi^2}{3\delta}$, 
%and $\delta = \min \{ 1, ( \frac{p}{1-p} )^V \} ( \frac{1-p}{k\Sigma} )^V$.

\bigskip

%\iftoggle{tech_report}{}
\textbf{(2) When} $\hat{t}_h > h$ (i.e,. $\exists S$ such that $\nochosen_S(h) < l_h$) : With the same definitions of $l_h$, $\overline{\setS}(t)$, and $\underline{\setS}(t)$, let $\vert \overline{\setS} \vert = |\overline{\setS}(h)|$ and $\vert \underline{\setS} \vert = |\overline{\setS}(h)|$. At this time, we define $\snopts(h)=\{ S^1,S^2,...,S^{\vert\overline{\setS}\vert} \}$ and let $t^n$ denote the time at which matching $S^n$ is sufficiently scheduled, i.e., $\nochosen_{S^n}(t^n)=l_h$. Without loss of generality, we assume $t^1 < t^2<...<t^{\vert \overline{\setS}\vert}$. By time slot $h$, $\underline{\setS}(h)$ is non-empty (since $\hat{t}_h >h)$, and it is clear that $\sum_{S \in \underline{\setS}(h)} \nochosen_S(h) \leq l_h \vert \underline{\setS} \vert$. 

Similar to the case when $\hat{t}_h \le h$, we can obtain 
\[
\begin{split}
&\textstyle \sum_{S \in \cnopts_{\vmweight}} \mathbb{E}\left[ \nochosen_S(h) \right] \\
    &\textstyle =\sum_{S \in \underline{\setS}(h)} \mathbb{E} [ \nochosen_S(h) ]
        %+\sum_{S \in \overline{\setS}(h)} \mathbb{E} [ \nochosen_S(h) ] \\
        + \sum_{t=1}^h\sum_{S \in \overline{S}(h)}\Pr \{ \schedule_t =S \}\\
    %&\textstyle \leq l_h\vert\underline{\setS} \vert + \sum_{t=1}^h\sum_{S \in \overline{S}(h)}\Pr \{ S_t =S \} \\
    &\textstyle \leq l_h\vert\underline{\setS} \vert + l_h\vert \overline{\setS}\vert+ \sum_{n=1}^{\vert \overline{\setS} \vert}\sum_{t=t^n+1}^{t^{n+1}} \Theta_n(t) \\
    &\textstyle \leq l_h M + \sum_{n=1}^{\vert \overline{\setS} \vert} \frac{1}{\delta} 
        %\left( 1 + \frac{\pi^2}{6} G_n + \mathbb{E}\left[\sum_{x=n+1}^{\vert\overline{\setS}\vert} \tau_{x,n+1} \right] \right),
        ( 1 + \frac{\pi^2}{6} G_n + \mathbb{E} [ \sum\limits_{S \in \underline{\setS}(t^n)} \tau_{x,n+1} ] ),
\end{split}
\]
where the last inequality comes from Lemma~\ref{lem:theta}. As in (\ref{eq:16}), we can obtain 
\begin{equation}\label{eq:prop2_key2}
    \begin{split}
    &\textstyle \sum_{S \in \cnopts_{\vmweight}} \mathbb{E}\left[ \nochosen_S(h) \right]\\
    &\le \textstyle l_h M + \frac{\vert \overline{\setS} \vert}{\delta} \left(1 + \frac{\noLink \pi^2 \delta}{3} + \frac{ \noLink (\vert \overline{\setS} \vert + 1) \pi^2}{6} + l_h \right)\\
    %&\leq l_h(\vert \setS\vert-1) + \vert\overline{\setS}\vert\left(\frac{1}{\delta}+\frac{\noLink\pi^2}{3}\right) + \frac{\noLink\pi^2}{6\delta}\vert \overline{\setS}\vert(\vert \overline{\setS}\vert+1) +l_h\frac{\vert \overline{\setS}\vert}{\delta}, 
    %&\textstyle \leq \left( 1+\frac{1}{\delta} \right) \cdot  \frac{4\noLink^2(\noLink+1)\ln h}{(\alpgapmin)^2} ( \vert \setS \vert -1 ) +D_1 +D_2,
    &\leq \textstyle D_1 \frac{\ln h}{(\alpgapmin)^2} + D_2, 
    \end{split}
\end{equation} 
where the last inequality holds due to $\vert \overline{S} \vert \le M-1$.

Proposition~\ref{pro:regretbound} can be obtained by applying (\ref{eq:prop2_key1}) and (\ref{eq:prop2_key2}) to the decomposition inequality (\ref{eq:decomposition}).
%{
%    \textbf{(2) When} $\hat{t}_h > h$ (i.e,. $\exists \setS$ such that $\nochosen_S(h) < l_h$) : In this case, we consider the set of non-near-optimal matchings that are sufficiently scheduled by $h$, and use it in the place of the whole set $\cnopts_{\vmweight}$ of non-near-optimal matching. The proof is similar to the case when $\hat{t}+h \le h$, and can be found in~\cite{tr}.
%}

}

\end{document}